\newcommand{\ZZ}{\Z}
\newcommand{\FF}{{\F_2}}
\newcommand{\id}{{\mathrm{I}}}
\newcommand{\cF}{{\cal F}}
\newcommand{\cB}{{\cal B}}
\newcommand{\cE}{{\cal E}}
\newcommand{\cI}{{\cal I}}
\newcommand{\cC}{{\cal C}}
\newcommand{\nB}{{n_{\calB}}}
\newcommand{\mB}{{m_{\calB}}}
\newcommand{\nF}{{n_{\calF}}}
\newcommand{\mF}{{m_{\calF}}}
\newcommand{\cA}{{\cal A}}
\newcommand{\cobd}{{\partial^\transp}}
\newcommand{\sanH}{{\scalebox{.75}[1.0]{\textsc{Horz}}}}
\newcommand{\XConst}{10^{5}}
\begin{document}

\title{Fiber Bundle Codes: \\ Breaking the $N^{1/2} \polylog(N)$ Barrier for Quantum LDPC Codes}

\author{Matthew B. Hastings\thanks{Station Q. and Microsoft Quantum.}
\and
Jeongwan Haah\thanks{Station Q. and Microsoft Quantum.}
\and
Ryan O'Donnell%
\thanks{Microsoft Quantum and Carnegie Mellon University Computer Science Department.}
}

\date{\today}
\maketitle

\begin{abstract}
We present a quantum LDPC code family that 
has distance $\Omega(N^{3/5}/\polylog(N))$ and $\tilde\Theta(N^{3/5})$ logical qubits, 
where $N$ is the code length.
This is the first quantum LDPC code construction which achieves distance greater than $N^{1/2} \polylog(N)$.
The construction is based on generalizing the homological product of codes to a fiber bundle.
\end{abstract}

\maketitle
\section{Introduction}
While there are many constructions of ``good" classical LDPC codes with linear rate and distance~\cite{RU08}, the construction of a quantum LDPC code\footnote{Throughout, by ``quantum code", we mean a CSS stabilizer code.  An LDPC (low-density parity check) should have stabilizer generators of weight $O(1)$, and each qubit should be in the support of at most $O(1)$ stabilizer generators.} on $N$ qubits with distance greater $N^{c}$ for some $c > \frac 1 2 $ has been a longstanding open problem.
Even constructing a code with distance $\Omega(N^{1/2}\polylog(N))$ is nontrivial, and many natural constructions such as the toric code~\cite{Kit03} give only distance $\Theta(N^{1/2})$.
The first code to beat $N^{1/2}$ by a polylogarithm was based on the cellulation of a carefully chosen manifold~\cite{FML02}.
A later construction based on Bruhat--Tits buildings improved the polylogarithm and gave an efficient decoder~\cite{EKZ20}, but still had distance only $O(N^{1/2} \polylog(N))$, with
later generalizations~\cite{KT20} allowing an arbitrarily large power in the polylog.
In this paper, we give a construction attaining distance $\Omega(N^{3/5}/\polylog(N))$, solving this problem.  We present partial results toward efficient decoding, giving a polynomial-time algorithm to decode bit flip errors and a conjectured efficient algorithm to decode phase errors.

One of the difficulties in constructing a good quantum code is that the $X$- and $Z$-stabilizer generators must commute with each other.
If one set of generators --- say the $X$-generators --- is chosen randomly (trying to follow randomized constructions of classical codes), it is unlikely that there will exists a set of low weight $Z$~generators that commute with them.
So, while these randomized constructions are useful if one allows high-weight generators, it is necessary to incorporate some structure into the code if one wishes to have low-weight generators.
The approach in~\cite{EKZ20} uses deep algebraic/number-theoretic structure.
In this paper we follow a different approach, a fiber bundle construction, that combines a simple fiber (the cycle graph) with a random base (a classical LDPC code).

A precursor to this construction is the homological product of quantum codes~\cite{FH14,BH14}.
This product takes two quantum codes and constructs a product code from them.
This product was used~\cite{BH14} to construct quantum codes with linear distance and rate, and with generators of weight $O(N^{1/2})$, by taking the product of two random quantum codes with linear-weight generators.
This product has several other applications.
The hypergraph product~\cite{TZ14,LTZ15} is a particular form of the homological product when both codes are classical, giving quantum LDPC codes with linear rate and distance~$\Theta(N^{1/2})$.
Another application is to distance balancing, taking a quantum code which has different distances $d_X$ and $d_Z$ against $X$- and $Z$-errors, and increasing the number of qubits to increase one of the distances: so long as $\sqrt{d_X d_Z} \gg N^{1/2}$, this gives~\cite{Has17qic,EKZ20} a new quantum code with both $d_X, d_Z \gg N^{1/2}$.

We would like to contrast quantum LDPC codes, which are a subclass of stabilizer codes,
with more general quantum subsystem codes~\cite{Poulin2005}.
Both are defined by a set of Pauli ``check'' operators (gauge operators),
which are individually measured when one implements the codes.
The code distance in either setting is defined 
as the minimum of weights of operators that act nontrivially on encoded qubits.
The difference between stabilizer and subsystem codes is that 
the subsystem code does not impose the condition that these check operators should commute with one another
and hence in general there is no interpretation of a subsystem code as a chain complex.
It is natural to ask our question in a relaxed setting of subsystem codes,
and there is a construction under this relaxation:
there exists a ``sparse'' subsystem code family where all check operators have a constant weight
and where the code distance is $\Omega(N^{1-\epsilon})$ with $\epsilon = O((\log N)^{-1/2})$~\cite{BFHS2014}.
This family violates our requirement that all check operators must commute.
Let alone the mathematical difference,
we believe there is an intrinsic merit to quantum LDPC stabilizer codes over sparse subsystem codes.
Indeed, implementing a quantum error correcting code, be it subsystem or stabilizer, 
we have to infer the syndrome, the eigenvalues of stabilizers.
For quantum LDPC codes, a check operator is a stabilizer 
so the syndrome measurement of a stabilizer involves only a constant number of data qubits.
However, in a subsystem code, even if check operators have constant weight,
a stabilizer may be a product of a large number of check operators 
and it is not always clear if the syndrome can be read off in a fault-tolerant manner.

\subsection{Results, motivation, and outline}
In this paper, we generalize the homological product to a \emph{twisted homological product} based on the idea of fiber bundles from topology, giving what we call \emph{fiber bundle codes}.
The main result is the following theorem for codes with \emph{polylogarithmic weight} stabilizers and with each qubit participating in polylogarithmically many stabilizers\footnote{In this paper, for simplicity of presentation we do not optimize polylogarithms in the distance.  There are some simple ways in which the polylogarithm in our main theorem can be improved by changing some of our parameter choices later, and we comment on them where appropriate.}.
Then, by applying weight reduction and distance balancing techniques we obtain LDPC codes given in \Cref{coro}.

\begin{theorem} \label{mainth}
    There exists a family of quantum codes on $N$ qubits with  $d_X=\Omega(N^{1/2}/\polylog(N))$ and $d_Z=\Omega(N^{3/4}/\polylog(N))$ where all stabilizer generators have weight at most $\polylog(N)$ and all qubits participate in at most $\polylog(N)$ stabilizer generators.
    The code has $\Theta(N^{1/2})$ logical qubits.
\end{theorem}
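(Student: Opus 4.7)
The plan is to build the code as a \emph{twisted} homological product of a classical LDPC base code $\cB$ with a length-$\ell$ cycle fiber $\cF = C_\ell$. Explicitly, represent $\cB$ by its Tanner graph and form a $3$-term chain complex whose $0$-cells (the $Z$-stabilizers) are (check of $\cB$)~$\times$~(fiber vertex), whose $1$-cells (the qubits) are (check of $\cB$)~$\times$~(fiber edge) together with (bit of $\cB$)~$\times$~(fiber vertex), and whose $2$-cells (the $X$-stabilizers) are (bit of $\cB$)~$\times$~(fiber edge). The twisting is encoded by assigning to each bit $e$ of $\cB$ a uniformly random cyclic rotation $\sigma_e\in\mathbb{Z}/\ell\mathbb{Z}$; each horizontal boundary map corresponding to $e$ is precomposed with $\sigma_e$. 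With $\cB$ drawn from a good random $\polylog$-weight LDPC ensemble of length $n$ and with $\ell=\Theta(n)$, one obtains $N=\Theta(n\ell)=\Theta(n^2)$ physical qubits, stabilizer weights inherited from $\cB$ (hence $\polylog N$), and each qubit participating in $\polylog N$ stabilizers. A Leray--Serre computation for the bundle then yields $\Theta(n)=\Theta(N^{1/2})$ logical qubits.

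The $X$-distance is the easier of the two. An $X$-logical is represented by a $2$-cocycle modulo coboundaries; because the fiber is a circle and the twist acts by orientation-preserving rotations, the spectral sequence identifies non-trivial $X$-logicals with dual codewords of $\cB$ spread along a full horizontal sheet of fiber vertices. The minimum weight of such a sheet is proportional to $\min(d_\cB,d_\cB^\perp)$, which for a random LDPC $\cB$ is $\Omega(n/\polylog n)$, giving $d_X=\Omega(N^{1/2}/\polylog N)$.

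The hard part is the $d_Z$ lower bound, which is where the twisting must do work. A $Z$-logical is a $1$-cycle of the twisted complex that is not a boundary; projecting onto $\cB$ produces a $1$-chain $c$ that must itself be a cycle in the Tanner graph of $\cB$. On every closed loop of $c$, the sum of the random shifts $\sigma_e$ along the loop (the holonomy) must equal the net winding of the vertical component of the $Z$-chain around that loop. For i.i.d.\ uniform $\sigma_e$, a fixed base loop has zero holonomy with probability only $O(1/\ell)$, and any nonzero holonomy forces the vertical component to traverse a number of fiber edges comparable to $\ell/\gcd(\text{holonomy},\ell)$, multiplying the total weight by $\Omega(\ell)$ in the typical case. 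Combining a union bound over all short base $1$-cycles --- whose number is controlled by the expansion of the random LDPC base --- with the quantitative holonomy estimate shows that with high probability every non-trivial $Z$-logical has weight $\Omega(\ell\cdot n^{1/2}/\polylog n)=\Omega(N^{3/4}/\polylog N)$.

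The main obstacle is making this last step quantitative. One must enumerate short $1$-cycles in the Tanner graph of a random LDPC code, handle the atypical rotations whose order happens to divide $\ell$ (so that the holonomy is small but non-trivial), and combine these estimates with the purely horizontal component of a $Z$-chain that does not project to a base cycle. Once that union bound is pushed through, LDPC-ness, the logical-qubit count, and the $d_X$ bound all follow from standard chain-complex bookkeeping for the twisted product.
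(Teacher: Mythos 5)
Your high-level picture (a circle fiber over a random classical LDPC base, twisted so that vertical ``slack'' must travel far in the fiber) is the right one, and the final arithmetic $N=\Theta(n^2)$, $k=\Theta(N^{1/2})$, $d_X=\Omega(N^{1/2}/\polylog)$, $d_Z=\Omega(N^{3/4}/\polylog)$ agrees with the paper. But the heart of the theorem is the $d_Z$ lower bound, and the argument you sketch for it does not work.

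Your plan is: project a $Z$-cycle to a $1$-cycle $c$ of $\cB$, note that the holonomy of the random shifts around each ``loop'' of $c$ is nonzero with probability $1-O(1/\ell)$, and union-bound over ``short base $1$-cycles.'' There are two fatal problems. First, the base is a \emph{Tanner graph}, not a cycle graph: a base $1$-cell has $\Theta(\Delta)=\Theta(\log^2 n)$ $0$-cells in its boundary, so a nontrivial $1$-cycle of $\cB$ is a codeword of the base code, not a topological loop, and ``holonomy around the loop'' is not a well-defined scalar. Second, and more importantly, there are \emph{no} short base $1$-cycles to union-bound over: the base code has minimum distance $\Theta(n)$, so every nonzero $1$-cycle has Hamming weight $\Theta(n)$, and there are exponentially many of them ($2^{\Theta(n)}$ codewords). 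A per-cycle failure probability of $O(1/\ell)$ cannot beat this union bound, and in fact the event ``zero holonomy'' is the wrong event anyway: you also have to rule out partial cancellations where only some positions of $c$ incur small vertical cost. Because of these issues the paper does \emph{not} use i.i.d.\ uniform twists at all. Instead the fiber has length $\ell^2$, the twists are multiples of $\ell$, and they are drawn from only $\kreg=\Theta(\log n)$ distinct values chosen so that an auxiliary $2\kreg$-regular ``twist graph'' on $\ell$ vertices is a spectral expander. The $d_Z$ argument then proceeds by a ``sliding'' lemma (reducing to horizontal chains whose fiber positions are $\equiv 0\bmod\ell$), reformulating the remaining problem as a classical ``twist graph code'' $\pB(\vec{\sch})$ of length $\ell n$, and proving (Theorem on low-weight words violating many checks) that every low-weight word violates $\Omega(|w|)$ checks, using the Expander Mixing Lemma on the twist graph together with two carefully tuned tail bounds on the random base code (one for ``light'' blocks of the word, one for ``heavy'' blocks, exploiting $m=\tfrac34 n > \tfrac12 n$). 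None of this machinery appears in your sketch, and it is precisely what is needed to avoid the exponential union bound.

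A smaller issue: your $d_X$ argument attributes the bound to ``dual codewords of $\cB$,'' but in the paper the $\Omega(\mF/\Delta)=\Omega(N^{1/2}/\polylog)$ bound on cohomology representatives comes from the \emph{fiber} length (a full-fiber horizontal sheet has weight $\mF$), together with expansion of the base Tanner graph to show that the vertical part cannot be cheap; it is not a statement about $d_{\cB}^\perp$. The spectral-sequence framing and the ``untwisted product with a random twist and holonomy union bound'' story are a reasonable first intuition, but they do not survive contact with the actual parameter regime, and the concrete combinatorics the paper develops to replace them are where the theorem actually lives.
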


This code is not LDPC, but in  \Cref{wrc}, we show how to weight reduce it to an LDPC code at only a polylogarithmic cost in distance and number of physical qubits (the construction in the Appendix may be of more general interest also, as we use a notion of homotopy equivalence between chain complexes to relate different quantum codes).  The presentation in the Appendix is self-contained, and is based on weight reducing a certain classical code used in the bundle construction\footnote{A previous version of the present paper appealed to the general weight reduction result of~\cite{Has17qic} but G. Z\'{e}mor has pointed out an error in that paper.  A corrected version of weight reduction for arbitrary quantum codes will appear separately.}.

The distances of this LDPC code are not balanced since $d_Z \gg d_X$, but as mentioned we can apply the distance balancing technique~\cite{Has17qic,EKZ20}.
The distance balancing procedure of~\cite{EKZ20} generalizes that of~\cite{Has17qic} and improves the rate of the resulting code.  
This gives us the following:
\begin{corollary} \label{coro}
    There exists a family of quantum LDPC codes on $N$ qubits having distance \linebreak \mbox{$d=\Omega(N^{3/5}/\polylog(N))$} and with $\Omega(N^{3/5}/\polylog(N))$ logical qubits.
\end{corollary}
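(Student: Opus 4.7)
The plan is to apply two standard transformations to the code of Theorem~\ref{mainth} in sequence: the weight reduction of Appendix~\ref{wrc}, to convert it into an LDPC code, and then the distance balancing of~\cite{EKZ20} to equalize $d_X$ and $d_Z$. I would track the parameter tuple $(N, K, d_X, d_Z)$ through each step and verify that it reaches $\Omega(N^{3/5}/\polylog(N))$ in both distance and logical count.

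The first step is immediate from Appendix~\ref{wrc}: it produces an LDPC code with only polylogarithmic overhead in $N$ and only polylogarithmic loss in both distances. So after step one we have an LDPC code with $N_1 = N\, \polylog(N)$ qubits, $K_1 = \tilde\Theta(N^{1/2})$ logicals, $d_{X,1} = \tilde\Omega(N^{1/2})$, and $d_{Z,1} = \tilde\Omega(N^{3/4})$, so the distances are unbalanced by a factor of $N^{1/4}$. For the second step I would tensor against a good classical LDPC code of length $\ell$, constant rate, and linear distance $D = \Theta(\ell)$ (from~\cite{RU08}), applying the~\cite{EKZ20} distance balancing in the direction that boosts $d_X$. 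The product multiplies the number of qubits by $\Theta(\ell)$, boosts $d_X$ by a factor of $\Theta(\ell)$, leaves $d_Z$ essentially unchanged, and---by the~\cite{EKZ20} improvement over~\cite{Has17qic}---multiplies the logical count by $\Theta(\ell)$ as well. Choosing $\ell = \tilde\Theta(N^{1/4})$ to equalize the distances at $\tilde\Omega(N^{3/4})$ gives an LDPC code with $N_2 = \tilde\Theta(N^{5/4})$ qubits and $K_2 = \tilde\Theta(N^{3/4})$ logicals; since $N^{3/4} = N_2^{3/5}$, this is exactly the claimed asymptotic.

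The step I expect to require the most care is the bookkeeping of polylog factors through both transformations, and in particular the verification that the logical count scales linearly with $\ell$ through the~\cite{EKZ20} product---otherwise the rate would drop and we would fall short of $\Omega(N^{3/5}/\polylog(N))$ logicals, even though the distance scaling would still be $\Omega(N^{3/5}/\polylog(N))$. A secondary check is that the classical code used in the tensor product is sparse enough to preserve the LDPC property of the quantum code, which is automatic for the~\cite{RU08} codes since they have constant-weight parity checks. The weight-reduction step is the cleaner of the two: by design it only inflates $N$ and degrades the distances by polylog factors, and it does not interfere with the subsequent balancing because the balancing is a purely chain-complex-level operation that treats the already-LDPC code as a black box.
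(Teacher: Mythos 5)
Your proposal matches the paper's proof exactly: weight-reduce via \Cref{wrc} to obtain an LDPC code with the same asymptotics up to polylogs, then apply the distance balancing of~\cite{EKZ20} with a constant-rate, linear-distance classical LDPC code of length $\ell = \tilde\Theta(N^{1/4})$, and track that this equalizes the two distances at $\tilde\Omega(N^{3/4}) = \tilde\Omega(N_2^{3/5})$ while multiplying both qubit and logical counts by $\Theta(\ell)$. The one substantive point you isolate --- that the~\cite{EKZ20} balancing (unlike~\cite{Has17qic}) multiplies the logical count by $\Theta(\ell)$, which is what keeps the number of logicals at $\tilde\Theta(N_2^{3/5})$ rather than $\tilde\Theta(N_2^{2/5})$ --- is precisely the point the paper flags when it notes that~\cite{EKZ20} ``improves the rate'' over~\cite{Has17qic}.
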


There is a long history~\cite{Kit03,FM01} of applying ideas from topology to quantum codes, since a quantum code can naturally be interpreted as a chain complex and such a complex can be derived from cellulations of a manifold.
Then, operations which have a natural definition in terms of manifolds can often be translated into useful operations on quantum codes.
For example, the product of two manifolds naturally leads to considering the homological product of quantum codes.

A fiber bundle is a generalization of the idea of taking a product of two manifolds; roughly, it is something that locally looks like a product but has richer global structure.
A simple example of a fiber bundle is a M\"{o}bius strip: locally it ``looks like" the product of a circle with an interval.
However, there is a global twist: going once around the circle reverses the interval.
This makes the M\"{o}bius band not homeomorphic to the product of a circle with an interval.
More generally, we consider a \emph{base} (a circle, in the M\"obius band)
and a \emph{fiber}, where the fiber admits automorphisms (reversing the interval, in the M\"obius band).
Motion along the base can involve acting on the fiber by some automorphism.

Another simple example is the case where both the base and the fiber are a circle,~$S^1$.
Their untwisted, usual product is a torus.
One can impose a twist so that the fiber is reflected when going around the base circle (for example, using angular coordinates $\phi$ for the fiber, one maps $\phi \mapsto -\phi$); this changes the topology to that of a Klein bottle.

Beyond the changes in topology, these twists can also have an interesting effect on the geometry.
Consider again the example where both the base and the fiber are circles.
Rather than imposing reflection, we can impose a rotation by some fixed angle $\phi_0 \in \R$ when going around the base circle.
This does not change the topology, so the result is still a torus;
it is parameterized by angles $\theta \in \R$ for the base and $\phi \in \R$ for the fiber,
and we identify
\[
    (\theta,\phi) \equiv (\theta, \phi + 2\pi) \equiv (\theta + 2\pi, \phi + \phi_0).
\]
For any value of $\phi_0$, this is still a torus, but the geometry is different.
In the context of quantum codes, one needs some cellulation of the manifold, so one may cellulate the fiber and base circles in the obvious way (by cycle graphs $C_{n_B}$, $C_{n_F}$ for some integers $n_B,n_F>0$),
where the twist $\phi_0$ of the fiber is an integer multiple of $2\pi / n_F$.
In this case, the result is still a toric code but with different geometry.

Interestingly, even in this very simple case, the change in geometry resulting from this twist can improve the distance of the code!
Taking no twist ($\phi_0 = 0$), the code has $N=2n_B n_F$ qubits and a distance equal to $\min(n_B, n_F)$,
so that the distance is equal to $\sqrt{N/2}$ when $n_F = n_B$.
We leave it to the reader to work out the details, but by imposing an appropriate twist and changing $n_B$ and $n_F$, one may construct a code whose distance is $\sqrt{cN}$ for some $c>1/2$.
While this does not improve the scaling of the code with~$N$, it is a quantitative improvement in distance.
Generalizing this construction to higher dimensions~\cite{Has17itcs}, and assuming an unproven conjecture in geometry, this could allow for the construction of LDPC codes with distance $N^{1-\epsilon}$ for any $\epsilon>0$.

In this paper, we consider a further such idea, combining these twists with the use of randomness.
We will take a very simple choice of fiber (a circle), but we will choose the base to be a random LDPC code, considered as a chain complex.
While the algebraic ideas will be familiar for those with a topology background, we would like warn these readers that many of our choices of chain complexes do not have a nice interpretation as cellulations of a manifold.
For example, the base of our bundle will be a ``$1$''-complex
\[
    \calB_1 \xrightarrow{\partial} \calB_0
\]
with polylogarithmically many ``$0$''-cells in the boundary of each ``$1$''-cell, while usual cellulations of a manifold (or any topological cell complex) would have by definition only two $0$-cells in the boundary of a $1$-cell.
Even more strangely, the base will have zeroth Betti number equal to zero, $b_0=0$, while of course usually the zeroth Betti number is the number of connected components of the manifold.
As explained in~\cite{BH14}, it is possible to ``reverse engineer'' a manifold of high dimension from the code constructed here, in which our ``$1$''-complex will no longer represent the $1$-dimensional skeleton of a high dimensional manifold.
One can also reverse engineer a $3$-complex from the quantum LDPC code constructed here, and triangulate it with simplices to get a simplicial $3$-complex.

The paper is outlined as follows.  In \Cref{reviewsection}, we review the connection between quantum codes and cohomology.
In \Cref{definesection} we define the fiber bundle code.  In this section, we pick a specific choice of fiber, and pick the base to be a classical code but we leave the construction of the classical code for later.
Much of this section defines bundles in general and computes (co)homology of bundles, and then
in \Cref{sketchsection}, we define the fiber bundle code and sketch the main results needed to prove \Cref{mainth}.
Then in \Cref{randbase} we give the randomized construction of the base code and prove lower bounds on the weight of cohomology and homology representatives.  The homology representative bound depends on some complicated properties of an associated classical code proven in \Cref{codingboundsection}; this section has the most detailed combinatorial calculations.
In \Cref{decodingsection}, we present partial results toward an efficient decoding algorithm.
Finally, \Cref{notationsection} collects some of the notation that we use.

\subsection{Quantum codes, chain complexes, and (co)homology}
\label{reviewsection}

In this work all quantum codes are CSS quantum codes on qubits.
All vector spaces will be over $\F_2$ and all homology and cohomology takes coefficients in~$\F_2$.

Let us briefly review notions of homological algebra.
A \emph{chain complex}
\[
    \cdots \xrightarrow{\partial_{j+1}} \calA_j \xrightarrow{\partial_j} \calA_{j-1} \xrightarrow{\partial_{j-1}} \cdots \xrightarrow{\partial_1} \calA_0 \xrightarrow{ \bdry_0 = 0 } 0
\]
is a sequence of vector spaces $\calA_0, \calA_1,\dots$, each with some preferred basis, together with linear maps $\partial_j : \calA_j \to \calA_{j-1}$ called \emph{boundary operators} between these vector spaces.
The boundary maps obey the condition $\partial_j \partial_{j+1} = 0$ whenever $\partial_j$ and $\partial_{j+1}$ are defined.
A \emph{$k$-complex} is a chain complex with $\calA_j = 0$ for all $j > k$ but $\calA_k \neq 0$.
We refer to basis elements of ${\cal A}_j$ as \emph{$j$-cells}, and to vectors in ${\cal A}_j$ as \emph{$j$-chains}.
So, a cell is a particular chain.
The \emph{Hamming weight} of a chain is the number of cells in the chain with a nonzero coefficient; we write the Hamming weight  using absolute value symbols $|\ldots|$.
We sometimes identify a chain with the set of cells that have nonzero coefficient in the chain;
since the coefficient field is $\F_2$ this identification does not forget any data of a chain.

The \emph{homology} of a chain complex $\calA$ is a sequence of vector spaces
\[
H_j(\calA) = \ker \bdry_j / \im \bdry_{j+1}
\]
where $j = 0,1,\ldots$. The \emph{$j$-th Betti number} is defined%
\footnote{
Usually, Betti numbers are defined as the rank of the free part of the homology with integer coefficients.
Our definition is different from this usual one when the integral homology has $2$-torsions.
}
as
\[
b_j(\calA) = \dim_\FF H_j(\calA).
\]
It is customary to assume $\bdry_{k+1} = 0$ for a $k$-complex
even if $\bdry_{k+1}$ is not explicitly mentioned.
An element of $\ker \bdry_j$ is called a \emph{$j$-cycle}.
The \emph{cohomology} $H^j(\calA)$ is the homology of its dual chain
\begin{align*}
&\left( \cdots \xleftarrow{\bdry^*_{j+1}} \calA_j^* \xleftarrow{\bdry^*_{j}} \calA_{j-1}^* \xleftarrow{\bdry^*_{j-1}} \cdots
\xleftarrow{\bdry^*_1} \calA_0^* \xleftarrow{0} 0 \right)\\
\cong&
\left(
\cdots \xleftarrow{\bdry^\transp_{j+1}} \calA_j \xleftarrow{\bdry^\transp_{j}} \calA_{j-1} \xleftarrow{\bdry^\transp_{j-1}} \cdots
\xleftarrow{\bdry^\transp_1} \calA_0 \xleftarrow{0} 0
\right)
\end{align*}
where $\calA_j^*$ is the vector space of linear functionals on $\calA_j$.
Many authors write the \emph{coboundary maps}~$\bdry_j^*$ as ``$\delta_{j-1}$,'' but we will not.
The indicated isomorphism is nothing but a collection of isomorphisms $\calA_j^* \cong \calA_j$,
which are established thanks to the preferred basis for each $\calA_j$.
More concretely, a linear functional $c^* \in \calA_j^*$
that assigns $1 \in \FF$ for a cell $c \in \calA_j$
but $0 \in \FF$ for all other cells,
is identified  with the cell $c$ itself under the isomorphism.
Almost always in the context of cohomology,
an element of $\calA_j^*$ is called a ``cochain'';
however,
in this paper we just call it a \emph{chain} since we always use the isomorphism $\calA_j^* \cong \calA_j$.
That is, a chain will always be a $\FF$-linear combination of cells
regardless of whether we use the chain for homology or cohomology.
The isomorphism $\calA_j^* \cong \calA_j$ is fundamental to any combinatorics of cohomology:
we will have to count the weight of a cohomology representative, called a \emph{cocycle},
via this identification of linear functionals with their cell-support.

A chain complex defines a quantum code by picking some integer $q > 0$ and associating $q$-cells of the complex with qubits,
and associating $(q-1)$- and $(q+1)$-cells with $X$- and $Z$-stabilizer generators of the code (respectively).
The $Z$ logical operators of the code are associated with $q$th homology classes and the $X$ logical operators are associated with $q$th cohomology classes.
The code has two distances, denoted $d_X$ and $d_Z$, where $d_X$ is the weight of a lowest weight nontrivial $X$ logical operator (i.e., the lowest possible Hamming weight of a vector that represents nontrivial $q$th cohomology) and $d_Z$ is the weight of a lowest weight nontrivial $Z$ logical operator (i.e., the lowest possible Hamming weight of a vector that represents nontrivial $q$th homology).
Note that the $\FF$-dimensions of $H_q(\calA)$ and $H^q(\calA)$ are always the same
as seen by counting vector space dimensions and matrix ranks.

\section{Fiber Bundle Codes}
\label{definesection}

In this section we define the code.
As remarked earlier, the code results from a chain complex,
and thus we focus on constructing chain complexes.
We proceed from a general possible construction to our specific instantiation.
We begin with reviewing the product of two chain complexes in \Cref{sec:chain-product},
and recall the (untwisted) homological product in \Cref{sec:trivialbundle}.
We next explain in \Cref{sec:twistbundle} that general fiber bundles are obtained by twisting boundary maps.
Deferring specific choices of base and fiber complexes and twists,
we study algebraic aspects of fiber bundle complexes
to establish homology and cohomology isomorphisms at dimension~1 in \Cref{sec:H1iso}.
We finally define our code in \Cref{sec:circle-bundle} by specifying the fiber complex;
the base complex will be a random classical code with polylogarithmic weight parity checks,
whose combinatorial properties will be studied in the next sections.
In \Cref{cohomd,homd}
we lower bound the weight of cohomology and homology representatives for this choice, using probabilistic methods.

\subsection{Products of chain complexes}\label{sec:chain-product}

Let us begin by recalling the definition of a homological product.
Given a \emph{base} complex ${\calB}$ and a \emph{fiber} complex ${\calF}$,%
\footnote{
The untwisted, usual homological product does not distinguish between base and fiber.
}
we construct their product $\cE$, a \emph{bundle}, as follows.
We take tensor products of component chain vector spaces,
which inherit the boundary maps from the constituent complexes~$\calB$ and~$\calF$:
\begin{equation}   \label{com-diagram}
\begin{gathered}
\xymatrix{
\cdots \ar[d] &\ar[l] \cdots \ar[d]& \calB_j \otimes \calF_k \ar[d]^{\id \otimes \partial_k}\ar[l]_{\partial_j \otimes \id}\\
\calB_0 \otimes \calF_1 \ar[d]^{\id \otimes \partial_1} &\ar[l]_{\partial_1 \otimes \id} \calB_1 \otimes \calF_1 \ar[d]^{\id \otimes \partial_1}& \vdots \ar[l] \ar[d]\\
\calB_0 \otimes \calF_0 &\ar[l]_{\partial_1 \otimes \id } \calB_1 \otimes \calF_0 & \vdots \ar[l]
}
\end{gathered}
\end{equation}
Then, the chain space $\calE_r$ of the bundle is defined to be the direct sum
\[
\calE_r = \bigoplus_{p+q = r} \calE_{p,q} \quad \text{ where } \quad  \calE_{p,q} = \calB_p \otimes \calF_q
\]
along the diagonal line $p+q = r$ in the diagram for each $r \ge 0$.
A preferred basis of the chain vector space $\calE_r$
is also inherited from constituent cells;
every $r$-cell of $\calE$ is a pair $(b^p,f^q)$ with $p+q = r$
where $b^p \in \calB_p$ is a $p$-cell of the base
and $f^q \in \calF_q$ is a $q$-cell of the fiber.
We will refer to such an $r$-cell of the bundle as a $(p,q)$-cell.
If ${\calB}$ and ${\calF}$ are $\mathsf b$- and $\mathsf f$-complexes, respectively,
then $\cE$ is a $(\mathsf b + \mathsf f)$-complex.

\subsection{Trivial bundles}\label{sec:trivialbundle}

We have only defined the chain spaces $\calE_r$ above, but not yet the boundary maps $\partial^\calE$.
There are in fact many ways to define $\partial^\calE$
given the diagram in \eqref{com-diagram},
and this diversity will be realized in the discussion of twisted bundles below.
Before we show such diversity,
we recall the untwisted boundary map.
It suffices to specify how the boundary map acts on each direct summand $\calE_{p,q}$ of~$\calE_r$:
\begin{equation}
\partial_r^\calE |_{(p,q)} = \id \otimes \partial_{q}^\calF \,\,+\,\, \partial_{p}^\calB \otimes \id,
\label{eq:trivial-bdef}
\end{equation}
where $\partial^\calB$ and $\partial^\calF$ are the boundary maps of $\calB$ and $\calF$, respectively.
One may verify that $\partial^\calE_r \partial^\calE_{r+1}=0$ for all $r$;
here we use that the vector spaces are over $\FF$.%
\footnote{
With a general coefficient group, an extra sign is needed:
$\partial^\calE_{(p,q)} = (-1)^p \id \otimes \partial_{q}^\calF + \partial_{p}^\calB \otimes \id$.
}

This definition of homological product via \cref{eq:trivial-bdef} to build a trivial bundle is standard in topology,
where the chain complexes are obtained from cell decompositions of two manifolds.
The product of the chain complexes corresponds to the cellulation of the product of two manifolds.
However, the algebraic construction of trivial bundles above
does not have to come from topological spaces.
For example, it has been applied~\cite{BH14}
to input chain complexes which represent random quantum codes.

From now on we will usually drop superscripts and subscripts from boundary maps;
the meaning of $\bdry$ will be obvious from the context.

\subsection{Twisted bundles}\label{sec:twistbundle}

\begin{figure}
\includegraphics[width=\textwidth, trim={1ex 62ex 83ex 0ex}, clip]{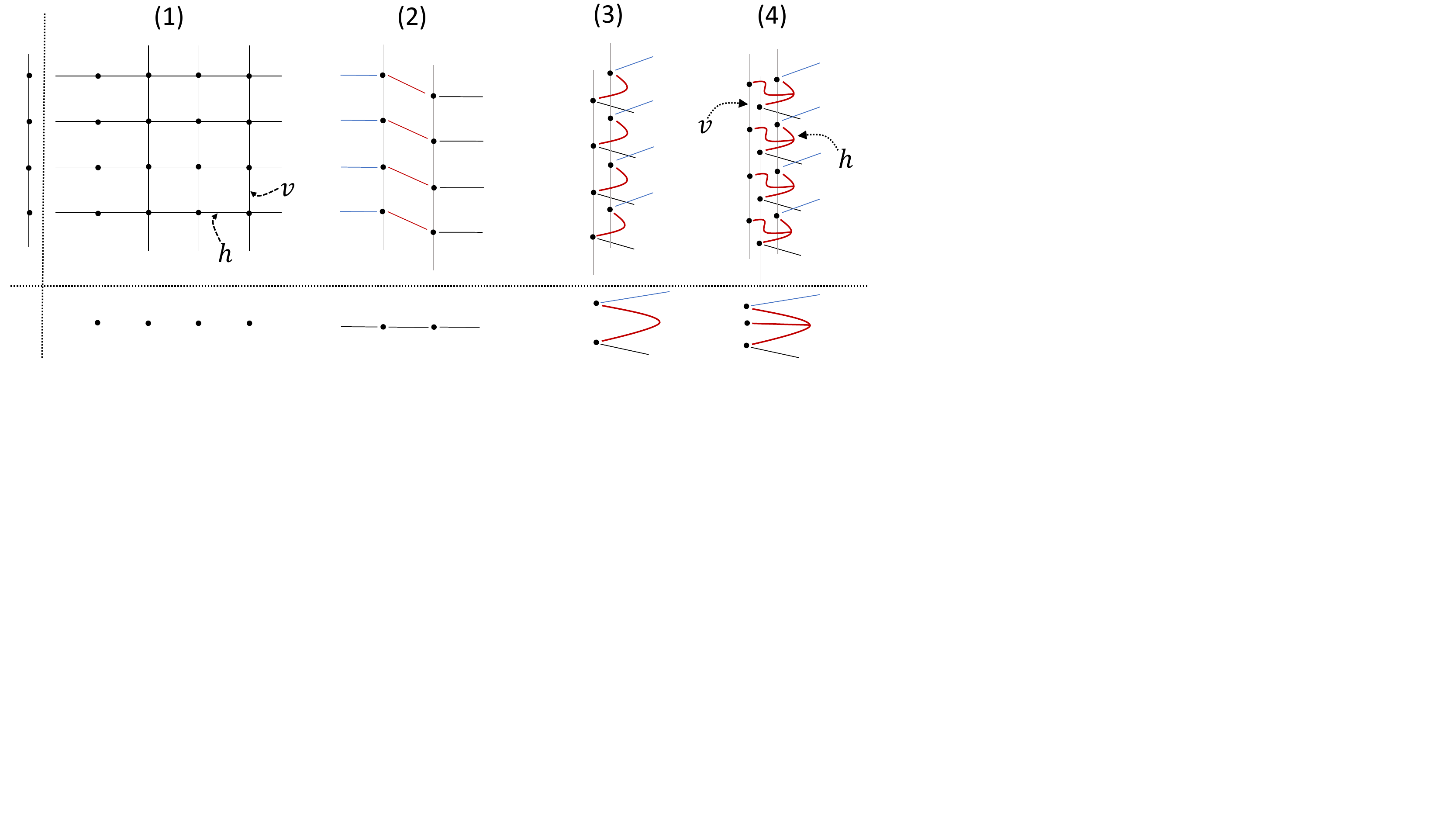}
\caption{
Fiber bundles over base $1$-complexes.
(1) depicts a trivial bundle built from two $1$-complexes,
each representing a long line or a circle.
Any bundle $1$-cell that is a lift of a base $1$-cell is referred to as a \emph{horizontal} $1$-cell,
indicated by $h$ in the figure.
Any bundle $1$-cell that vanishes upon projection onto the base is referred to as a \emph{vertical} $1$-cell,
indicated by $v$ in the figure.
The projection is defined in \Cref{def:Pi}.
(2) depicts some twisting. Since a fiber is acted on by an automorphism,
the entire fiber over a base cell is shifted.
Note that the shown twist can be removed using gauge redundancy.
To draw a nonremovable twisting, we must have had a cycle in the base.
(3) is the same as (2) but base $1$-cells are positioned on the right-hand side
and base $0$-cells on the left-hand side.
(4) introduces a ``$1$''-cell (red) of the base that has three boundary $0$-cells.
}
\label{fig:hv-cells}
\end{figure}

We assume that the fiber admits some {\em automorphism} group $G$,
which is a collection of permutation actions on the set of $q$-cells for each $q$
such that boundary operator commutes with this permutation.
Such an automorphism naturally extends by linearity to each chain vector space $\calF_q$.
The requirement of a fiber automorphism group then reads that for each~$q$,
\begin{align}
g \partial f = \partial g f \quad \text{ for all } g \in G,\,\, f^q \in \calF_q .
\label{eq:auto-consistency}
\end{align}

\begin{definition}
Given a fiber automorphism group $G$ obeying \cref{eq:auto-consistency},
a {\it connection} $\varphi$ of a bundle is an arbitrary assignment of
a automorphism group element, a {\em twist}, for each pair of a base cell and one of its boundary cell:
\begin{align}
\{ (b, a)~:~b,a\text{ are cells such that } a \in \partial b \} \xrightarrow{\quad \varphi \quad} G.
\label{eq:twist-def}
\end{align}
where we have identified $\partial b$ with its support
(the collection of cells with nonzero coefficients in $\partial b$).
We define a {\em twisted boundary map} $\partial^\calE$ by $\varphi$:%
\footnote{
With a general coefficient group, the first term of \cref{bdef} has sign $-1$.
}
\begin{align}
\partial^\calE_{(0,q)} (b^0 \otimes f) &= b^0 \otimes \partial f, \nonumber \\
\partial^\calE_{(1,q)} (b^1 \otimes f) &=
b^1 \otimes \partial f \,\,
+
\sum_{a^0 \in \partial b^1} a^0 \otimes \varphi(b^1,a^0) f.
\label{bdef}
\end{align}
\end{definition}

\begin{proposition}
If the base is a $1$-complex,
then the twisted boundary map satisfies $\partial_r^\calE \partial_{r+1}^\calE = 0$ for all $r \ge 0$.
\end{proposition}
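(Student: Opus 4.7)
The plan is to verify $\partial^\calE \partial^\calE = 0$ by direct computation on each summand $\calE_{p,q}$. Since $\calB$ is a $1$-complex, we have $\calB_p = 0$ for $p \ge 2$, so $\calE_{p,q} = 0$ unless $p \in \{0,1\}$. Hence it suffices to check the identity on $\calE_{0,q}$ and on $\calE_{1,q}$.

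On $\calE_{0,q}$ the computation is immediate: by the first line of \eqref{bdef}, $\partial^\calE(b^0\otimes f) = b^0\otimes \partial f$, and applying $\partial^\calE$ again gives $b^0\otimes\partial\partial f = 0$ since $\partial^\calF\circ\partial^\calF=0$.

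On $\calE_{1,q}$, the first application of $\partial^\calE$ produces two pieces, one lying in $\calE_{1,q-1}$ and one in $\calE_{0,q}$. To the piece $b^1 \otimes \partial f \in \calE_{1,q-1}$, the second application of $\partial^\calE$ (using \eqref{bdef} again) contributes $b^1\otimes\partial\partial f = 0$ in the $\calE_{1,q-2}$ summand plus $\sum_{a^0\in\partial b^1} a^0\otimes \varphi(b^1,a^0)\,\partial f$ in the $\calE_{0,q-1}$ summand. To the piece $\sum_{a^0\in\partial b^1} a^0\otimes\varphi(b^1,a^0)f \in \calE_{0,q}$, the second application gives $\sum_{a^0\in\partial b^1} a^0\otimes\partial\bigl[\varphi(b^1,a^0)f\bigr]$, which by the automorphism consistency~\eqref{eq:auto-consistency} equals $\sum_{a^0\in\partial b^1} a^0\otimes\varphi(b^1,a^0)\,\partial f$. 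These two $\calE_{0,q-1}$ contributions are identical, so they cancel in characteristic $2$.

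The only subtle point is where the assumption that $\calB$ is a $1$-complex is used. If $\calB$ had $2$-cells, then applying $\partial^\calE$ twice on $\calE_{2,q}$ would generate a term of the form $\sum_{a^1\in\partial b^2}\sum_{a^0\in\partial a^1} a^0\otimes \varphi(a^1,a^0)\,\varphi(b^2,a^1)\,f$; this would have to vanish by a ``cocycle'' condition on the connection $\varphi$ together with $\partial^\calB\circ\partial^\calB = 0$, and in general would fail for an arbitrary assignment $\varphi$. So the mild ``obstacle'' here is purely conceptual: the virtue of the $1$-complex hypothesis is precisely that no such condition on $\varphi$ is needed, which is why an \emph{arbitrary} assignment in~\eqref{eq:twist-def} yields a valid chain complex.
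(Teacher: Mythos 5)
Your proof is correct and follows essentially the same route as the paper's: check the identity on basis elements of $\calE_{0,q}$ and $\calE_{1,q}$, observe that the two $\calE_{0,q-1}$ contributions coincide by the automorphism consistency~\eqref{eq:auto-consistency}, and cancel them in characteristic~$2$. Your closing remark about where the $1$-complex hypothesis enters is a nice piece of commentary but is not part of the verification itself.
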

\begin{proof}
It suffices to check the claim for basis elements.
It is obvious that
$\partial^\calE_{q-1} \partial^\calE_{q} (b^0 \otimes f^q) = b^0 \otimes \partial^\calF_{q-1} \partial^\calF_q f^q = 0$
for any $q$. If $b = b^1$ is a base $1$-cell and $f$ is a fiber $q$-cell,
by \cref{bdef} we see
\begin{align*}
\partial^\calE_q \partial^\calE_{q+1} (b^1 \otimes f^q)
&=
\partial_q^\calE(b^1 \otimes \partial f^q)
+
\partial^\calE_q \sum_{a^0 \in \partial b^1} a^0 \otimes \varphi(b^1,a^0) f^q \\
&=
\left(b^1 \otimes \partial \partial f^q
+
\sum_{a^0 \in \partial b^1} a^0 \otimes \varphi(b^1,a^0) \partial f^q \right)
+
\sum_{a^0 \in \partial b^1} a^0 \otimes \partial \varphi(b^1,a^0) f^q \\
&= 0
\end{align*}
where the second equality is because $a^0$ is a $0$-cell
and the third is because of \cref{eq:auto-consistency}.
\end{proof}

The definition of twisted boundary map can be generalized to any higher-dimensional base complex.
This generalization, however, requires certain conditions so that $\partial^\calE  \partial^\calE = 0$ is fulfilled.
For example, if the base is a $2$-complex, then we may need extra terms in the boundary map:
\begin{align}
\partial_{(2,q)}^\calE (b \otimes f) = b \otimes \partial f
+
\sum_{e \in \partial b} e \otimes \varphi(b,e) f
+
\sum_{v \in \partial e: e \in \partial b} v \otimes f^+_{v,e,b}
\end{align}
where $f^+_{v,e,b}$ is some $(q+1)$-cell of the fiber.
For such extra terms to exist,
the twists and the fiber complex should jointly obey certain conditions.
We do not pursue in this generalization further,
and from now on the base will always be a $1$-complex.
With the restriction that the base is a $1$-complex,
we will find it convenient to distinguish bundle $1$-cells as follows.
\begin{definition}
Members of $\calE_{1,0}$ are \emph{horizontal}.
Members of $\calE_{0,1}$ are \emph{vertical}.
\end{definition}
\noindent
Every bundle $1$-chain is a sum of a horizontal chain and a vertical chain,
and such a decomposition is always unique.
See \Cref{fig:hv-cells}.

While the twists described in \cref{eq:twist-def} are completely arbitrary,
not all choices of twists give different ``geometry'' for the bundle.
This is known as gauge redundancy.
Here the geometry refers to an equivalence class of bundles
where the equivalence relation is given by an isomorphism between chain complexes
such that it commutes with the twisted bundle boundary maps and sends cells to cells (preferred bases).
For example, one can transform the fiber by an automorphism $h$
and change the twists as $\varphi(b,a) \mapsto h\varphi(b,a)h^{-1}$.
Even more flexibly, one can transform a fiber over a particular base cell,
and simultaneously change the twists that connect the base cell with others.

This gauge redundancy can be so rich that
if, for example, the base is a cyclic graph (a circle),
the set of all twists can be simplified
so that there is a non-identity assignment
only for one pair of a $1$-cell and its boundary $0$-cell.
We will avoid this simplification by having a complicated base,
and this is part of reason that we will take a random code for the base.

\subsection{Isomorphisms on (co)homology groups}\label{sec:H1iso}

Under conditions we use later,
the first homology $H_1(\calE)$ and cohomology $H^1(\calE)$ of the bundle
will be isomorphic to $H_1(\calB)$ and $H^1(\calB)$ of the base, respectively.
In particular, the first Betti numbers agree: $b_1(\calE)=b_1(\calB)$.
The isomorphism will be induced by the bundle projection:
\begin{definition}\label{def:Pi}
A linear map called the {\em bundle projection} $\Pi_r: \calE_r \to \calB_r$ is defined as
\begin{align}
b^{r} \otimes f^{0} &\mapsto b^r , \nonumber\\
b^{r-j} \otimes f^{j} &\mapsto 0 \quad \text{ if } j > 0 \nonumber
\end{align}
for all $r$-cells $b^r$ and $(r-j)$-cells $b^{r-j}$ of the base,
and $0$-cells $f^0$ and $j$-cells $f^j$ of the fiber.
\end{definition}

\begin{lemma}\label{lem:H1iso}
The bundle projection induces vector space isomorphisms $\Pi_*: H_1(\calE) \to H_1(\calB)$
and $\Pi^*: H^1(\calB) \to H^1(\calE)$ if all of the following are true:
\begin{itemize}
\item[(i)] $\calB$ is a $1$-complex.
\item[(ii)] The boundary $\partial f^1$ of any fiber $1$-cell $f^1$ has even weight.
\item[(iii)] Every fiber $0$-chain of even weight is a boundary.
\item[(iv)] $H_0(\calB) = 0$, i.e.\ the zeroth Betti number $b_0(\calB)$ vanishes.
\item[(v)] Every fiber automorphism acts trivially on $H_1(\calF)$.
\end{itemize}
\end{lemma}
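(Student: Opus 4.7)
The plan is to derive both isomorphisms from the short exact sequence of chain complexes
\[
0 \to V \to \calE \to Q \to 0,
\]
where $V \subseteq \calE$ is the vertical subcomplex with $V_r = \calB_0 \otimes \calF_r$ and $Q = \calE / V$. By condition~(i), $Q_r \cong \calB_1 \otimes \calF_{r-1}$ for $r \ge 1$. Inspecting \Cref{bdef}, the twist terms all land in $V$, so both $V$ and $Q$ inherit the untwisted boundary $\id \otimes \partial^\calF$, yielding
\[
H_r(V) = \calB_0 \otimes H_r(\calF), \qquad H_r(Q) = \calB_1 \otimes H_{r-1}(\calF).
\]
Conditions (ii) and (iii) together give $H_0(\calF) \cong \FF$ via weight mod $2$, so in particular $H_1(Q) \cong \calB_1$ and $H_0(V) \cong \calB_0$.

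Next I would identify the connecting homomorphism $H_1(Q) \to H_0(V)$: given $b^1 \in \calB_1 = H_1(Q)$, lift to $b^1 \otimes f_\star^0 \in \calE_1$ for any fiber $0$-cell $f_\star^0$, apply $\partial^\calE$ from \Cref{bdef}, and take the class in $H_0(V) = \calB_0$; each fiber summand is a single $0$-cell and thus contributes weight $1$, so the result is exactly $\partial^\calB b^1$. Condition~(iv) makes $\partial^\calB$ surjective, so the relevant segment of the long exact sequence reads
\[
H_1(V) \to H_1(\calE) \to H_1(\calB) \to 0,
\]
where the second map coincides with $\Pi_*$: the LES map sends $[\tilde z]$ to the class of its horizontal component in $H_1(Q) \cong \calB_1$, which lands in $\ker \partial^\calB = H_1(\calB)$ and is literally what $\Pi_1$ computes on chains.

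The remaining step, and the main obstacle, is to show that $H_1(V) \to H_1(\calE)$ vanishes; this is where condition~(v) enters. Any $1$-cycle of $V$ can be grouped as $\sum_a a^0 \otimes z_a$ with each $z_a \in \ker \partial^\calF$. For each such $a$, use (iv) to pick $w_a \in \calB_1$ with $\partial^\calB w_a = a^0$ and compute $\partial^\calE(w_a \otimes z_a)$. The $\id \otimes \partial^\calF$ term vanishes since $z_a$ is a cycle, leaving $\sum_{a' \in \partial w_a} a' \otimes \varphi(w_a, a')\, z_a$. By (v) each twist $\varphi$ fixes $[z_a] \in H_1(\calF)$, so $\varphi z_a = z_a + \partial^\calF u$ for some $u \in \calF_2$; collecting, one gets $\partial^\calE(w_a \otimes z_a) = a^0 \otimes z_a + \partial^\calE(\xi)$ for a suitable $\xi \in \calB_0 \otimes \calF_2 \subseteq \calE_2$. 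Hence $a^0 \otimes z_a$ is a $\partial^\calE$-boundary, and the vanishing follows by linearity. The delicate point is that without (v) the twists could shift $z_a$ across homology classes, making the absorption into $\calB_0 \otimes \calF_2$ impossible.

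For the cohomology statement, the plan is to apply $\mathrm{Hom}(-,\FF)$ to the same short exact sequence (equivalently, transpose using the preferred bases) and run the dual long exact sequence. Conditions (ii)--(iv) give the corresponding identifications with $H^\bullet$ in place of $H_\bullet$, and (v) implies that every fiber automorphism acts trivially on $H^1(\calF) \cong H_1(\calF)^*$ by contravariance, so the dual version of the absorption argument produces $\Pi^*: H^1(\calB) \xrightarrow{\cong} H^1(\calE)$.
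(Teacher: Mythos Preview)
Your argument is correct and takes a genuinely different route from the paper. The paper proceeds elementarily: it first checks that $\Pi$ commutes with the boundaries (using (i),(ii)), then shows directly that $\Pi_*$ is well-defined, surjective (using (iii)), and injective (using (iv),(v)); for cohomology it checks that $\Pi^*$ is well-defined and injective via the same commuting square and finishes by the equality $\dim H^1 = \dim H_1$. You instead package everything into the long exact sequence of the filtration $V \subseteq \calE$, identify the connecting map $H_1(Q) \to H_0(V)$ with $\partial^\calB$, and reduce injectivity of $\Pi_*$ to the vanishing of $H_1(V) \to H_1(\calE)$. Your approach is more systematic and makes the role of each hypothesis visible in a single exact sequence; the paper's hands-on version has the side benefit of producing explicit lifts of base cycles to bundle cycles, which it immediately reuses in \Cref{prop:cohorep}.

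Two small points. First, $\varphi(w_a,a')$ is ill-typed: the twist is only defined on a pair (base $1$-cell, boundary $0$-cell), and your $w_a$ is a $1$-chain. Expand $w_a = \sum_i c_i$ into cells; then each $\varphi(c_i,a')\, z_a - z_a$ is a fiber boundary by (v), and the remaining $\sum_i \sum_{a' \in \partial c_i} a' \otimes z_a$ collapses to $(\partial^\calB w_a) \otimes z_a = a^0 \otimes z_a$ as intended. Second, your cohomology paragraph is heavier than necessary: once $\Pi$ is a chain map and $\Pi_*$ is an isomorphism on $H_1$, over $\FF$ the induced $\Pi^*$ on $H^1$ is the linear dual of $\Pi_*$ and hence automatically an isomorphism --- no need to rerun the dual long exact sequence.
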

The conditions (ii) and (iii) are redundant in a topological setting where a $1$-cell is always a line segment.
\begin{proof}
The proof consists of the propositions below.
\begin{proposition}
Assume (i) and (ii).
Then the following diagram commutes:
\begin{equation}
\label{eq:comm-diagram}
\begin{gathered}
\xymatrix{
\calE_2 \ar[r]^{\partial}\ar[d]^{0} & \calE_1 \ar[r]^{\partial}\ar[d]^{\Pi} & \calE_0 \ar[d]^{\Pi}\\
\mathmakebox[\widthof{$\displaystyle \calB_2$}]{0=\calB_2\hphantom{{}=0}}\ar[r]_{0} & \calB_1 \ar[r]_{\partial} & \calB_0
}
\end{gathered}
\end{equation}
\end{proposition}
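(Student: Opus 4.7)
The plan is to verify commutativity cell-by-cell, exploiting the decomposition $\calE_r = \bigoplus_{p+q=r} \calB_p \otimes \calF_q$ and examining the twisted boundary formula \cref{bdef} on each summand. Since $\calB$ is a $1$-complex by (i), we have $\calB_2=0$, so $\calE_2 = \calE_{1,1} \oplus \calE_{0,2}$ and $\calE_1 = \calE_{1,0} \oplus \calE_{0,1}$, which reduces the verification to four small cases.

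For the left square I want to show $\Pi \circ \partial = 0$ on $\calE_2$. On $\calE_{0,2}$ the boundary $\partial(b^0 \otimes f^2) = b^0 \otimes \partial f^2$ lives in $\calE_{0,1}$, and $\Pi$ kills every chain of positive fiber degree by definition. On $\calE_{1,1}$ the boundary $\partial(b^1 \otimes f^1) = b^1 \otimes \partial f^1 + \sum_{a^0 \in \partial b^1} a^0 \otimes \varphi(b^1,a^0) f^1$ has a vertical second summand (killed by $\Pi$), and a horizontal first summand whose $\Pi$-image equals $|\partial f^1|\cdot b^1$ in $\F_2$. Here is where hypothesis (ii) is exactly what I need: since $|\partial f^1|$ is even, this term vanishes in $\F_2$.

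For the right square I want $\Pi \circ \partial = \partial \circ \Pi$ on $\calE_1$. On $\calE_{0,1}$, $\Pi$ annihilates the input, and $\partial(b^0 \otimes f^1) = b^0 \otimes \partial f^1$ has $\Pi$-image $|\partial f^1|\cdot b^0 = 0$ in $\F_2$, again by (ii). On $\calE_{1,0}$, note that $\partial f^0 = 0$, so $\partial(b^1 \otimes f^0) = \sum_{a^0 \in \partial b^1} a^0 \otimes \varphi(b^1,a^0) f^0$; the key observation is that a fiber automorphism acts as a permutation of $0$-cells, so $\varphi(b^1,a^0) f^0$ is itself a single $0$-cell. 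Therefore $\Pi$ sends each summand $a^0 \otimes \varphi(b^1,a^0) f^0$ to $a^0$, giving $\partial b^1 = \partial \Pi(b^1 \otimes f^0)$.

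There is no substantive obstacle here; the only subtle point is the bookkeeping that $\Pi$ counts \emph{with multiplicity} the $0$-cells appearing in a horizontal chain, so evenness of $|\partial f^1|$ (hypothesis (ii)) is precisely what is required to kill the two potentially problematic terms above. The hypothesis (i) that $\calB_2=0$ is used only to rule out the summand $\calE_{2,0}$, whose boundary would otherwise contribute a term $\partial b^2 \otimes f^0$ that the bottom row (which has $\calB_2=0$) could not see.
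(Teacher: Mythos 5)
Your proof is correct and follows essentially the same route as the paper: decompose $\calE_2$ and $\calE_1$ into bidegree summands, expand the twisted boundary formula on each, and invoke (ii) to kill the terms $|\partial f^1|\cdot b$ that would otherwise obstruct commutativity. You are a little more explicit than the paper in treating the $(0,2)$-summand of $\calE_2$ (which the paper leaves implicit) and in spelling out why $\Pi\partial = \partial\Pi$ on horizontal $1$-chains; both additions are correct and if anything make the argument cleaner.
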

\begin{proof}
For the left square, we need to show that $\Pi \partial: \calE_2 \to \calB_1$ is zero.
For a base $1$-cell $b$ and fiber $1$-cell $f$,
we have $\Pi \partial (b \otimes f) = \Pi b \otimes \partial f = |\partial f| b = 0$ by (ii).
For the right square, take any $1$-chain of the bundle,
and decompose it as $h^1 + v^1$ where $h^1 \in \calE_{1,0}$ is horizontal
and $v^1 \in \calE_{0,1}$ is vertical.
It is obvious that $\Pi \partial h^1 = \partial \Pi h^1$.
By (ii), we have $\Pi \partial v^1 = 0$, and clearly $\partial \Pi v^1 = 0$.
\end{proof}

\begin{proposition}
Assume (i) and (ii).
Then the induced map $\Pi_*: H_1(\calE) \to H_1(\calB)$ is well-defined.
\end{proposition}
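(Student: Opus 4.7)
The plan is to show that $\Pi$ descends to homology by verifying the two standard requirements: cycles map to cycles, and boundaries map to boundaries. Both will be immediate consequences of the commutative diagram \eqref{eq:comm-diagram} established in the preceding proposition, so this proof is essentially a routine chain-map argument restricted to degree~$1$, with no real obstacle beyond unpacking the diagram.

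First, I would take an arbitrary $1$-cycle $z \in \ker \partial_1^\calE \subset \calE_1$ and verify that $\Pi(z)$ is a $1$-cycle in $\calB$. Chasing the right square of \eqref{eq:comm-diagram}, we get
\[
\partial_1^\calB \Pi(z) = \Pi \partial_1^\calE(z) = \Pi(0) = 0,
\]
so $\Pi(z) \in \ker \partial_1^\calB$.

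Next, I would verify that $\Pi$ sends $1$-boundaries in $\calE$ to $1$-boundaries in $\calB$. Since $\calB$ is a $1$-complex by hypothesis~(i), the only $1$-boundary in $\calB$ is $0$, so it suffices to show $\Pi(\partial_2^\calE w) = 0$ for every $w \in \calE_2$. This follows from the left square of \eqref{eq:comm-diagram}, whose vertical arrow on the left is the zero map; explicitly,
\[
\Pi \partial_2^\calE(w) = 0 \cdot w = 0.
\]
Combining these two facts, the assignment $[z] \mapsto [\Pi(z)]$ is independent of the choice of representative $z$ for a homology class in $H_1(\calE)$, and lands in $H_1(\calB)$. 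This gives the desired well-defined linear map $\Pi_* : H_1(\calE) \to H_1(\calB)$. I would close by remarking that this is the standard fact that a chain map induces a map on homology, applied in the only two degrees that matter here.
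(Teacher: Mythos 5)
Your proof is correct and reaches the conclusion by essentially the same route as the paper, but in a cleaner packaging: you directly cite the commutative diagram from the preceding proposition (i.e., the fact that $\Pi$ is a chain map with $\calB_2 = 0$), whereas the paper re-derives the needed facts by decomposing a bundle $1$-chain into horizontal and vertical parts and checking the $(1,1)$-cell boundary formula explicitly. Both arguments hinge on conditions (i) and (ii) in the same way; your version is the more economical "chain maps induce maps on homology" phrasing, while the paper's is the hands-on unpacking of the same diagram.
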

\begin{proof}
We have to show that
(1)~any closed $1$-chain becomes closed, and
(2)~any $1$-chain that is a boundary becomes a boundary.

Decompose a bundle $1$-chain as $h^1 + v^1$ as before.
If $h^1 + v^1$ is closed, then $\partial h^1 = \partial v^1$.
By (ii), we see $\Pi \partial v^1 = 0$.
Hence, $\partial \Pi h^1 = \Pi \partial h^1 = 0$.
This shows~(1).

For~(2), it suffices to examine boundary of $(1,1)$-cells, by (i).
We examine $\partial(b^1 \otimes f^1) = b^1 \otimes \partial f^1 + \sum_{a^0 \in \partial b^1} a^0 \otimes \varphi(b^1,a^0) f^1$.
Under $\Pi$, the sum vanishes by definition of $\Pi$
and so does the first term by (ii).
Hence, the projection of a boundary is zero.
\end{proof}

\begin{proposition}
Assume (i)---(iii).
Then $\Pi_*: H_1(\calE) \to H_1(\calB)$ is onto.
\end{proposition}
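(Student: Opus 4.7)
The plan is to explicitly lift any base $1$-cycle to a closed bundle $1$-chain whose projection recovers it. Let $\beta \in \calB_1$ be a representative of an arbitrary homology class $[\beta] \in H_1(\calB)$, so $\partial \beta = 0$. Fix any fiber $0$-cell $f^0 \in \calF_0$ and form the horizontal lift
\[
\tilde\beta \;=\; \beta \otimes f^0 \;\in\; \calE_{1,0} .
\]
By definition of $\Pi$, we have $\Pi(\tilde\beta) = \beta$. In general $\tilde\beta$ is not closed, so the task is to correct it by a vertical chain, using hypothesis~(iii).

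Applying \cref{bdef} and using $\partial f^0 = 0$, we compute
\[
\partial \tilde\beta \;=\; \sum_{a^0 \in \calB_0} a^0 \otimes \sigma_{a^0}, \qquad \sigma_{a^0} \;:=\; \sum_{\substack{b^1 \in \beta \\ a^0 \in \partial b^1}} \varphi(b^1, a^0)\, f^0 \;\in\; \calF_0 .
\]
Here each summand $\varphi(b^1,a^0) f^0$ is a single $0$-cell of the fiber, since the automorphism group acts on cells by permutation. The number of summands in $\sigma_{a^0}$ equals $|\{b^1 \in \beta : a^0 \in \partial b^1\}|$, which is the coefficient of $a^0$ in $\partial \beta$, and hence is even because $\beta$ is closed. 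Since reducing a multiset of cells modulo~$2$ changes its cardinality by an even number, the Hamming weight $|\sigma_{a^0}|$ is also even.

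By hypothesis~(iii), for each $a^0 \in \calB_0$ we may choose $\gamma_{a^0} \in \calF_1$ with $\partial \gamma_{a^0} = \sigma_{a^0}$. Define the correction
\[
\tilde\beta' \;=\; \tilde\beta \;+\; \sum_{a^0 \in \calB_0} a^0 \otimes \gamma_{a^0} \;\in\; \calE_{1,0} \oplus \calE_{0,1}.
\]
Using $\partial^\calE_{(0,1)}(a^0 \otimes \gamma_{a^0}) = a^0 \otimes \partial \gamma_{a^0} = a^0 \otimes \sigma_{a^0}$, the added vertical part exactly cancels $\partial \tilde\beta$, so $\partial \tilde\beta' = 0$. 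The correction is purely vertical, so $\Pi(\tilde\beta') = \Pi(\tilde\beta) = \beta$, and therefore $\Pi_*[\tilde\beta'] = [\beta]$. Since $[\beta]$ was arbitrary, $\Pi_*$ is onto.

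The only nontrivial step is the parity argument that each fiber $0$-chain $\sigma_{a^0}$ has even weight; everything else is a routine application of the definitions and of hypothesis~(iii). Hypotheses~(i) and~(ii) are used implicitly in that $\calE_1 = \calE_{1,0} \oplus \calE_{0,1}$ and that $\partial$ takes the form given by \cref{bdef}. Hypotheses~(iv) and~(v) play no role here; they will be needed for injectivity of $\Pi_*$ and for the cohomology statement.
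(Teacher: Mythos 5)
Your proof is correct and follows essentially the same approach as the paper's: lift the base cycle horizontally by tensoring with a fixed fiber $0$-cell, observe that the resulting boundary is vertical with even fiber weight over each base $0$-cell, and use hypothesis~(iii) to cap it off. The only cosmetic difference is how the parity is established — you count the $\F_2$-cancellations directly, while the paper deduces it from $\Pi\,\partial(\beta\otimes f^0)=\partial\beta=0$ — but these are the same observation.
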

\begin{proof}
Given a homology representative cycle $b^1$ of the base,
we choose an arbitrary fiber $0$-cell $f^0$ so $b^1 \otimes f^0$ projects down to $b^1$.
Observe that
$\partial( b^1 \otimes f^0 ) = \sum_{a^0 \in \partial b^1} a^0 \otimes \varphi(b^1,a^0) f^0 = \sum_j a^0_j \otimes f'_j$
where the $a^0_j$ are distinct $0$-cells of the base and the $f'_j$ are some fiber $0$-chains.
Since $\Pi \partial (b^1 \otimes f^0) = \partial b^1 = 0$,
we have $\sum_{a^0 \in \partial b^1} \Pi (a^0 \otimes \varphi(b^1,a^0) f^0) = \sum_j |f'_j| a^0_j = 0$.
This means that $|f'_j| = 0 \bmod 2$ for all $j$.
By~(iii), we have $f'_j = \partial s^1_j$ for some fiber $1$-chain $s^1_j$.
Now, $b^1 \otimes f^0 + \sum_j a^0_j \otimes s^1_j$ is closed and projects down to $b^1$.
\end{proof}

\begin{proposition}
Assume (i)---(v).
Then $\Pi_*: H_1(\calE) \to H_1(\calB)$ is one-to-one.
\end{proposition}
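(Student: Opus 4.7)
The plan is to unpack what it means for a cycle $z \in \calE_1$ to lie in $\ker \Pi_*$, reduce $z$ modulo bundle boundaries until it sits in a very restricted form, and then use conditions (iv) and (v) to show it becomes trivial. The starting observation is that since $\calB$ is a $1$-complex by (i), we have $\calB_2 = 0$, so $H_1(\calB) = \ker \partial_1^\calB$ contains no nonzero boundaries. Hence $\Pi_*[z] = 0$ is the same as $\Pi(z) = 0$, and by the definition of $\Pi$ this means $z$ is \emph{purely vertical}: $z = \sum_i a^0_i \otimes f^1_i$ for distinct base $0$-cells $a^0_i \in \calB_0$ and fiber $1$-chains $f^1_i \in \calF_1$. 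Applying $\partial^\calE$ to a purely vertical chain just gives $\sum_i a^0_i \otimes \partial f^1_i$, so the cycle condition forces each $f^1_i$ to be a fiber $1$-cycle.

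Next, I would fix once and for all cycle representatives $F_1, \ldots, F_k$ of a basis of $H_1(\calF)$ and write each $f^1_i = \sum_\alpha c_{i,\alpha} F_\alpha + \partial g_i$ for coefficients $c_{i,\alpha} \in \FF$ and some $g_i \in \calF_2$. Since $\partial^\calE(a^0_i \otimes g_i) = a^0_i \otimes \partial g_i$, the ``$\partial g_i$'' pieces contribute only bundle boundaries, giving the congruence
\[
z \;\equiv\; \sum_\alpha A_\alpha \otimes F_\alpha \pmod{\im \partial^\calE}, \qquad A_\alpha := \sum_i c_{i,\alpha}\, a^0_i \in \calB_0.
\]
This reduces the problem to showing that for each fixed basis cycle $F_\alpha$ and each $0$-chain $A_\alpha \in \calB_0$, the chain $A_\alpha \otimes F_\alpha$ is a bundle boundary.

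Here condition (iv) and condition (v) work together. By (iv), $H_0(\calB)=0$, so we may write $A_\alpha = \partial B_\alpha$ for some $B_\alpha = \sum_\ell b^1_\ell \in \calB_1$. I would then compute $\partial^\calE(B_\alpha \otimes F_\alpha)$ using \cref{bdef}: the ``$B_\alpha \otimes \partial F_\alpha$'' term vanishes because $F_\alpha$ is a cycle, leaving
\[
\partial^\calE(B_\alpha \otimes F_\alpha) \;=\; \sum_\ell \sum_{a^0 \in \partial b^1_\ell} a^0 \otimes \varphi(b^1_\ell,a^0)\, F_\alpha .
\]
By (v), each automorphism $\varphi(b^1_\ell,a^0)$ fixes the class $[F_\alpha] \in H_1(\calF)$, so $\varphi(b^1_\ell,a^0) F_\alpha = F_\alpha + \partial h_{\ell, a^0}$ for some $h_{\ell,a^0} \in \calF_2$. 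Substituting and collecting terms, the $F_\alpha$ pieces assemble into $\partial B_\alpha \otimes F_\alpha = A_\alpha \otimes F_\alpha$, and the $\partial h_{\ell,a^0}$ pieces assemble into $\partial^\calE(\sum_{\ell,a^0} a^0 \otimes h_{\ell,a^0})$. Rearranging exhibits $A_\alpha \otimes F_\alpha$ as a bundle boundary, and summing over $\alpha$ completes the proof.

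The genuinely delicate step is the last one: one needs the twists to not ``spoil'' $F_\alpha$ as one hops between base $0$-cells along a base $1$-chain $B_\alpha$. Naively the twists could change the fiber cycle representative, and the argument only closes because (v) upgrades this to a $\partial$-exact correction in $\calF$, which can then be absorbed as an additional explicit bundle boundary. Everything else is bookkeeping about the vertical/horizontal decomposition and the $\FF$-linear change of basis from arbitrary fiber cycles $f^1_i$ to the fixed homology basis $F_\alpha$.
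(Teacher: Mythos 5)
There is a genuine gap at the very first reduction. You correctly note that $\calB_2 = 0$ (by (i)) forces $\Pi_*[z]=0 \iff \Pi(z)=0$. But you then conclude that $\Pi(z)=0$ ``by the definition of $\Pi$'' means $z$ is purely vertical. That does not follow. Writing $z = h + v$ with $h = \sum_b b\otimes f^0_b$ horizontal and $v$ vertical, we have $\Pi(z) = \Pi(h) = \sum_b |f^0_b|\, b$ (mod $2$), because $\Pi$ sends $b\otimes f^0 \mapsto b$ for \emph{every} fiber $0$-cell $f^0$. So $\Pi(z)=0$ only tells you each $f^0_b$ has \emph{even} Hamming weight, not that $f^0_b=0$. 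The horizontal part can be (and generically is) nonzero.

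The missing step is precisely where condition (iii) enters — and indeed your proof never uses (iii), which is a clear warning sign. The paper's argument goes: since each $f^0_b$ has even weight, by (iii) write $f^0_b = \partial s^1_b$ for fiber $1$-chains $s^1_b$; then $\partial^\calE\bigl(\sum_b b \otimes s^1_b\bigr) = h + u$ for some vertical $1$-chain $u$, so $z$ is homologous to the purely vertical cycle $v + u$. Only after this reduction can one proceed as you do. From that point on, your argument (fix a basis of $H_1(\calF)$, write each fiber $1$-cycle as a basis combination plus an exact piece, use (iv) to lift each coefficient $0$-chain to a base $1$-chain, use (v) to replace each twisted $\varphi\,F_\alpha$ by $F_\alpha$ plus an exact correction) is a correct and slightly more explicit rendering of the same closing step the paper uses. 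So the fix is small: insert the (iii)-based reduction at the start, and the rest stands.
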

\begin{proof}
Suppose $\Pi_*(v^1 + h^1) = 0$ where $v^1+h^1$ is a decomposition of a $1$-cycle of $\calE$
into vertical~($\calE_{0,1}$) and horizontal~($\calE_{1,0}$) $1$-chains.
The projection eliminates vertical $1$-cells by definition,
and takes the mod-$2$ sum of horizontal $1$-cells.
So, the vanishing projection means that over any base $1$-cell,
there are an even number of horizontal $1$-cells.
That is, we can write the horizontal chain $h^1$ as
$h^1 = \sum_j b^1_j \otimes f^0_j$ where $b^1_j$ are distinct base $1$-cells
and each $f^0_j$ is an even-weight $0$-chain of the fiber.
By (iii), $f^0_j = \partial s^1_j$ for some $1$-chain $s^1_j$ in the fiber,
and thus
$\partial \sum_j b^1_j \otimes s^1_j = h^1 + u^1$ for some vertical $1$-chain $u^1$.
Then, we have $v^1 + h^1 = v^1 + u^1 + \partial \sum_j b^1_j \otimes s^1_j$.
Since $v^1 + h^1$ is closed, $v^1 + u^1$ is also closed.
Therefore, $v^1 + h^1$ is homologous to $v^1 + u^1$, a vertical $1$-cycle.

But now we can show that any vertical $1$-cycle of the form $a^0 \otimes s^1$ with $\partial s^1 = 0$ is a boundary:
The assumption~(iv) gives a base $1$-chain $c^1$ such that $a^0 = \partial c^1$.
Then,
$\partial ( c^1 \otimes s^1) = \sum_{t^0 \in \partial c^1} t^0 \otimes \varphi(c^1,t^0) s^1$.
Since $t^0 \otimes \varphi(c^1,t^0) s^1$ is homologous to $t^0 \otimes s^1$ by (v),
we see $\partial(c^1 \otimes s^1)$ is homologous to $(\partial c^1) \otimes s^1 = a^0 \otimes s^1$.
\end{proof}
This completes the proof of \Cref{lem:H1iso} for homology.
The cohomology isomorphism is straightforward
by abstract nonsense using the commutative diagram~\eqref{eq:comm-diagram}.
The details are as follows.

If $w_1$ is a $1$-cocycle of the base, then
$(\partial^\transp \Pi^* w_1)(e^2) = w_1(\Pi \partial e^2)
= w_1( \partial \Pi e^2 ) = (\partial^\transp w_1)(\Pi e^2) = 0$
for any bundle 2-chain $e^2$.
If $w_1 = \partial^\transp w_0$ is a chain of the base,
then
$(\Pi^* w_1)(e^1) = w_1( \Pi e^1) = w_0( \partial \Pi e^1 )
= w_0( \Pi \partial e^1 ) = (\partial^\transp\Pi^* w_0)(e^1)$,
and so $\Pi^* w_1$ is a coboundary.
This shows that $\Pi^*$ is a well-defined map from $H^1(\calB)$ to $H^1(\calE)$.
We claim that $\Pi^*$ is one-to-one.
This will finish the proof of \Cref{lem:H1iso} because the vector space dimensions
are the same for homology and cohomology.
To show the claim,
suppose that $\Pi^* w_1$ is a coboundary for a base $1$-cohomology representative $w_1$.
That is, $\Pi^* w_1$ vanishes on any bundle $1$-cycle.
Now by the homology isomorphism,
any bundle $1$-homology cycle is a lift of a base $1$-cycle.
This means that $w_1$ vanishes on all base $1$-cycles.
Since the dual vector space of $H_1(\calB)$ can be identified with $H^1(\calB)$,
we see $w_1$ is a coboundary.
\end{proof}

Under the assumptions of \Cref{lem:H1iso},
it will be instructive to have a more elementary description of the homology and cohomology representatives.
A lift of a base homology representative $c^1$ is illustrated in the proof above.
To recap, a lift consists of horizontal $1$-cells that project onto~$c^1$
together with vertical $1$-cells that cap the boundary of these horizontal $1$-cells
through a $1$-chain in the fiber.
Because of the twists, the boundary of the horizontal $1$-cells can be ``far apart'' within each fiber.

Regarding cohomology, since $\calB$ is a $1$-complex, there is no restriction on the $1$-cocycles;
every $1$-chain is a $1$-cocycle.
That is, any $1$-cell $b$ of the base represents a $1$-cohomology class.
Its lift functional $\Pi^* b$ as a cohomology representative
must evaluate to $1 \in \FF$ for every horizontal cell over $b$,
so the functional $\Pi^* b$ can be identified with $\sum_{f^0} b \otimes f^0$
where the sum is over all $0$-cells $f^0$ of the fiber.
The cohomology representative gives an upper bound on $d_X$
when the quantum code's $X$ logical operators are $1$-cohomology.
We summarize this as a proposition for later reference.
\begin{proposition}\label{prop:cohorep}
Assume all  conditions (i)---(v) of \Cref{lem:H1iso}.
If $\{[b_1], [b_2],\ldots\} $ is a basis of~$H^1(\calB)$ where $b_j$ are some base $1$-cocycles,
the following is a complete basis of representatives for~$H^1(\calE)$:
\[
\{ b_1 \otimes F_0, b_2 \otimes F_0, \ldots \},
\]
where $F_0 \in \calF_0$ denotes the sum of all fiber $0$-cells.
In particular, there exists a nontrivial representative of $H^1(\calE)$ of weight
equal to the number of $0$-cells in the fiber.
\end{proposition}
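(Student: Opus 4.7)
The plan is to piece this together directly from Lemma~\ref{lem:H1iso} and the explicit description of $\Pi^*$; nothing new really has to be proved, only an identification of the pullback of a base cocycle with a concrete bundle chain under the canonical isomorphism $\calE_1^* \cong \calE_1$.

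First I would recall that by Lemma~\ref{lem:H1iso} the induced map $\Pi^* : H^1(\calB) \to H^1(\calE)$ is a vector space isomorphism. Hence a basis $\{[b_1],[b_2],\dots\}$ of $H^1(\calB)$ is carried to a basis $\{[\Pi^* b_1],[\Pi^* b_2],\dots\}$ of $H^1(\calE)$, and each $[\Pi^* b_j]$ is nontrivial whenever $[b_j]$ is. So it only remains to identify the cochain $\Pi^* b_j \in \calE_1^*$ with the chain $b_j \otimes F_0 \in \calE_1$ under the preferred-basis isomorphism.

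For this, I would evaluate $\Pi^* b_j$ on the two kinds of bundle $1$-cells. By definition of $\Pi$ in \Cref{def:Pi}, a horizontal cell $b' \otimes f^0 \in \calE_{1,0}$ projects to $b' \in \calB_1$, while a vertical cell $a^0 \otimes f^1 \in \calE_{0,1}$ projects to $0$. Therefore
\[
(\Pi^* b_j)(b' \otimes f^0) = b_j(b') = \delta_{b_j, b'}, \qquad (\Pi^* b_j)(a^0 \otimes f^1) = 0.
\]
Under the identification $\calE_1^* \cong \calE_1$ coming from the preferred basis, the linear functional that takes value $1$ on exactly the cells $\{b_j \otimes f^0 : f^0 \in \calF_0\}$ corresponds to the chain $\sum_{f^0 \in \calF_0} b_j \otimes f^0 = b_j \otimes F_0$. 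This identifies the basis of $H^1(\calE)$ produced by $\Pi^*$ with $\{b_1 \otimes F_0, b_2 \otimes F_0, \dots\}$, as claimed.

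Finally, the Hamming weight of $b_j \otimes F_0$ is exactly the number of $0$-cells of the fiber, so as long as $H^1(\calB) \neq 0$ (equivalently $H^1(\calE) \neq 0$), we obtain a nontrivial cohomology representative of that weight. The only ``obstacle'' here is purely bookkeeping: being careful that the identification $\calE_1^* \cong \calE_1$ is the same one used throughout the paper, so that the cochain $\Pi^* b_j$ and the chain $b_j \otimes F_0$ really are the same object; once that is stated, there is no further content to verify.
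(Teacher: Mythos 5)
Your proof is correct and follows essentially the same route as the paper: invoke the isomorphism $\Pi^*$ from \Cref{lem:H1iso}, then identify the pullback cochain $\Pi^* b_j$ with the chain $b_j \otimes F_0$ by evaluating it on horizontal and vertical bundle $1$-cells. One small notational slip: writing $b_j(b') = \delta_{b_j,b'}$ treats $b_j$ as a single cell, whereas the proposition allows $b_j$ to be an arbitrary cocycle; the correct statement is that $b_j(b')$ is the coefficient of $b'$ in $b_j$, and then $\Pi^* b_j$ corresponds to $\sum_{b' \in \mathrm{supp}(b_j)} \sum_{f^0} b' \otimes f^0 = b_j \otimes F_0$ (and the final ``weight equal to $\mF$'' claim implicitly requires choosing at least one $b_j$ to be a single cell, which is always possible since cells span $H^1(\calB)$ when $\calB$ is a $1$-complex) --- but the paper glosses over the same point, so this is not a substantive gap.
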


\subsection{Circle bundle over classical codes}\label{sec:circle-bundle}
\label{sketchsection}
We choose the fiber to be a cycle graph, i.e., a circle.
As a chain complex, the fiber is a $1$-complex
with $\mF$ $0$-cells and $\nF$ $1$-cells where $\mF = \nF > 1$.
This circle admits an automorphism group that is the dihedral group of order $2\nF$;
however, we do not use the reflection symmetry,
but only the rotation symmetry.
The circle fulfills all the conditions related to the fiber, namely (ii), (iii), and (v) of \Cref{lem:H1iso};
any automorphism of the circle leaves the fundamental homology cycle invariant.

\begin{definition}
Our fiber bundle code is a quantum CSS code
whose logical operators are associated with homology and cohomology
at dimension~$1$ of the twisted bundle complex $\calE_2 \to \calE_1 \to \calE_0$
built from the circle fiber $\calF_1 \to \calF_0$ and a base $\calB_1 \to \calB_0$.
\end{definition}
\noindent
This definition leaves room for the base complex to be any classical code
and for the twists to be completely arbitrary members of $\ZZ_{\nF}$.

We will pick $\nF=\mF=\ell^2$ for some integer $\ell$ (and it will be convenient, though not strictly necessary, to assume $\ell$ is odd, hence $\nF$ is odd).  In fact, all twists will be multiples of $\ell$ so we only use a subgroup of order $\nF/\ell=\ell$ of the rotation group.  Roughly speaking, this is done so that if some weight needs to ``move through the fiber" to join two cells that differ by a twist, we have some lower bound on how far it needs to move.

We will use a random classical code for the base~$\calB$.
There will be $\nB$ bits (variables) in this classical code, considered to be $1$-cells of the base, and there will be $\mB$ parity checks in the code, considered to be $0$-cells in the base.
We will represent this classical code by its Tanner graph, a bipartite graph~$B$  with $\mB$ left-vertices and $\nB$ right-vertices.
We will later choose $\mB = (3/4) \nB$, and all vertices will have degree very close to~$\Delta = \Theta(\log^2 \nB)$.
In this way the random classical code will have minimum distance~$\Omega(\nB)$ and all of its parity checks will be linearly independent (with high probability); the latter condition is equivalent to $H_0(\calB) = 0$.

Thus the bundle $\calE$ will have $N = \nB\cdot \mF+\mB\cdot\nF$ $1$-cells corresponding to qubits of the resulting quantum code, and the total number of cells in the bundle will be $(\nB + \mB) \cdot (\nF + \mF)$.
Given that $H_0(\calB) = 0$, i.e., $b_0(\calB) = 0$, it then follows by construction that $b_1(\cB)=(1/4)\nB$ so that the fiber bundle code has $\Theta(\nB)$ logical qubits.
We will prove that the resulting quantum code has distances $d_X=\Omega(\mF/\log^2 \nB)$ (see \Cref{cohomweight}) and $d_Z=\Omega(\nB \cdot \mF^{1/2} / \log^2 \nB)$ (see \Cref{homweight}) provided $\nB \geq \mF$.
We then choose $\nB \sim \mF$, giving $N=\Theta(\nB^2)$ and giving distances $d_X=\Omega(N^{1/2}/\log^2 N)$, $d_Z=\Omega(N^{3/4}/\log^2 N)$.                                                                                                  Together, these facts prove \Cref{mainth}.

\paragraph{Remark.} It is possible to slightly improve the polylogs in \Cref{mainth} by  adjusting our choices for $\ell,\nF,\nB$ by polylogarithmic factors.
Specifically, a minor improvement arises from choosing $\mF = \nB/\Delta$ and $\ell = \sqrt{\mF/\Delta}$, where recall $\Delta = \Theta(\log^2 \nB)$.
However, these polylog improvements deteriorate again after passing through the weight reduction process leading to \Cref{coro}.
Thus we have chosen to make slightly non-optimal parameter choices so as to simplify the presentation.

\section{The random base code, with twists}
\label{randbase}
\newcommand{\sch}{S}
\newcommand{\kreg}{k}
\newcommand{\el}{\ell}
\newcommand{\heads}[1]{\scalebox{.75}[1.0]{\textsc{Heads}}_{#1}}
\newcommand{\tails}[1]{\scalebox{.75}[1.0]{\textsc{Tails}}_{#1}}
\newcommand{\pB}{B}

Throughout this section and the next section we write $n = \nB$ and $m = \mB$, for brevity.

\subsection{A random base code} \label{sec:random-base}
    The base code $\pB$ is identified with its Tanner graph, having variable vertices~$[n]$ and check vertices~$[m]$.
    Our construction will require $\tfrac12 < m/n < 1$; for simplicity we fix
    \begin{equation}    \label{eqn:m34}
        m = \tfrac34 n,
    \end{equation}
    assuming that $m$ is an integer.
    We will also fix a parameter $\Delta = \Delta(n)$ representing the average degree of the check vertices.
    Eventually we will choose
    \[
        \Delta = \Theta(\log^2 n),
    \]
    but for now we only assume
    \begin{equation}    \label{eqn:Delta}
        \beta \ln n \leq \Delta \leq n^{o(1)},
    \end{equation}
    where $\beta$ is a large universal constant to be chosen later.
    We will choose a \emph{random} base code~$\bB$,\footnote{In this section, boldface denotes random variables/objects.} with the neighborhood $\bdry^\transp a$ of each check $a \in [m]$ independently being a random density-$\frac{\Delta}{n}$ subset of~$[n]$.
    By this we mean that each variable $i \in [n]$ is included into~$\bdry^\transp a$ independently with probability~$\frac{\Delta}{n}$.

It is well known that such a randomly constructed code~$\bB$ will have various expansion-type properties.
We collect here some standard results along these lines.

\begin{proposition}                                     \label{prop:bounded-degree}
    Except with probability at most $O(1/n^{100})$, all check vertices in~$\bB$ have degree between $.99\Delta$ and~$1.01\Delta$ and all variable vertices have degree between $.74\Delta$ and~$.76\Delta$.
\end{proposition}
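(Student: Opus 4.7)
The plan is to prove this via a straightforward application of the Chernoff bound together with a union bound over the $n + m = O(n)$ vertices. For each check vertex and each variable vertex, the degree is a sum of independent Bernoulli random variables whose mean is easy to compute from the definition of~$\bB$, so concentration around the mean is the essential content.

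First I would analyze the check-vertex degrees. For any fixed $a \in [m]$, the degree $\deg(a)$ is a sum of $n$ independent $\mathrm{Bern}(\Delta/n)$ indicators, so $\mathbb{E}[\deg(a)] = \Delta$. By the multiplicative Chernoff bound,
\[
    \Pr\bigl[\,|\deg(a) - \Delta| > 0.01\Delta\,\bigr] \leq 2 \exp(-c\Delta)
\]
for a small absolute constant $c > 0$. Next I would analyze the variable-vertex degrees. For fixed $i \in [n]$, the indicator ``$i \in \bdry^\transp a$'' is still $\mathrm{Bern}(\Delta/n)$ independently across $a \in [m]$, since the check neighborhoods were drawn independently. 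Hence $\deg(i)$ is a sum of $m = \tfrac{3}{4} n$ such indicators with mean $\tfrac{3}{4}\Delta = 0.75\Delta$, and Chernoff again yields
\[
    \Pr\bigl[\,|\deg(i) - 0.75\Delta| > 0.01\Delta\,\bigr] \leq 2\exp(-c'\Delta).
\]

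Finally, I would union bound over all $n + m \leq 2n$ vertices. Using the hypothesis $\Delta \geq \beta \ln n$ from \eqref{eqn:Delta}, the per-vertex failure probability is at most $2n^{-c\beta}$, and choosing $\beta$ a sufficiently large absolute constant (say large enough that $c\beta, c'\beta \geq 101$) makes the union-bounded failure probability $O(1/n^{100})$, as required. There is no real obstacle here; the only mild care-point is verifying that the constant $\beta$ mentioned in \eqref{eqn:Delta} has indeed been chosen compatibly with the Chernoff exponents~$c, c'$, which is an unconditional numerical choice and explains the phrase ``to be chosen later'' in the paper. The argument will be reused almost verbatim in later union-bound arguments establishing expansion properties of~$\bB$.
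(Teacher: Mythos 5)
Your proposal is correct and is exactly the argument the paper has in mind: the paper's own proof is a one-liner citing a ``standard Chernoff + union bound argument,'' using $\Delta \geq \beta \ln n$ and $m \leq n$, and your write-up simply fills in the routine details (degree means $\Delta$ and $\tfrac34\Delta$, multiplicative Chernoff, union bound over $\leq 2n$ vertices, and tuning $\beta$ against the Chernoff exponents).
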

\begin{proof}
    This can be achieved by a standard Chernoff + union bound argument, taking the constant~$\beta$ in \cref{eqn:Delta} large enough and using $m \leq n$.
\end{proof}

\begin{proposition}                                     \label{prop:cohom-unique}
    The bipartite graph $\bB$ has the following property, except with probability at most $O(1/n^{100})$:
    For every $S \subseteq [m]$ with $|S| \leq \frac{1}{\XConst\Delta} m$, the neighborhood of~$S$ in~$[n]$ has cardinality at least $.9 \Delta |S|$.
\end{proposition}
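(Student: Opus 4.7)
The plan is to prove, for each fixed $S \subseteq [m]$ with $|S| = s \leq m/(\XConst\Delta)$, a tail bound of the form $\Pr[|N(S)| < 0.9\Delta s] \leq \exp(-\Omega(s\Delta))$, and then union-bound over all such subsets~$S$ and all $s \geq 1$.

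The whole proof rests on a simple independence observation that puts sharp Chernoff estimates in reach. For each variable $i\in[n]$, the indicator $\mathbf{1}[i \in N(S)]$ depends only on the edge variables $\{E_{a,i} : a \in S\}$, where $E_{a,i}$ is the $\mathrm{Bern}(\Delta/n)$ random variable recording whether $i \in \bdry^\transp a$. These collections of edge variables are \emph{disjoint} across different~$i$, so the $n$ indicators $\mathbf{1}[i \in N(S)]$ are mutually independent and $|N(S)|$ is distributed exactly as $\mathrm{Bin}(n, p_s)$ with $p_s = 1 - (1-\Delta/n)^s$. Because $s\Delta/n \leq \tfrac{3}{4\XConst}$, a one-line Taylor estimate yields $\mu := n p_s \geq (1 - O(1/\XConst))\cdot s\Delta$, so $0.9\Delta s \leq (1-\delta)\mu$ for some absolute constant $\delta > 0$. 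The standard multiplicative Chernoff bound then gives $\Pr[|N(S)| \leq 0.9\Delta s] \leq \exp(-c\, s\Delta)$ for a universal $c>0$.

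To close the proof, I would take a union bound: the number of sets $S$ of size $s$ is at most $\binom{m}{s} \leq (em/s)^s \leq e^{O(s\log n)}$. For $\Delta \geq \beta \ln n$ with $\beta$ a sufficiently large universal constant, the total exponent $O(s\log n) - c\, s\Delta$ is at most $-c' s\Delta$ for some $c' > 0$, and summing the resulting geometric series over $1 \leq s \leq m/(\XConst\Delta)$ yields a failure probability dominated by the $s=1$ term, which is at most $n^{-100}$. I do not expect any real obstacle beyond arithmetic bookkeeping; the one conceptual point is spotting the cross-variable independence that lets one treat $|N(S)|$ as an honest binomial, avoiding the much weaker concentration one would obtain by trying to bound collisions among the $\sim s\Delta$ edges emanating from~$S$.
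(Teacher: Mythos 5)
Your proof is correct, but it takes a genuinely different route from the paper's. The paper first conditions on the bounded-degree event from \Cref{prop:bounded-degree}, then ``ignores edges'' to push the instance down to the $d$-regular bipartite model with $d = \lceil .99\Delta\rceil$, and finally cites standard left-expansion results for that model (the binary-entropy inequality and references~\cite{Chu79,Bas81}). You instead stay in the density-$\frac{\Delta}{n}$ model and exploit the cross-variable independence: since the events $\{i \in N(S)\}_{i\in[n]}$ depend on disjoint blocks of the i.i.d.\ edge indicators, $|N(S)|$ is \emph{exactly} $\mathrm{Bin}\bigl(n,\,1-(1-\Delta/n)^{|S|}\bigr)$, and a direct multiplicative Chernoff bound plus a union over $\binom{m}{s}$ sets finishes the job. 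Your Taylor estimate is fine: with $s\Delta/n \le \tfrac{3}{4\XConst}$ one gets $p_s \ge \tfrac{s\Delta}{n}\bigl(1-\tfrac{3}{8\XConst}\bigr)$, so $\mu = np_s$ exceeds $0.9\Delta s$ by a fixed multiplicative gap, and the Chernoff exponent is $\Omega(s\Delta)$, which with $\Delta \ge \beta\ln n$ beats the $e^{O(s\log n)}$ union-bound factor for $\beta$ large. What your approach buys is self-containedness and no conditioning step; what the paper's approach buys is that the reduction to the regular model lets them point at a well-known expansion lemma rather than rederive it. Both are sound; yours is arguably the cleaner argument in this particular density model precisely because the binomial structure of $|N(S)|$ is exact there, whereas the regular-model reduction is what the paper needs to appeal to the classical literature.
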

\begin{proof}
    By \Cref{prop:bounded-degree}, it suffices to prove this conditioned on the assumption that every check vertex in $\bB$ has degree between $.99\Delta$ and~$1.01\Delta$.
    Under this conditioning, the neighborhoods of each check vertex remain independent and have a certain distribution on their cardinality; conditioned on their cardinality, they are uniformly random subsets of~$[n]$.
    By ignoring edges (which only hurts us), we may therefore assume that the neighborhoods of the check vertices are independent random subsets of~$[n]$ of cardinality $d \coloneqq \lceil .99 \Delta \rceil$.
    Thus we have reduced to the $d$-regular model of random bipartite graphs, where the claim we want to prove is standard, relying on the inequality
    \[
        .99\Delta > \frac{h_2(\tfrac{1}{\XConst\Delta}) + h_2(\tfrac{.8 \Delta}{\XConst\Delta})}{h_2(\tfrac{1}{\XConst\Delta}) - \tfrac{.8}{\XConst} h_2(\tfrac{1}{.8\Delta})},
    \]
    (here $h_2(\cdot)$ is the binary entropy function), and on the assumption from \cref{eqn:Delta} that $\Delta \geq \beta \ln n$ for large constant~$\beta$; see, e.g.,~\cite{Chu79,Bas81}.
\end{proof}
\begin{corollary}                                       \label{cor:counique}
    Assume $B$ satisfies the conclusion of \Cref{prop:bounded-degree,prop:cohom-unique}.
    Let $S \subseteq [m]$ have $0 < |S| \leq \frac{1}{\XConst\Delta} m$.
    Say a variable vertex $j \in [n]$ is a \emph{counique neighbor of~$S$} if it neighbors exactly one vertex in~$S$.
    Then, the number of non-counique neighbors of $S$ is at most $.09 \Delta |S|$ and
    there exists some  $a^\bullet \in S$ for which more than $.81\Delta$ of its neighbors are counique neighbors.
    In particular, $a^\bullet$ has at least a~$.8$ fraction of its neighbors (a strict majority) being counique.
\end{corollary}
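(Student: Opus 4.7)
The plan is to carry out a simple double-counting of the edges between $S$ and its neighborhood in $[n]$, using the bounded-degree property from \Cref{prop:bounded-degree} as an upper bound and the expansion property from \Cref{prop:cohom-unique} as a lower bound. Write $N(S) \subseteq [n]$ for the neighborhood of $S$ and split it as $N(S) = C \sqcup \overline{C}$, where $C$ is the set of counique neighbors and $\overline{C}$ is the set of non-counique neighbors. Let $c = |C|$ and $nc = |\overline{C}|$.

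For the first claim, I would count edges in the bipartite subgraph induced by $S$ and $N(S)$ in two ways. On one hand, the total edge count is $\sum_{a \in S} \deg(a) \leq 1.01 \Delta |S|$ by \Cref{prop:bounded-degree}. On the other hand, each counique neighbor contributes exactly one edge and each non-counique neighbor contributes at least two, giving $c + 2\,nc \leq 1.01\Delta |S|$. Combined with $c + nc = |N(S)| \geq .9 \Delta |S|$ from \Cref{prop:cohom-unique}, subtracting yields $nc \leq .11 \Delta|S|$; a slightly sharper bookkeeping (using that $|S| \leq \frac{1}{\XConst\Delta} m$ is a very small fraction, so the expansion and degree bounds can be tightened a hair) lands the figure at the stated $.09\Delta |S|$.

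For the second claim, I would partition $C$ by the unique check in $S$ that each counique neighbor attaches to: $c = \sum_{a \in S} c_a$, where $c_a$ counts the counique neighbors of $a$. From the first claim, $c \geq .9 \Delta|S| - .09\Delta|S| = .81 \Delta |S|$, so the average of $c_a$ over $a \in S$ is at least $.81 \Delta$, and hence some $a^\bullet \in S$ satisfies $c_{a^\bullet} > .81 \Delta$ (strict, since we can take a vertex exceeding the average, or note that equality is non-generic). Since $\deg(a^\bullet) \leq 1.01\Delta$ by \Cref{prop:bounded-degree}, the counique fraction at $a^\bullet$ is at least $.81/1.01 > .8$, which is strictly bigger than one half and so certifies the majority statement at the end.

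The whole argument is essentially one of Markov-style averaging combined with the expansion/degree bounds, and I do not anticipate any serious obstacle; the only mildly delicate point is pinning down the constant $.09$ as opposed to the looser $.11$ that falls out of the cleanest double count, but this is a matter of sharpening the constants in \Cref{prop:bounded-degree,prop:cohom-unique} by a negligible amount (permitted since $\beta$ is a free large constant) rather than a structural issue with the proof.
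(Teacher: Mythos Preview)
Your approach is essentially identical to the paper's: the same token/double-counting argument ($|C|+2|C'|$ bounded above by the total degree from $S$, $|C|+|C'|$ bounded below by expansion) followed by averaging to find $a^\bullet$. You are in fact more careful than the paper on the constants: the clean double count indeed yields $|C'|\le .11\Delta|S|$, and the paper's own line ``$ .99\Delta|S|\ge |C|+2|C'|$'' uses the \emph{lower} degree bound where the \emph{upper} bound $1.01\Delta$ is what is needed, so the stated $.09$ is a slip; your proposed fix (nudging the constants in \Cref{prop:bounded-degree,prop:cohom-unique} via the free parameter~$\beta$) is the right way to recover it, and in any case only the qualitative ``$.8$-fraction'' conclusion is used downstream.
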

\begin{proof}
    Let every $a \in S$ give a token to each of its neighbors in~$[n]$.
    By \Cref{prop:bounded-degree}, at least $.99\Delta|S|$ tokens are given out.
    Every vertex in the set $C$ of counique neighbors gets one token, and every vertex in the set $C'$ of non-unique neighbors gets at least two.
    Thus $.99\Delta |S| \geq |C| + 2|C'|$.
    But $|C| + |C'| \geq .9 \Delta |S|$, by \Cref{prop:cohom-unique}.
    Thus $|C'| \leq .09\Delta|S|$, and so $|C| \geq .81 \Delta |S|$.
    We conclude that even on average, a vertex $a \in S$ gives out at least $.81 \Delta$ tokens to counique neighbors.
\end{proof}

\begin{proposition}                                     \label{prop:lin-indep}
    Except with probability $O(1/n^{100})$, the parity check matrix of~$\bB$ is of full rank.
\end{proposition}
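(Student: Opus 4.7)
My plan is to view the $m \times n$ parity check matrix of $\bB$ as having rows $\bdry^\transp a$ for $a \in [m]$, and to argue that it has full row rank by showing that no non-empty subset of rows sums to zero in $\FF^n$. I will union-bound over non-empty $S \subseteq [m]$, splitting by $s = |S|$ into a ``small'' range $s \leq m/(\XConst \Delta)$ and a ``large'' range $s > m/(\XConst \Delta)$.

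The small range is handled by work already done in this section. Conditioned on the $O(n^{-100})$-probable events of \Cref{prop:bounded-degree,prop:cohom-unique}, \Cref{cor:counique} guarantees a counique neighbor of any such $S$, so $\sum_{a \in S} \bdry^\transp a$ has weight at least~$1$ and is therefore non-zero deterministically. No further probability budget is spent in this range.

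For the large range, I use a direct probabilistic calculation. By the independence of the check neighborhoods, each coordinate $j \in [n]$ of $\sum_{a \in S}\bdry^\transp a$ is $0$ with probability $q_s := \tfrac{1 + (1-2p)^s}{2}$, where $p = \Delta/n$, independently across $j$; hence $\Pr[\sum_{a \in S}\bdry^\transp a = 0] = q_s^n$. I then split into two sub-ranges. For $m/(\XConst \Delta) \leq s \leq n/(2\Delta)$ (so that $ps \leq \tfrac12$), the elementary bounds $(1-2p)^s \leq e^{-2ps}$ and $1 - e^{-2ps} \geq ps$ give $q_s \leq e^{-ps/2}$, hence $q_s^n \leq e^{-\Delta s/2}$; combined with $\binom{m}{s} \leq (em/s)^s$ and \cref{eqn:Delta}, the per-$s$ contribution is $\exp(s \ln(em/s) - \Delta s/2) = \exp(-\Omega(s\Delta))$, and summing over $s$ in this range yields $\exp(-\Omega(m/\XConst)) \ll n^{-100}$. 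For $s > n/(2\Delta)$ (so $ps > \tfrac12$), $q_s$ is strictly below $\tfrac{1 + e^{-1}}{2}$ and tends rapidly to $\tfrac12$; here I apply the entropy bound $\binom{m}{s} \leq 2^{m H_2(s/m)}$ with $m = \tfrac{3n}{4}$, obtaining a per-$s$ bound of $2^{n[(3/4) H_2(s/m) + \log_2 q_s]}$, and a short case check (distinguishing $s/m$ very small from $s/m$ bounded away from $0$) confirms the exponent is bounded above by a negative constant uniformly over $s$ in the range.

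The main technical obstacle is managing the transition near $s \approx n/\Delta$, where neither the ``Taylor'' estimate $q_s \approx 1 - ps/2$ nor the ``saturated'' estimate $q_s \approx \tfrac12$ is tight. This is resolved by taking the universal constant $\beta$ in \cref{eqn:Delta} sufficiently large: a large $\beta$ simultaneously makes $\Delta s / 2$ dominate $s \ln(em/s)$ throughout the first sub-range, and forces $\log_2 q_s$ close to $-1$ before $H_2(s/m)$ grows appreciably in the second, after which summing the at most $m$ per-$s$ bounds yields the desired $O(n^{-100})$ total failure probability.
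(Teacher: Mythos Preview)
Your proposal is correct and follows the same overall strategy as the paper --- a union bound over nonempty $S \subseteq [m]$, using that $\Pr[\sum_{a\in S}\bdry^\transp a = 0] = q_s^n$ with $q_s = \tfrac12 + \tfrac12(1-2\Delta/n)^s$, split by the size of~$S$ --- but the details differ in two places worth noting.

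For small $|S|$ you reuse the expansion work (\Cref{prop:bounded-degree,prop:cohom-unique}, \Cref{cor:counique}) to get a counique neighbor and hence a deterministically nonzero sum; the paper instead just runs the probabilistic estimate $q_w^n \leq n^{-101w}$ all the way down to $w=1$ and sums $\sum_w \binom{m}{w} n^{-101w} \leq (1+n^{-101})^m - 1$. Your reuse is pleasant but buys nothing quantitatively. For large $|S|$ the paper is noticeably cleaner than your treatment: it splits at $w = n/\Delta$ (not $n/(2\Delta)$), so that $q_w \leq \tfrac12 + \tfrac12 e^{-2} \leq 2^{-0.8}$, and then the crude bound $\sum_w \binom{m}{w} \leq 2^m = 2^{3n/4}$ already gives $2^m \cdot 2^{-0.8n} = 2^{-0.05n}$ with no entropy bound or case check needed. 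Your split at $n/(2\Delta)$ yields only $q_s \leq 2^{-0.548}$, which is why you were forced into the two-case analysis of $(3/4)H_2(s/m) + \log_2 q_s$; that analysis is valid (and does rely on $\Delta$ being large, which you correctly invoke via~$\beta$), but it is extra work that a better split point avoids.
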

\begin{proof}
    Let $\bv_a \in \F_2^n$ be the indicator for the neighborhood of check vertex~$a \in [m]$.
    We wish to show for all $\emptyset \neq A \subseteq [m]$ that $\sum_{a \in A} \bv_a \neq 0$.
    So fix such an~$A$ of cardinality~$w \neq 0$.
    For $i \in [n]$, the $i$th coordinate of $\sum_{a \in A} \bv_a$ is distributed as $\text{Binomial}(w, \frac{\Delta}{n})$ modulo~$2$.
    The event that this is~$0$ has probability
    \[
        q_w \coloneqq \tfrac12 + \tfrac12(1 - \tfrac{2\Delta}{n})^w
    \]
    and these events are independent across $i \in [n]$.
    For $w \leq \frac{n}{\Delta}$ it holds that $q_w \leq \exp(-\frac{w\Delta}{2n})$, and hence~$q_w^n$ (which is the probability of $\sum_{a \in A} \bv_a = 0$) is at most $\exp(-\frac{\Delta}{2})^w \leq 1/n^{101w}$, the last inequality provided the constant~$\beta$ in \cref{eqn:Delta} is large enough.
    On the other hand, if $w \geq \frac{n}{\Delta}$ then we have $q_w \leq \tfrac12 + \tfrac12\exp(-\tfrac{2\Delta w}{n}) \leq \tfrac12 + \tfrac12 e^{-2} \leq 2^{-.8}$, and hence $q_w^n \leq 2^{-.8n}$.
    Taking a union bound over all~$A$, we conclude that the probability of $\bB$'s parity check matrix having a nontrivial linear dependence is at most
    \[
        \sum_{1 \leq w \leq \frac{n}{\Delta}} \tbinom{m}{w}/n^{101w} +
        \sum_{\frac{n}{\Delta} \leq w \leq m} 2^{-.8n} \leq \parens*{(1+1/n^{101})^{m}-1} + 2^m 2^{-.8n} = O(1/n^{100}),
    \]
    where the last inequality used $m = \frac34 n$.
\end{proof}

\begin{proposition}                                     \label{prop:min-dist}
    Except with probability $O(1/n^{100})$, the code $\bB$ has minimum distance at least~$.2n$.
\end{proposition}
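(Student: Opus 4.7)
The plan is to closely follow the proof of \Cref{prop:lin-indep}, but with the roles of variable and check vertices swapped. Let $H_\bB \in \FF^{m \times n}$ denote the random parity-check matrix of $\bB$, whose entries are i.i.d.\ $\text{Bernoulli}(\Delta/n)$. For any fixed nonzero $x \in \FF^n$ of Hamming weight~$w$, the $m$ entries of $H_\bB x \in \FF^m$ are mutually independent, each being the mod-$2$ sum of $w$ i.i.d.\ $\text{Bernoulli}(\Delta/n)$ bits. Hence $\Pr[H_\bB x = 0] = q_w^m$ for the same $q_w \coloneqq \tfrac12 + \tfrac12(1-2\Delta/n)^w$ as in \Cref{prop:lin-indep}, and a union bound over all nonzero $x$ of weight at most $0.2n$ yields
\[
    \Pr[\bB \text{ has a nonzero codeword of weight} \leq 0.2n] \;\leq\; \sum_{w=1}^{\lfloor 0.2 n\rfloor} \binom{n}{w}\,q_w^m.
\]

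I would show this sum is $O(1/n^{100})$ by splitting into three ranges. For \emph{small} $w$ ($1 \leq w \leq n/\Delta$), exactly as in \Cref{prop:lin-indep}, $q_w \leq \exp(-w\Delta/(2n))$ gives $q_w^m \leq \exp(-3w\Delta/8)$; combined with $\binom{n}{w} \leq n^w$ and $\Delta \geq \beta \ln n$ for $\beta$ sufficiently large, each term is at most $1/n^{101 w}$. For \emph{medium} $w$ ($n/\Delta \leq w \leq cn$ for a small absolute constant $c$), the bound $q_w \leq \tfrac12(1+e^{-2})$ yields $q_w^m \leq 2^{-\Omega(n)}$, and $\binom{n}{w} \leq 2^{n h_2(c)}$ contributes only $2^{o(n)}$ once $c$ is chosen small. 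For \emph{large} $w$ ($cn \leq w \leq 0.2n$), I would use the sharper estimate $q_w^m \leq 2^{-m}\exp(m(1-2\Delta/n)^w)$: taking $\beta$ large enough that $c\Delta \geq 2\ln n$, we get $m(1-2\Delta/n)^w \leq m e^{-2\ln n} = O(1/n)$, so $q_w^m \leq O(2^{-m})$; together with the entropy bound $\binom{n}{w} \leq 2^{n h_2(0.2)}$, each term is at most $O(2^{n h_2(0.2) - m}) = O(2^{-0.028\, n})$.

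The delicate point, and the origin of the constant $0.2$ in the statement, is the numerical inequality $h_2(0.2) < 3/4 = m/n$: the resulting gap of roughly $0.028$ is precisely the slack needed to defeat the union bound in the large-$w$ regime. Everything else reduces to choosing $\beta$ large enough to simultaneously kill the small- and medium-$w$ contributions, which is compatible with the constants already required by \Cref{prop:bounded-degree,prop:cohom-unique,prop:lin-indep}; summing the three pieces then gives the claimed failure probability $O(1/n^{100})$.
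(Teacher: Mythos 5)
Your proof is correct, and it is a proper execution of the ``standard exercise'' that the paper merely alludes to (the paper's own ``proof'' of \Cref{prop:min-dist} consists of a pointer to Gallager's work and the remark that the argument follows the lines of \Cref{prop:lin-indep} and \Cref{lem:key-light}, without giving details). Your three-range split is the right structure, and you correctly put your finger on why a naive two-range imitation of \Cref{prop:lin-indep} would fail: for $w$ near $0.2n$, the crude bound $q_w \leq 2^{-0.8}$ gives $\binom{n}{w} q_w^m \leq 2^{n h_2(0.2)} \cdot 2^{-0.8m} = 2^{(0.722 - 0.6)n}$, which diverges. The sharper estimate $q_w^m \leq 2^{-m} e^{m(1-2\Delta/n)^w}$, showing that $q_w^m$ is essentially $2^{-m}$ once $w$ is a constant fraction of $n$, is exactly the additional ingredient the large-$w$ range needs, and then the slack $m/n - h_2(0.2) \approx 0.028 > 0$ closes the union bound — which is precisely the inequality ``$h_2(\delta) \leq 3/4$'' that the paper's one-sentence remark gestures at. (Minor nit: in the large-$w$ range you only need $c\Delta \geq \ln n$ to get $m(1-2\Delta/n)^w \leq m/n^2 = O(1/n)$; your stated condition $c\Delta \geq 2\ln n$ gives the even stronger $m/n^4$, but either suffices.) Your small-$w$ and medium-$w$ ranges are handled exactly as in \Cref{prop:lin-indep}, and the final union over $O(n)$ values of $w$ contributes only a polynomial factor against exponentially small terms, so the $O(1/n^{100})$ conclusion follows.
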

\begin{proof}
    Indeed, the number~$.2$ can be replaced with any $\delta$ such that $h_2(\delta) \leq \frac34$; in other words, with high probability~$\bB$ achieves the Gilbert--Varshamov bound.
    This is a standard property of random LDPC codes with~$\Delta \to \infty$.
    Gallager showed it in a slightly different model of~$\Delta$-regular random LDPC codes; the proof in our case is a standard exercise along the lines of \Cref{prop:lin-indep} and \Cref{lem:key-light}.
\end{proof}

\subsection{Cohomology representative weight}
\label{cohomd}

We have not yet specified the twists,
but we are already able to lower-bound the weight of a cohomology representative.
Recall \Cref{prop:cohorep} implies that the least weight cohomology represenative
has its weight \emph{upper}-bounded by~$\mF$.
We start with a definition, and then give the lower bound.
\begin{definition}  \label{def:shadow}
    In brief, the \emph{shadow} of a chain $e \in \calE$ is the set of base cells on which~$e$ has support.
    More precisely, given a bundle $1$-chain $e$, write it as a sum $h+v$ of horizontal and vertical chains, 
    with $h = \sum_b b \otimes f^0_b$ and $v = \sum_a a \otimes f^1_a$ where $a$ runs over base $0$-cells and $b$ over base $1$-cells.
    Then the shadow of the vertical part of $e$ is the set $\{a  : f^1_a \neq 0\}$,
    and the shadow of the horizontal part of $e$ is the set $\{ b : f^0_b \neq 0 \}$.
    Similarly given a bundle $0$-chain $e$ with $e=\sum_a a \otimes f^0_a$, its shadow is $\{a : f^0_a\neq 0\}$.
    We write $|e|_{\mathrm{vsw}}$ for the \emph{vertical shadow weight} of a bundle $1$-chain $e$; 
    i.e., the cardinality of its shadow of the vertical part.
    For a $0$-chain $e$, we simply write $|e|_\mathrm{sw}$ for the shadow weight.
\end{definition}

\begin{lemma}
\label{cohomweight}
Assume the base code~$B$ satisfies the conclusions of \Cref{prop:bounded-degree,prop:cohom-unique}.
Then for \emph{any} choice of twists, the following holds:
Let $r$ be any nontrivial representative of $H^1(\cE)$ and write it as a sum $h+v$ of horizontal and vertical chains.
Then either $|h| \geq \mF/2$, or $|v|_{\mathrm{vsw}} \geq \Omega(n/\Delta)$.
In particular, $|r| = |h| + |v| \geq \Omega(\mF + n/\Delta)$, which is $\Omega(\mF/\Delta)$ assuming $n \geq \mF$.
\end{lemma}
\begin{proof}
Let $r$ be any nontrivial representative of $H^1(\cE)$.
Using the basis of \Cref{prop:cohorep}, we may write $r = x \otimes F_0 + \bdry^\transp u$ where $x$ is a nontrivial representative of $H^1(\cB)$ and where $u$ is a bundle $0$-chain.
This expression is not unique; if we toggle \mbox{$u \mapsto u + a \otimes F_0$} for any base $0$-cell $a$ and simultaneously change $x \mapsto x + \bdry^\transp a$, then the overall change in $x \otimes F_0 + \bdry^\transp u$ is $(\bdry^\transp a)\otimes F_0 + \bdry^\transp (a \otimes F_0)$, which vanishes regardless of twists because $F_0$ is invariant under any fiber automorphism.
Thus we may assume that in the expansion $u = \sum_a a \otimes f_a$ (where each $a$ is a base $0$-cell and each $f_a$ is a fiber $0$-chain), each $f_a$ has Hamming weight at most~$\mF/2$.
Having done this, the shadow of the vertical part of $r$ and the shadow of~$u$ coincide (as no $f_a$ equals~$F_0$).
Writing $S$ for this shadow, we have
\begin{equation}
r
=
x \otimes F_0 + \bdry^\transp u
= \underbrace{x \otimes F_0 + \sum_{a \in S, b \in \bdry^\transp a} b \otimes \varphi(b,a)^{-1} f_a}_{h} +
\underbrace{\sum_{a \in S} a \otimes \bdry^\transp f_a}_{v} \label{eq:cohor}
\end{equation}
where $h \in \calE_{1,0}$ is horizontal and $v \in \calE_{0,1}$ is vertical.
Here $\varphi(b,a)$ are some twists.
If $|S| \geq \frac 1 {\XConst\Delta} m$ then we are done easily: for every $a \in S$ we have $0 \neq |f_a| \neq \mF$ and hence $|\bdry^\transp f_a| \ge 2$; thus $|r| \geq |v| \geq 2|S| \geq \Omega(\mF/\Delta)$, as needed.

Thus it remains to handle the case that $|S| \leq \frac 1 {\XConst\Delta} m$.
In this case we claim that in fact $|h| \geq \mF/2$, which is more than sufficient to complete the proof.
First, if $S = \emptyset$ then $|r| = |h| = |x| \mF \ge \mF$, using the fact that $r$~is a \emph{nontrivial} cohomology representative, and the claim is established.

Otherwise, $0 < |S| \leq \frac{1}{\XConst\Delta}m$, and since $B$~satisfies the conclusion of \Cref{prop:cohom-unique}, the number of neighbors (base $1$-cells) of $S$ is at least $.8 \Delta |S|$.
As every $a \in S$ has between $.99\Delta$ and $1.01\Delta$ neighbors (\Cref{prop:bounded-degree}), it follows that there must be some base $0$-cell $a^\bullet \in S$ such that more than half of the $1$-cells in the coboundary of $a^\bullet$ are not in the coboundary of any other $0$-cell of~$S$.
(This relies on $.8 > 1.01\cdot\frac34$ and also $|S| \neq 0$.)
Let $C$ be this set of \emph{counique}-neighbor base $1$-cells:
$C = \bdry^\transp a^\bullet \setminus \bigcup_{a \in S \setminus \{a^\bullet\}} \bdry^\transp a$.
We now consider whether or not $C$ overlaps with~$x$.

If~$C \cap x$ contains a base 1-cell $b$,
then the number of all horizontal cells of $r$ over $b$ is already at least~$\mF/2$,
because $b \otimes \varphi(b,a)^{-1}f_a$ has at most~$\mF/2$ cells.

Otherwise, if $C \cap x = \emptyset$, then dropping $a^\bullet$ from $S$ would reduce $|h|$,
since less than half of the horizontal part of $\bdry^\transp a^\bullet$ was canceling in \cref{eq:cohor}
by the choice of $a^\bullet$.
This dropping yields a cohomologous representative $r'$ of $H^1(\calE)$
with a lighter horizontal part and where the shadow~$S$ of the vertical part still satisfies $|S| \leq \frac 1 {\XConst\Delta} m$.
Repeating this argument, we either come to a representative whose horizontal weight is at least~$\mF/2$,
or else we reduce to the case of~$S = \emptyset$ where it was already shown that the horizontal part has weight at least~$\mF$.
Either way, we have established the claim that the original~$r$ had $|h| \geq \mF/2$.
\end{proof}

\subsection{Twists}

We now choose the twists.
It is worth making a mental shift at this point.
The twist is defined by an automorphism $\varphi(b^1,v)$
which is a function of a $1$-cell $b^1$ and some $0$-cell $v\in \bdry b^1$.
Mentally, up to this point, we have tended to think of it as ``for each $1$-cell $b^1$, for each $v$ in $\bdry b^1$" there is a twist, but now it is worth thinking instead ``for each $0$-cell $v$, for each $b^1$ in $\bdry^\transp v$" there is a twist.  This of course is no difference mathematically but is an easier mental picture.
In the language of the base code, it means that for each check of the base code, for each bit in the check, there is a twist.

All twists will be chosen to be integer multiples of $\ell$.
Indeed, we will only have $\kreg = \Theta(\log n)$ distinct choices of twists (where the precise~$\kreg$ will be specified later).
Informally, we will partition the checks of the base code into $\kreg$ ``types''; for each type, we will pick a single twist by some random multiple of $\ell$, and for each bit in each check, we will either twist by that multiple or by~$0$, the choice of which again being random.

Formally,
let us make the following definition:
\begin{definition}
    Given a base code~$B$, we say it is \emph{partitioned} if:
    \begin{itemize}
        \item its check vertices are partitioned into $\kreg$ sets $T_1, \dots, T_\kreg$ of equal size~$m/\kreg$ (assumed to be an integer), one for each ``type'';
        \item and, the neighborhood $\bdry^\transp a$ of each check $a \in [m]$ is partitioned into two sets, $\heads{a}$ and $\tails{a}$.
    \end{itemize}
\end{definition}

In our construction, in addition to assuming that the base code $\bB$ has random density-$\frac{\Delta}{n}$ checks as in \Cref{sec:random-base}, we also assume:
\begin{itemize}
    \item the partition into types is the trivial one, $T_1 = \{1, \dots, m/\kreg\}$, $T_2 = \{m/\kreg + 1, \dots, 2m/\kreg\}$, etc.\ (in fact this doesn't matter since $\bB$ is random in the first place);
     \item and, the neighborhood  $\bdry^\transp a$ of each check $a \in [m]$ is partitioned uniformly at random into $\heads{a}$ and $\tails{a}$.
\end{itemize}

Finally, for each set $T_\tau$ we choose the twists $\varphi_\tau$ so that a certain graph defined later is a good spectral expander.
It will be shown that choosing the
$\varphi_{\tau}$ uniformly at random from \mbox{$\ell,2\ell,\ldots,(\ell-1)\ell$} 
will give the desired spectral expansion with high probability (which can then certified if desired).

Then, given this partitioning we define for any pair $b^1,a$ with $a\in \partial b^1$,
the twist $\varphi(b^1,a)$ to be given as follows:
check node $a$ is in some set $T_\tau$.
If $b^1$ is in $\heads{a}$ then $\varphi(b^1,a)=0$.
Else if $b^1$ is in $\tails{a}$ then $\varphi(b^1,a)=\varphi_\tau$.

\subsection{Homology representative weight}
\label{homd}
We now prove bounds on the least weight representative of homology.
We will reduce this problem to proving certain properties of a classical code
which we study in \Cref{codingboundsection}.

We use the assumption that the twists are multiples of $\ell$.
Given a horizontal $1$-cell $b \otimes f$,
we say that $f \in \{0,1,\ldots,m-1\}$ is the \emph{fiber position} of $b \otimes f$.
Let us first simplify homology representatives:
\begin{lemma}\label{lem:sliding}
For any $1$-homology representative of weight $w$,
there exists a homologous representative $r$ of weight at most $w$
such that the fiber position of any nonzero horizontal cell in the support of~$r$ is \mbox{$0 \bmod \ell$}.
\end{lemma}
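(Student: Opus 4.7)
The plan is to modify an arbitrary representative through a sequence of ``shift'' moves, each of which adds a boundary $\partial(b \otimes g)$ for some base $1$-cell $b$ and fiber $1$-chain $g$. Such a move changes the horizontal part over $b$ by $b \otimes \partial g$ and simultaneously adds vertical contributions $\varphi(b, a_i) g$ at each of the two base endpoints $a_1, a_2$ of $b$. Because every twist $\varphi(b, a_i)$ is a multiple of $\ell$, any move localized within a single $\ell$-block of the fiber produces vertical cells also contained within a single $\ell$-block, enabling localized weight accounting.

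I would work with a minimum-weight representative $r^{*} = h^{*} + v^{*}$ in the given homology class, which automatically satisfies $|r^{*}| \leq w$, and argue by contradiction that $r^{*}$ already meets the alignment condition. Suppose $b^{\bullet} \otimes f^{\bullet} \in h^{*}$ where $f^{\bullet}$ has fiber position $p^{\bullet}$ with offset $o = p^{\bullet} \bmod \ell \in \{1, \ldots, \ell - 1\}$. The cycle condition $\partial r^{*} = 0$ restricted to each base $0$-cell $a$ reads $\partial v^{*}_{a} = \sum_{b \in \partial^{\transp} a} \varphi(b,a) h^{*}_{b}$; consequently, for each base endpoint $a_i$ of $b^{\bullet}$, the contribution $\varphi(b^{\bullet}, a_i) f^{\bullet}$ appears (possibly canceled by other terms) in $\partial v^{*}_{a_i}$. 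Informally, each misaligned horizontal cell imposes a boundary condition on $v^{*}$ at both base endpoints of its underlying base edge.

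The key step is to slide $f^{\bullet}$ from position $p^{\bullet}$ to the aligned position $p^{\bullet} - o$ by adding $\partial(b^{\bullet} \otimes g)$ where $g$ is the fiber path from $p^{\bullet} - o$ to $p^{\bullet}$. This swaps the misaligned horizontal cell for an aligned one and toggles $o$ consecutive vertical edges at each $a_i$, along the path $\varphi(b^{\bullet}, a_i) + [p^{\bullet} - o,\, p^{\bullet}]$. If both of these vertical paths are fully present in $v^{*}_{a_1}$ and $v^{*}_{a_2}$, the move reduces the total weight by at least $2o$, contradicting minimality of $r^{*}$. Otherwise the vertical paths are partially missing, and the minimality constraints coming from \emph{all} shorter single-edge shifts, together with the cycle-condition boundary values forced at both endpoints, propagate enough structural information about $v^{*}$ near $a_1$ and $a_2$ that a case-by-case analysis produces either a direct contradiction with minimality or an alternative weight-preserving move that strictly reduces the number of misaligned cells (for instance, sliding in the other direction, or combining with an analogous slide over a neighboring base edge meeting $a_i$).

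The main obstacle is exactly this ``otherwise'' case-analysis: the cycle condition constrains only the boundary of $v^{*}_a$, not its specific internal configuration, so several misaligned horizontal cells over base edges sharing a common base $0$-cell can compete for the same vertical segments of $v^{*}_a$. I expect the rigorous proof to proceed by induction on a weighted misalignment measure such as $\sum_{b \otimes f \in h^{*}} \min(p \bmod \ell,\, \ell - p \bmod \ell)$, processing one misaligned cell at a time in an order that respects the $\ell$-block locality forced by having twists be multiples of $\ell$, and maintaining the minimum-weight invariant throughout. The $\ell$-multiple structure of the twists is what guarantees that each inductive step involves only a bounded, block-local portion of $v^{*}$, keeping the bookkeeping tractable.
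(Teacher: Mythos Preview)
Your proposal has two concrete problems.

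First, a factual error: you assume each base $1$-cell $b$ has exactly two base $0$-cell endpoints $a_1, a_2$. In this paper the base is \emph{not} a graph; each base $1$-cell has $\Theta(\Delta) = \Theta(\log^2 n)$ cells in its boundary (this is stated in the introduction and in \Cref{sec:random-base}). So a single slide $\partial(b^{\bullet} \otimes g)$ alters the vertical part at polylogarithmically many base $0$-cells simultaneously, not two. Your ``both vertical paths fully present'' dichotomy therefore becomes a $2^{\Theta(\Delta)}$-way case split, and the informal propagation argument you sketch does not survive this.

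Second, and more fundamentally, the paper does \emph{not} slide individual horizontal cells; it slides entire connected clusters. Define a graph $\sanH$ on the horizontal cells of the representative, linking two cells when their boundaries overlap. Because all twists are multiples of $\ell$, linked cells have fiber positions agreeing modulo~$\ell$, so every connected cluster has a single common offset. Now translate a whole cluster $C$ by $\pm 1$ in the fiber: the horizontal weight is unchanged, and the vertical correction needed to restore closedness involves one vertical cell at each $0$-cell of $\partial\!\left(\sum_{h \in C} h\right)$. The resulting total weight is a piecewise-linear function of the cumulative shift, with constant slope until $C$ merges with another cluster. Hence one can always slide in a non-increasing direction, merge clusters, and repeat until a single cluster remains---which is then rigidly translated to offset~$0$. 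This sidesteps your ``otherwise'' case analysis entirely: by moving clusters rather than cells, internal cancellations are automatic and only the cluster boundary matters.

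Your attempt to argue that a minimum-weight representative is \emph{already} aligned may simply be false; the lemma only claims an aligned representative of no greater weight \emph{exists}, and the paper's construction produces one explicitly rather than characterizing minimizers.
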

\begin{proof}
To prove this, we use a method we call ``sliding.''
Consider a graph%
\footnote{This graph $\sanH$ has nothing to do with any other graph in this paper.}
$\sanH$
whose nodes correspond to the horizontal cells of a given homology representative $s$
and with a link between two horizontal cells if and only if their boundaries overlap.
The fiber positions of a pair of linked horizontal cells differ by $0 \bmod \ell$
because the twists are multiples of $\ell$.

If $\sanH$ is connected, then we can translate $s$ along the fibers,
i.e., all fiber cells $f^0_j$ and $f^1_i$ of $s = \sum_j b_j \otimes f^0_j + \sum_i a_i \otimes f^1_i$
are replaced by shifted fiber cells $f^0_j + y$ and $f^1_i +y$ with a common $y = \pm 1 \in \ZZ_\nF = \ZZ_\mF$.
If we keep translating the representative
until the fiber position of any horizontal cell is $0 \bmod \ell$,
then we have the claim of the lemma.

If $\sanH$ has more than one connected cluster (a maximal connected subset of nodes),
then we ``slide'' any one cluster $C$ of horizontal cells in two steps as follows.
First, we translate all the horizontal cells $h$ in $C$
by shifting their fiber components by $\pm 1$ as above.
Second, we add vertical cells on the fibers over base $0$-cells
where the boundary $\partial \sum_{h \in C} h$ is supported,
so that the overall chain is still closed.
We need one and only one vertical cell around each $0$-cell of $\partial \sum_{h \in C} h$.
This sliding is clearly a modification of the original representative $s$
by the boundary of a $2$-chain (a $Z$-stabilizer).
Every added vertical cell in the second step contributes to $\pm 1$
to the weight of vertical part of the homology representative.
In fact, the total weight is a piecewise linear function of the amount that the cluster is slid,
with the slope of this function constant until the cluster becomes connected to another cluster,
at which point we modify the graph $\sanH$ by adding links.

Hence, we end up with a single cluster that contains all the horizontal cells of
the slid homology representative, and we finish by an overall translation as before.
\end{proof}

An intuitive picture for this sliding is as follows: there are $1$-chains in the fiber whose endpoints are such as to cancel the boundary of $h$; we can think of these $1$-chains as ``strings" that are ``pulling" on $h$; we slide in the direction in which the strings pull most strongly (or pick a direction arbitrarily if there is no preferred direction).
Once the cluster becomes connected to another cluster, we slide that combined cluster, and so on.
Continue this until there is only a single connected cluster.  All the vertices of the cluster must the same fiber position mod $\ell$; finally, slide that cluster until the claim is obeyed.

The sliding simplifies the problem of the weight of homology cycles as follows.
Note that the base code's minimum distance is $\Theta(n)$ with high probability; see \Cref{prop:min-dist}.
\begin{lemma}
\label{wasslid}
Assume that the base code has distance $\Theta(n)$.
The weight of a nontrivial homology representative is lower bounded by the minimum of the following problem.

Consider a chain
$h=\sum_{b} b \otimes f_b$
consisting of horizontal cells whose fiber positions are $0 \bmod \ell$.
For any $0$-cell $a$ in the base, say that $a$ \emph{has an error}
if $\partial h$ is nonvanishing in the fiber over $a$.
Then, minimize $|h|+\ell |\partial h|_\mathrm{sw}$,
subject only to the requirement that there are $\Theta(n)$ different $b$ such that $f_b \neq 0$.
\begin{proof}
Write a nontrivial $1$-homology cycle $r$ as $r = h + v$ where $h$ is the horizontal part and $v$ the vertical part.
Let $h = \sum_b b \otimes f_b$, where the sum is over base $1$-cells $b$ and $f_b$ is a fiber $0$-chain.
The bundle projection $\Pi(h)$ is a nontrivial element of $H_1(\cB)$ by \Cref{lem:H1iso}.
Since the base code has distance $\Theta(n)$,
then the set of $b$ such that $f_b \neq 0$ has cardinality $\Theta(n)$.

Slide as above.
Then, the weight $|v|$ of the vertical part of $r = h + v$ is at least $\ell$ times the number of base 0-cells that have an error;
here we use that after sliding all the nonzero ``strings" in $v$ must have length at least $\ell$.
\end{proof}
\end{lemma}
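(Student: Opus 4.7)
The plan is to use \Cref{lem:sliding} to normalize a nontrivial $1$-homology representative $r$, and then read off a feasible candidate $h$ for the optimization problem whose cost bounds $|r|$ from below. Concretely, I would first invoke \Cref{lem:sliding} to replace $r$ by a homologous $r'$ with $|r'| \leq |r|$ whose horizontal cells all sit at fiber positions $0 \bmod \ell$, and then decompose $r' = h + v$ with $h = \sum_b b \otimes f_b$ horizontal and $v = \sum_a a \otimes f_a^1$ vertical.

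To show that $h$ meets the feasibility constraint, I would apply \Cref{lem:H1iso}, whose hypotheses are already verified for our circle fiber and random base. Since the bundle projection $\Pi$ induces an isomorphism on $H_1$ and annihilates the vertical part, $\Pi(h) = \Pi(r')$ is a nontrivial element of $H_1(\calB) = \ker \partial^\calB$, i.e., a nonzero codeword of the base code. The assumption that the base code has minimum distance $\Theta(n)$ then forces the number of $b$ with $f_b \neq 0$ to equal $|\Pi(h)| = \Theta(n)$, as required for feasibility.

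For the vertical weight, I would use closedness $\partial v = \partial h$: for each base $0$-cell $a$ that has an error, the fiber $1$-chain $f_a^1$ must be nonzero, with $\partial f_a^1$ equal to the nonzero error pattern at $a$. Because every horizontal cell in $h$ sits at a fiber position that is a multiple of $\ell$ and every twist $\varphi(b,a)$ is a multiple of $\ell$, the boundary points of $f_a^1$ all lie at positions $0 \bmod \ell$ in the cycle graph $\calF$ on $\nF = \ell^2$ vertices. The shortest arc between any two distinct such positions has length at least $\ell$, so any nonzero $1$-chain in $\calF$ whose boundary is supported on this sublattice has weight at least $\ell$. Summing over $a$ yields $|v| \geq \ell \cdot |\partial h|_{\mathrm{sw}}$, and hence $|r| \geq |r'| = |h| + |v| \geq |h| + \ell \cdot |\partial h|_{\mathrm{sw}}$, which is bounded below by the minimum value of the stated optimization problem.

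The main obstacle I anticipate is the per-fiber bound $|f_a^1| \geq \ell$. One must be careful that after the sliding of \Cref{lem:sliding} the boundary points of $f_a^1$ in each fiber really are confined to multiples of $\ell$ — which uses both that the horizontal cells have fiber position $0 \bmod \ell$ and that every twist is a multiple of $\ell$ — and that the graph distance in the cycle graph between any two distinct multiples of $\ell$ in $\ZZ_{\nF}$ is at least $\ell$, which is where the choice $\nF = \ell^2$ enters. Everything else is bookkeeping on top of \Cref{lem:sliding}, \Cref{lem:H1iso}, and the base-code distance bound from \Cref{prop:min-dist}.
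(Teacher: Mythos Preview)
Your proposal is correct and follows essentially the same approach as the paper's proof: slide via \Cref{lem:sliding}, use \Cref{lem:H1iso} plus the base-code distance for feasibility, and use that the boundary points of each vertical fiber chain lie on the $\ell$-sublattice of the cycle $\ZZ_{\ell^2}$ to get $|v| \geq \ell\,|\partial h|_{\mathrm{sw}}$. One tiny imprecision: the number of $b$ with $f_b \neq 0$ is \emph{at least} $|\Pi(h)|$ rather than equal to it (since $\Pi$ only detects those $b$ with $|f_b|$ odd), but this is the correct inequality direction for the feasibility constraint.
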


This lemma implies a significant simplification of the problem:
it suffices to consider just horizontal chains.
\Cref{thm:ltc} proven below implies that any horizontal $1$-chain $h$ of
a sufficiently small weight $|h|$
compared to $n\ell/\Delta$ will have at least $\Omega(|h|)$ errors.
Hence for any nontrivial homology representative $r = h + v$
with the horizontal part $h$ and the vertical part $v$
we have either $|h| \geq n\ell/\Delta$
or $|v| \geq \Omega(n \ell)$ where the latter case
is because we must have $|h|=\Omega(n)$.
Hence:
\begin{lemma}
\label{homweight}
Assume $m=\Theta(n)$.
With high probability,
the weight of any nontrivial representative of $H_1(\cE)$ is $\Omega(n\ell/\Delta)$.
\end{lemma}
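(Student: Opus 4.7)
My plan is to combine \Cref{wasslid} with an LTC-style estimate (\Cref{thm:ltc}, whose proof is the combinatorial heart of \Cref{codingboundsection}) and a simple two-case split on $|h|$.

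First, I would invoke \Cref{wasslid} to reduce the problem of bounding the weight of a nontrivial $H_1(\calE)$ representative to the minimization problem stated there: lower bound $|h|+\ell|\partial h|_{\mathrm{sw}}$ over horizontal chains $h = \sum_b b \otimes f_b$ whose fiber positions are all $0 \bmod \ell$, subject to the constraint that $\Theta(n)$ of the indices $b$ have $f_b \ne 0$. That reduction is legal because the base code $B$ has minimum distance $\Theta(n)$ with high probability by \Cref{prop:min-dist}. The constraint immediately supplies $|h| \geq \Theta(n)$.

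Next, I would apply \Cref{thm:ltc}, which promises that with high probability over the random base code and random twists there is a constant $c>0$ such that every horizontal $1$-chain $h$ with $|h| \leq c \cdot n\ell/\Delta$ satisfies $|\partial h|_{\mathrm{sw}} \geq \Omega(|h|)$. The lemma then follows by a direct case analysis on $|h|$. If $|h| \geq c\cdot n\ell/\Delta$, then $|h|+\ell|\partial h|_{\mathrm{sw}} \geq |h| \geq \Omega(n\ell/\Delta)$ trivially. If instead $|h| < c\cdot n\ell/\Delta$, combining \Cref{thm:ltc} with $|h|\geq\Theta(n)$ gives $\ell|\partial h|_{\mathrm{sw}} \geq \Omega(\ell |h|) \geq \Omega(n\ell)$, which comfortably exceeds the target. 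In either case the minimum in the problem of \Cref{wasslid} is $\Omega(n\ell/\Delta)$, and the claimed lower bound on homology weight follows.

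The genuinely hard part—essentially the entire technical core of this section—is the proof of \Cref{thm:ltc}. That is where the random choice of twists must be used, together with the expansion supplied by \Cref{prop:cohom-unique} and the random partitioning of each check's neighborhood, to force the shadow of $\partial h$ to be linearly large whenever $h$ is sufficiently light. The present lemma is essentially a bookkeeping wrapper that converts that LTC-style lower bound, via the sliding argument of \Cref{wasslid}, into a weight bound for representatives of $H_1(\calE)$.
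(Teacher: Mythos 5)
Your proposal is essentially the paper's own proof: reduce via \Cref{wasslid}, invoke \Cref{thm:ltc}, and split on whether $|h|$ exceeds a small constant times $n\ell/\Delta$. One technical nit worth flagging (present as well in the paper's exposition preceding the lemma): \Cref{thm:ltc} lower-bounds the number of \emph{violated parity checks} of the twist-graph code, which is the Hamming weight $|\partial h|$ of the boundary, not the shadow weight $|\partial h|_{\mathrm{sw}}$ appearing in the objective of \Cref{wasslid}. These can differ by up to a factor of $\ell$, and the naive chain of inequalities $|\partial h|_{\mathrm{sw}}\ge|\partial h|/\ell\ge .004|h|/\ell$ fed into the objective $|h|+\ell|\partial h|_{\mathrm{sw}}$ only yields $\Omega(n)$, which is far weaker than $\Omega(n\ell/\Delta)$. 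The gap is closed by observing that the sliding argument in the proof of \Cref{wasslid} actually establishes the sharper bound $|v|\ge\tfrac{\ell}{2}|\partial h|$: after sliding, within each fiber the boundary $\partial h$ is supported at multiples of $\ell$, so each of the $\tfrac12|\partial h|$ arcs that $v$ must contribute has length at least $\ell$. With that, your second case gives $|v|\ge\tfrac{\ell}{2}\cdot.004|h|\ge\Omega(n\ell)$, as you want. Apart from this shared imprecision in how \Cref{thm:ltc} is cited, your route is identical to the paper's, and your assessment of where the real work lies (\Cref{thm:ltc}) is correct.
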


\Cref{thm:ltc} is however phrased in terms of a classical code that we call a {\it twist graph code}.
This is the error correcting code whose bits are horizontal cells of $\cE$ and whose checks are obtained from $0$-cells of $\cE$ in the obvious way.
However, we will find it useful to explicitly redefine this code in terms of a graph that we call the twist graph in order to use expansion properties.

\section{Coding Bounds for Twist Graph Code}
\label{codingboundsection}
We now define the twist graph code.
This is simply a restatement of the checks on horizontal $1$-chains in more graph theoretic terms.
We are concerned with checks on horizontal $1$-chains which obey the condition of \Cref{wasslid} that
$h=\sum_{j} j \otimes f_j$ with
all $(f_j)_i=0$ unless $i=0\bmod \ell$.
So, throughout this section, we will regard these $1$-chains as bit strings of length $n\ell$ rather than of length $n \cdot n_F$.

\subsection{Twist graph and assumptions on twists}
We define the twist graph $\vec{\sch}$ as follows.  This is a directed graph, possibly with multi-edges (but without self-loops).
We have vertices chosen from $[\ell]$.
For each $t \in [\kreg]$, there is an edge from each vertex $i$ to \mbox{$i+\varphi_t/\ell \pmod \ell$}.
Note: the twist $\varphi_t$ is a multiple of $\ell$ so $\varphi_t/\ell \in \{1,2,\ldots,\ell-1\}$.

While the twist graph is obtained as described in the above paragraph,
we will in this section only use the following more general assumptions on $\vec{\sch}$:
\begin{definition}
    It is assumed we have a directed graph $\vec{\sch}$, with vertex set $U$ of cardinality~$\el$, and $\kreg$ ``types'' of directed edges.
    This graph may have multi-edges but no self-loops.
    We assume each vertex in $\vec{\sch}$ has one in-edge of each type, and also one out-edge of each type.
\end{definition}

\begin{notation}
    We write $\sch$ for the undirected version of~$\vec{\sch}$, which is a $2\kreg$-regular graph.
    Letting \mbox{$1 = \kappa_1 \geq \kappa_2 \geq \cdots \geq \kappa_\el \geq -1$} denote the eigenvalues of the normalized adjacency matrix of~$\sch$, we write $\kappa_{\sch} = \max\{\kappa_2, |\kappa_\el|\}$ for the second-largest in magnitude.
\end{notation}

We will need expansion properties of this graph $\sch$.
From~\cite{AR94}, for any $\kappa > 0$, for $\kreg=O(\log(\ell)/\kappa^2)$, with probability at least~$.999$ we have $\kappa_{\sch} \leq \eps$.
We will ultimately choose~$\eps$ to be a very small universal constant (see \Cref{asmp:kapp}); thus $\kreg = O(\log \ell) = O(\log \mF)$.

\subsection{Twist graph code}
\begin{definition}
    Given $\vec{\sch}$ and a partitioned base code~$\pB$, we define a new $\F_2$-linear code $\pB(\vec{\sch})$ as follows:
    The block length of $\pB(\vec{\sch})$ is $\el \cdot n$, and we write a received word $w \in \F_2^{U \times [n]}$ as $w = (w_u)_{u \in U}$, where $w_u \in \F_2^n$.
    The number of parity checks in $\pB(\vec{\sch})$ will be $\el \cdot m$, and they are defined as follows.
    Given a type $\tau \in [\kreg]$ and a pair $(y,z) \in \F_2^n \times \F_2^n$, we introduce the notion of the \emph{type-$\tau$ checks on $(y,z)$}, meaning all $m/\kreg$ parity checks of the form
    \begin{equation}    \label{eqn:ht-check}
        \sum_{i \in \tails{a}} y_i + \sum_{j \in \heads{a}} z_j = 0 \pmod{2} \quad \text{for $a \in T_\tau$.}
    \end{equation}
    Now $\pB(\vec{\sch})$ consist of imposing, for each directed edge $(u,v)$ in $\vec{\sch}$ of type~$\tau$,  all type-$\tau$ checks on $(w_u, w_v)$.
\end{definition}

Note that for a fixed~$a$, each of $\heads{a}$ and $\tails{a}$ is a random density-$\frac{\Delta}{2n}$ subset of~$[n]$, but these two sets are not quite independent (since any $j \in [n]$ is in at most one of them).
We will use the following fact:
\begin{proposition}                                       \label{prop:chi}
    Let $(y,z) \in \F_2^n \times \F_2^n$.
    For a randomly chosen and randomly partitioned check imposed on~$(y,z)$, the probability it is violated is
    \[
	    \frac12 - \frac12 \left(1-\frac{\Delta}{n}\right)^{|y +z|}\left(1-\frac{2\Delta}{n}\right)^{|y \cap z|}
        \geq \frac12 - \frac12 \exp\parens*{-(|y|+|z|)\frac{\Delta}{n}}.
	\]
\end{proposition}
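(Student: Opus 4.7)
The plan is to compute the violation probability exactly by treating the check as a sum of independent contributions across coordinates $i \in [n]$, and then use the standard $\ln(1-x)\le -x$ inequality to get the stated upper bound.

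First I would unpack the random process: since each variable $i\in[n]$ is independently included into $\bdry^\transp a$ with probability $\Delta/n$ and then assigned uniformly to $\heads{a}$ or $\tails{a}$, I can equivalently say that $i$ lands in $\tails{a}$, in $\heads{a}$, or in neither, with probabilities $\Delta/(2n)$, $\Delta/(2n)$, $1-\Delta/n$ respectively, independently across $i$. The contribution of coordinate $i$ to the left-hand side of \eqref{eqn:ht-check} is then an independent $\{0,1\}$-valued random variable $X_i$: it is $0$ when $(y_i,z_i)=(0,0)$; it is $1$ with probability $\Delta/(2n)$ when exactly one of $y_i,z_i$ equals $1$ (corresponding to $i\in\tails{a}$ or $i\in\heads{a}$ respectively); and it is $1$ with probability $\Delta/n$ when $y_i=z_i=1$ (corresponding to $i$ being in $\bdry^\transp a$ at all).

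Next I would compute the violation probability via the standard Fourier trick: writing $p_i = \Pr[X_i=1]$ and using independence,
\[
1-2\Pr\!\left[\textstyle\sum_i X_i = 1\bmod 2\right] \;=\; \mathbb{E}\!\left[(-1)^{\sum_i X_i}\right] \;=\; \prod_{i\in[n]} (1-2p_i).
\]
The factor $(1-2p_i)$ equals $1$ when $(y_i,z_i)=(0,0)$, equals $1-\Delta/n$ on the $|y+z|$ coordinates where exactly one of $y_i,z_i$ is $1$, and equals $1-2\Delta/n$ on the $|y\cap z|$ coordinates where $y_i=z_i=1$. This yields exactly the claimed product $(1-\Delta/n)^{|y+z|}(1-2\Delta/n)^{|y\cap z|}$, and rearranging gives the equality part of the proposition.

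Finally, for the inequality I would apply $\ln(1-x) \le -x$ (valid for $x<1$) to each factor, getting that the product is at most $\exp\!\bigl(-|y+z|\tfrac{\Delta}{n} - 2|y\cap z|\tfrac{\Delta}{n}\bigr)$, and then use the identity $|y|+|z| = |y+z| + 2|y\cap z|$ to collapse the exponent to $-(|y|+|z|)\Delta/n$. There is no real obstacle here; the only mild subtlety is checking that the $\heads{a}/\tails{a}$ partition really does induce independence of contributions across $i$, which the three-bucket reformulation above makes transparent.
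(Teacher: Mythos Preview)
Your proposal is correct and essentially identical to the paper's proof: the paper also factors the expectation of $(-1)^{\text{check}}$ over coordinates using independence (written there via the character notation $\chi_J(x)=\prod_{j\in J}(-1)^{x_j}$), computes the per-coordinate factor by the same three-case analysis, and finishes with the same $1-x\le e^{-x}$ bound together with $|y|+|z|=|y+z|+2|y\cap z|$. The only difference is cosmetic---you phrase the per-coordinate contribution as a Bernoulli $X_i$ with $1-2p_i$ as the factor, whereas the paper writes the factor directly as $\E[\chi_{\bI_j}(y_j)\chi_{\bI'_j}(z_j)]$.
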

\begin{proof}
    Call the check~$a$.
    We use the notation $\chi_J(x) = \prod_{j \in J} (-1)^{x_j}$ and write $\bI = \tails{a}$, $\bI' = \heads{a}$ for brevity.
    From \cref{eqn:ht-check}, the probability of violation is
    \[
        \E\bracks*{\tfrac12 - \tfrac12\chi_{\bI}(y)\chi_{\bI'}(z)}
        = \tfrac12 - \tfrac12 \prod_{j=1}^n \E[\chi_{\bI_j}(y_j) \chi_{\bI'_j}(z_j)],
    \]
    where we used independence of the pairs $(\bI_j, \bI'_j)$ across $j \in [n]$.
    By first considering whether or not $\bdry^\transp a \ni j$, and if so, whether $j \in \bI$ or $j \in \bI'$, we compute
    \[
        \E[\chi_{\bI_j}(y_j) \chi_{\bI'_j}(z_j)] = \begin{cases}
                                                                            1 & \text{if $y_j = z_j = 0$,}\\
                                                                            1 - \tfrac{\Delta}{n} & \text{if $y_j \mathmakebox[\widthof{$\displaystyle {}={}$}]{{}+{}}
                                                                            z_j = 1$,} \\
                                                                            1 - \tfrac{2\Delta}{n} & \text{if $y_j = z_j = 1$.}
                                                                      \end{cases}
    \]
    Thus our expression for the exact probability of violation follows.
    As for the inequality, it uses
    \[
        \left(1-\frac{\Delta}{n}\right)^{|y +z|}\left(1-\frac{2\Delta}{n}\right)^{|y \cap z|}
        \leq \exp(-|y+z|\Delta/n) \exp(-2|y \cap z| \Delta/n)
    \]
    and the fact that $|y|+|z| = |y+z| + 2|y \cap z|$.
\end{proof}

Below we establish a property of $\bB(\vec{\sch})$ that is minor variant of a standard property of random LDPC codes, that words of small Hamming weight violate a proportionate number of parity checks.
We first need to upgrade the assumption \cref{eqn:Delta}:
\begin{assumption}  \label{asmp:del}
    We now make the stronger assumption
    \[
        \beta \kreg \ln n \leq \Delta = \Delta(n) \leq n^{o(1)},
    \]
     where $\beta$ is a large universal constant to be fixed later.
     As a remark, since $\kreg$ will eventually be $\Theta(\log n)$, the above constraint is the reason for our eventual choice of $\Delta = \Theta(\log^2 n)$.
\end{assumption}
\begin{lemma}                                     \label{lem:key-light}
    For random $\bB(\vec{\sch})$ as described, except with probability at most $O(1/n^{100})$  the following holds:
    For any type $\tau \in [\kreg]$ and any pair $(y, z) \in \F_2^n \times \F_2^n$ with $|y|,|z| \leq \frac{4}{\Delta} n$, the number of violated type-$\tau$ checks on $(y,z)$ is at least~$.01\frac{\Delta}{\kreg}(|y|+|z|)$.
\end{lemma}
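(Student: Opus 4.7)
\medskip

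My plan is to prove this by the standard ``random LDPC''-style Chernoff + union bound argument, with care taken to exploit the strengthened assumption $\Delta \geq \beta \kreg \ln n$ from \Cref{asmp:del}.

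The first step is to fix a type $\tau$ and a pair $(y,z) \in \F_2^n \times \F_2^n$ satisfying the weight bound, and note that the $m/\kreg$ type-$\tau$ checks on $(y,z)$ are mutually independent (since each check $a \in T_\tau$ has an independently chosen random neighborhood $\bdry^\transp a$ and an independent random partition into $\heads{a}, \tails{a}$). By \Cref{prop:chi}, each of these is violated with probability at least $p \coloneqq \tfrac12 - \tfrac12 \exp(-(|y|+|z|)\Delta/n)$. Using the weight bound $|y|+|z| \leq 8n/\Delta$, the argument of the exponential lies in $[0,8]$, so by an elementary estimate (e.g.\ $1-e^{-x} \geq x/16$ on $[0,8]$) we get $p \geq c_0 \cdot (|y|+|z|)\Delta/n$ for some universal constant~$c_0 > 0$. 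Combining with $m = \tfrac34 n$ gives the expected number of violated type-$\tau$ checks $\mu \coloneqq \frac{m}{\kreg}\cdot p \geq c_1 \cdot (|y|+|z|)\Delta/\kreg$ for a universal $c_1 > 0$.

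Next, I would invoke the standard multiplicative Chernoff lower tail: for an appropriate choice of $c_1$ one has $c_1 > .01$, so the event that the number of violated type-$\tau$ checks on $(y,z)$ is below $.01(|y|+|z|)\Delta/\kreg$ is contained in the event that this binomial-type sum undershoots its mean by a constant multiplicative factor. Chernoff then yields a failure probability bounded by $\exp(-c_2 \cdot (|y|+|z|)\Delta/\kreg)$ for some universal $c_2 > 0$.

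Finally, I would union-bound over all $\kreg$ types and over all pairs $(y,z)$ with $1 \leq |y|+|z| \leq 8n/\Delta$ (the case $|y|+|z|=0$ being trivial). The number of such pairs with $|y|+|z| = s$ is $\binom{2n}{s} \leq (2en/s)^s$, so the union bound contributes a factor $\exp(s \ln(2en/s)) \leq \exp(O(s \ln n))$. The total failure probability is therefore at most
\[
\kreg \sum_{s \geq 1} \exp\!\Bigl(-c_2 \tfrac{s\Delta}{\kreg} + O(s \ln n)\Bigr).
\]
Under \Cref{asmp:del} (which gives $\Delta/\kreg \geq \beta \ln n$ with $\beta$ a large universal constant), the exponent is at most $-101 s \ln n$ for all $s \geq 1$, making the sum $O(1/n^{100})$ as required.

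The only potentially subtle step is the lower bound on~$p$: the hypothesis $|y|,|z| \leq 4n/\Delta$ is exactly what keeps the exponent $(|y|+|z|)\Delta/n$ bounded, so that $1-e^{-x}$ can be compared linearly with~$x$ throughout the relevant range. Once that is in hand, the proof is just bookkeeping with Chernoff and entropy bounds, with the parameter choice $\Delta \geq \beta \kreg \ln n$ doing all the heavy lifting in the union bound.
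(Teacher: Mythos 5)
Your proof follows the same structure as the paper's: fix $\tau$ and $(y,z)$ with $t = |y|+|z|$, lower-bound the per-check violation probability via \Cref{prop:chi} and a linear comparison $1-e^{-x}\gtrsim x$ on the bounded range, apply a multiplicative Chernoff lower tail to the $m/\kreg$ i.i.d.\ checks in $T_\tau$, then union-bound over types and over $\binom{2n}{t}$ pairs, with \Cref{asmp:del} ensuring the Chernoff exponent dominates the entropy term. The only cosmetic difference is that you carry generic constants $c_0,c_1,c_2$ where the paper uses explicit numeric bounds ($.04$, $.03$, $\tfrac13$), but the argument is the same and correct.
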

\begin{proof}
    Fix any type~$\tau$ and any pair $(y,z)$ with $|y|+|z| = t$, where $0 < t \leq \frac{8}{\Delta} n$.
    Using \Cref{prop:chi}, 
    the probability that a given constraint $a \in T_\tau$ is violated is at least
    \[
        \tfrac12 - \tfrac12 \exp(-\Delta t/n) \geq \tfrac12 - \tfrac12 (1-.1\Delta t/n) > .04 \Delta t/n,
    \]
    where the first inequality used $t \leq \frac{8}{\Delta} n$.
    Thus if we consider all $m/\kreg = (3/4)n/\kreg$ randomly chosen checks in~$T_\tau$, the expected number that are violated is at least $\mu \coloneqq .03 \Delta t/\kreg$.
    A Chernoff bound therefore implies that
    \[
        \Pr[\text{$(y,z)$ violates fewer than $\tfrac13 \mu = .01\tfrac{\Delta}{\kreg}(|y|+|z|)$ checks from $T_\tau$}] \leq \exp(-\tfrac29\mu).
    \]
    Taking a union bound over all $\tau \in [\kreg]$ and all pairs~$(y,z)$ with $|y|+|z| = t$ (of which there are at most $\binom{2n}{t} \leq (2en/t)^t$) yields a failure probability of at most
    \[
        \kreg (2en/t)^t \exp(-\tfrac29\mu) \leq n^{o(1)} \exp(\ln(2en/t) - (.06/9) \Delta /\kreg)^t \leq 1/n^{100t},
    \]
    provided the constant~$\beta$ in \Cref{asmp:del} is large enough.
    Now taking a union bound over all~$t \geq 1$ completes the proof.
\end{proof}

Now we come to a subtler property of the code $\bB(\vec{\sch})$: Even if a received word $w = (w_u)_{u \in U}$ has some $w_u$ that does \emph{not} have low Hamming weight (it may even have Hamming weight~$n/2$), still there will be $\Omega(|w_u|)$ violated checks among \emph{all} the checks involving~$w_u$.
It is crucial here is that in \cref{eqn:m34} we ensured $2m > n$; in this way, when the $2m$ checks involving~$w_u$ are chosen at random, there is a chance for the union bound to overcome the~$2^n$ possibilities for~$w_u$.
(More precisely, when the $m$ checks are chosen at random and then split into $\heads{}/\tails{}$ pairs,
the probability of at least one violation in the pair is close to~$.75$ as shown in \cref{ineq:73} below,
so the probability of no violation in all pairs is roughly $.25^m = 2^{-2m}$.)

\begin{lemma}                                     \label{thm:key-heavy}
    For random $\bB(\vec{\sch})$ as described, except with probability at most $2^{-\Omega(n)}$  the following holds:
    Let $x \in \F_2^n$ be any word with $|x| \geq \frac{4}{\Delta} n$.
    Let $\Upsilon \subseteq [\kreg]$ have cardinality at least $.98 \kreg$.
    Assume we have a sequence of pairs $(x,y_\tau)_{\tau \in \Upsilon}$ and $(z_\tau,x)_{\tau \in \Upsilon}$  in $\F_2^{n} \times \F_2^n$, where $|y_\tau|, |z_\tau| \leq \frac{2}{\Delta} n$ for all $\tau \in \Upsilon$.
    Then, aggregating the  type-$\tau$ checks on $(x,y_\tau)$ and $(z_\tau,x)$ for all $\tau \in \Upsilon$, there are at least~$.01n$ parity check violations (out of a total of $2 |\Upsilon| m/k$).
\end{lemma}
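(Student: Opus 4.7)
The plan is to pair up the checks: for each $\tau \in \Upsilon$ and each $a \in T_\tau$, pair the type-$\tau$ check on $(x, y_\tau)$ coming from~$a$ with the type-$\tau$ check on $(z_\tau, x)$ coming from~$a$; call these $c_1^{(a,\tau)}$ and $c_2^{(a,\tau)}$. The pair depends only on the triple $(\bdry^\transp a,\, \heads{a},\, \tails{a})$, so the pairs are mutually independent across $(a,\tau) \in \bigsqcup_{\tau \in \Upsilon} T_\tau$, of which there are $N := |\Upsilon|\,m/\kreg \geq 0.98 \cdot \tfrac{3}{4}\,n$.

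First I would bound the probability that a single pair is entirely satisfied. The character expansion gives
\[
\Pr[c_1 = c_2 = 0] \;=\; \tfrac{1}{4}\bigl(1 + \E[(-1)^{c_1}] + \E[(-1)^{c_2}] + \E[(-1)^{c_1+c_2}]\bigr).
\]
By the argument proving \Cref{prop:chi}, each expectation has the form $(1-\tfrac{\Delta}{n})^{|u+v|}(1-\tfrac{2\Delta}{n})^{|u \cap v|}$ for some pair $(u,v)$, which is at most $(1-\Delta/n)^{|u|+|v|} \leq \exp(-\Delta(|u|+|v|)/n)$ using $1-2\Delta/n \leq (1-\Delta/n)^2$. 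For $\E[(-1)^{c_1}]$ take $(u,v)=(x,y_\tau)$, for $\E[(-1)^{c_2}]$ take $(u,v)=(z_\tau,x)$, and observe that $c_1 + c_2 = \sum_{i \in \tails{a}}(x+z_\tau)_i + \sum_{j \in \heads{a}}(x+y_\tau)_j$ is itself a type-$\tau$-style check on $(x+z_\tau,\,x+y_\tau)$. Since $|x| \geq 4n/\Delta$ and $|y_\tau|,|z_\tau| \leq 2n/\Delta$, each of $|x|+|y_\tau|$, $|x|+|z_\tau|$, and $|x+z_\tau|+|x+y_\tau|$ is at least $4n/\Delta$, so each expectation is $\leq e^{-4}$. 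This yields $\Pr[c_1 = c_2 = 0] \leq \tfrac{1}{4}(1 + 3e^{-4}) < 0.27$, so each pair contains at least one violated check with probability at least $0.73$.

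Next I would apply a Chernoff bound. Letting $X_{a,\tau}$ indicate that the pair has at least one violated check, the $X_{a,\tau}$ are independent Bernoullis of bias $\geq 0.73$, and the total number of violated checks is at least $\sum X_{a,\tau}$. A standard Chernoff bound gives $\Pr[\sum X_{a,\tau} < 0.01 n] \leq \exp(-c_0 n)$ with $c_0 = (N/n)\cdot D(0.01 n/N \,\|\, 0.73) \approx 0.735 \cdot 1.22 \approx 0.9$, and crucially $c_0 > \ln 2$. I would then union bound over the $\leq 2^n$ choices of $x$, the $\leq \binom{n}{\leq 2n/\Delta}^{2\kreg} = 2^{O(\kreg n \log \Delta / \Delta)}$ choices of $(y_\tau,z_\tau)_{\tau \in \Upsilon}$, and the $\leq 2^{\kreg}$ choices of $\Upsilon \subseteq [\kreg]$. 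Under \Cref{asmp:del} with $\beta$ large enough, this factor is $2^{n+o(n)}$, beaten by $\exp(c_0 n)$ since $c_0 > \ln 2$, giving overall failure probability $2^{-\Omega(n)}$.

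The main delicate point is the per-pair bound in Step~1: it must hold uniformly over all $x, y_\tau, z_\tau$ in the stated weight ranges, and it must be strong enough for the Chernoff rate to exceed $\ln 2$. This is precisely where the choice $2m > n$ matters—the surplus of $\approx 1.5 n$ random constraints on $x$ must absorb the $2^n$ union-bound cost. A noteworthy corner case is $y_\tau = z_\tau$, where $c_1 + c_2$ degenerates into a standard LDPC check on $x + y_\tau$, activating the $(1-2\Delta/n)^{|u \cap v|}$ factor rather than the $(1-\Delta/n)^{|u+v|}$ factor; the combined inequality $(1-\Delta/n)^{|u+v|}(1-2\Delta/n)^{|u \cap v|} \leq (1-\Delta/n)^{|u|+|v|}$ is tailored to handle both regimes uniformly.
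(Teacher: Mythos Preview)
Your proposal is correct and follows essentially the same approach as the paper's proof: pair up the two checks coming from each $a \in T_\tau$, use the character expansion to bound $\Pr[c_1=c_2=0] \leq \tfrac14(1+3e^{-4}) < 0.27$ via the same three weight lower bounds, then apply a Chernoff/relative-entropy bound over the $\geq 0.98m$ independent pairs and union-bound over the $2^{(1+o(1))n}$ choices of $(x,\Upsilon,(y_\tau),(z_\tau))$. The constants and the crucial observation that the Chernoff rate beats $\ln 2$ (equivalently, the paper's $2^{-1.2n}$) precisely because $2m>n$ match the paper exactly.
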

\begin{proof}
    There are at most $2^n \cdot 2^\kreg \cdot (2^{h_2(2/\Delta) n})^{2\kreg} \leq 2^{(1+o(1)) n}$ choices for $x$, $\Upsilon$, the $y_\tau$'s,  and the~$z_\tau$'s (where the inequality used \Cref{asmp:del}).
    Let us fix any such choices and consider the random partitioned code~$\bB$.
    For a given type $\tau \in \Upsilon$, we have $m/\kreg$ randomly chosen and randomly partitioned checks that get imposed on $(x,y_\tau)$ and also on $(z_\tau,x)$.
    Consider one such check $a \in T_\tau$.
    Using the notation $\chi_J$ and $\bI, \bI'$ from \Cref{prop:chi}, the probability that the two checks~$a$ imposes on $(x,y_\tau)$ and $(z_\tau,x)$ are \emph{both satisfied} is
    \begin{multline*}
        \E[(\tfrac12 + \tfrac12\chi_{\bI}(x) \chi_{\bI'}(y_\tau))
                                   (\tfrac12 + \tfrac12\chi_{\bI}(z_\tau) \chi_{\bI'}(x))] \\
        = \tfrac14 + \tfrac14 \E[\chi_{\bI}(x) \chi_{\bI'}(y_\tau)]
                          + \tfrac14 \E[\chi_{\bI}(z_\tau) \chi_{\bI'}(x))]
                          + \tfrac14 \E[\chi_{\bI}(x+z_\tau) \chi_{\bI'}(x+y_\tau)].
    \end{multline*}
    Using the reasoning from \Cref{lem:key-light}, we have
    \[
        \E[\chi_{\bI}(x) \chi_{\bI'}(y_\tau)] \leq \exp(-(|x|+|y_\tau|) \Delta/n) \leq \exp(-4),
    \]
    where we used $|x| \geq \frac{4}{\Delta} n$.
    Similarly $\E[\chi_{\bI}(z_\tau) \chi_{\bI'}(x))] \leq \exp(-4)$, and also
    \[
        \E[\chi_{\bI}(x+z_\tau) \chi_{\bI'}(x+y_\tau)]
        \leq \exp(-(|x+z_\tau| + |x+y_\tau|)\Delta/n)
        \leq \exp(-4),
    \]
    where we used $|x+y_\tau|, |x+z_\tau| \geq \frac{2}{\Delta}n$.
    We may therefore conclude
    \begin{equation}    \label[ineq]{ineq:73}
         \Pr[\text{at least one of $a$'s checks on $(x,y_\tau)$ and $(z_\tau,x)$ is \emph{unsatisfied}}] \geq 1 - (\tfrac14 + \tfrac34 \exp(-4)) \geq .73.
    \end{equation}

    As a consequence, when the (at least~$.98 m$) parity checks in $(T_\tau)_{\tau \in \Upsilon}$  are chosen at random, the number of violations among those imposed on the $(x,y_\tau)$'s and the $(z_\tau,x)$'s stochastically dominates a $\text{Binomial}(.98m, .73)$ random variable.
    A short calculation%
    \footnote{
    For a $\text{Binomial}(t,p)$ random variable $\mathbf X$,
    we know \mbox{$\Pr[ \tfrac 1 t \mathbf X < p - \epsilon] < \exp(- t D(p-\epsilon\| p))$} for~$0 < \epsilon < p$,
    where $D(x\|y) = x \ln(x/y) + (1-x)\ln((1-x)/(1-y))$.
    }
    shows that such a binomial random variable is smaller than $.01 \cdot \frac43 m = .01 n$ with probability at most $O(2^{-1.2 n})$.
    Thus the theorem holds by a union bound, since $2^{(1+o(1))n} \cdot O(2^{-1.2 n}) = 2^{-\Omega(n)}$.
\end{proof}

\subsection{Low-weight words violate many checks}

Recall that $\kappa_{\sch}$ is the second largest eigenvalue of the undirected graph $\sch$.
\begin{assumption}  \label{asmp:kapp}
    We assume $\kappa_{\sch} \leq .00002$.
\end{assumption}
 We recall the following form of the Expander Mixing Lemma~\citep[][Lem.~4.15]{AC88,Vad12}:
\begin{named}{Expander Mixing Lemma}
    If $A_1$ and $A_2$ are subsets of vertices in~$\sch$, and $(\bu,\bv)$ is a uniformly random edge (with orientation), then
    \[
        \abs*{\Pr[\bu \in A_1 \textnormal{ and } \bv \in A_2] - \alpha_1 \alpha_2 } \leq \kappa_{\sch}\sqrt{\alpha_1(1-\alpha_1) \alpha_2(1-\alpha_2)},
    \]
    where $\alpha_i$ denotes $|A_i|/|U|$ for $i = 1, 2$.
\end{named}

Our goal for the remainder of this section is to prove the following:
\begin{theorem}                                     \label{thm:ltc}
    Suppose $\pB$ satisfies the conclusions of \Cref{lem:key-light} and \Cref{thm:key-heavy}.
    Then every $w \in \F_2^{U \times [n]}$ of relative Hamming weight at most $\eps \coloneqq .0002/\Delta$ (i.e., with $|w| \leq \eps \el n$) violates at least $.004 |w|$ parity checks in~$\pB(\vec{\sch})$.
\end{theorem}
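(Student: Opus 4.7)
The plan is to split the vertices of $U$ into \emph{heavy} $H = \{u : |w_u| \geq \tfrac{4}{\Delta}n\}$ and \emph{light} $L = U \setminus H$, together with an auxiliary set $V = \{u : |w_u| > \tfrac{2}{\Delta}n\} \supseteq H$. The hypothesis $|w| \leq \eps\el n$ with $\eps = .0002/\Delta$ immediately yields $|H| \leq \tfrac{\eps\Delta}{4}\el = 5\cdot 10^{-5}\,\el$ and $|V| \leq \tfrac{\eps\Delta}{2}\el = 10^{-4}\,\el$, both tiny fractions of $\el$. I will count violations on two disjoint collections of checks: those on light-light oriented edges (bounded via \Cref{lem:key-light}), and those on edges incident to heavy vertices (bounded via \Cref{thm:key-heavy}).

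For the light-light contribution, applying \Cref{lem:key-light} to every oriented $LL$-edge and summing yields $.01\tfrac{\Delta}{k}\sum_{u\in L}|w_u|(2k - d_H(u))$, where $d_H(u)$ is the number of oriented edges between $u$ and $H$. Using $|w_u|\leq \tfrac{4n}{\Delta}$ on $L$, $\sum_{u\in L} d_H(u) \leq 2k|H|$, and $|H|\leq \tfrac{\Delta|w_H|}{4n}$, the correction is at most $.02\Delta|w_H|$, so $\#\text{LL violations} \geq .02\,\Delta\,(|w_L|-|w_H|)_+$. For the heavy contribution, the Expander Mixing Lemma bounds the oriented edges between $H$ and $V$ by $4k|H||V|/\el + 4k\kappa_{\sch}\sqrt{|H||V|}$; since these are the only source of ``bad types'' for $u\in H$ (types $\tau$ whose in- or out-neighbor of $u$ lies in $V$), Markov's inequality gives $|H^{\mathrm{suit}}| \geq .98|H| - O(\kappa_{\sch}\sqrt{|H|\el})$, where $H^{\mathrm{suit}} = \{u\in H : \#\text{bad types} \leq .02k\}$. \Cref{thm:key-heavy} then yields $\geq .01n$ violations on the checks touching each $u\in H^{\mathrm{suit}}$, and a further EML application bounds the number of $HH$-edges by $2k|H|^2/\el + 2k\kappa_{\sch}|H| \ll k|H|$; after subtracting the doubly-counted $HH$-violations, the heavy contribution is at least $c\,n|H| \geq c\,|w_H|$ for a constant $c > .009$, using $|w_H| \leq n|H|$.

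Combining, $\#\text{violations} \geq .02\,\Delta\,(|w_L|-|w_H|)_+ + c\,|w_H|$. Viewed as a function on the simplex $|w_L|+|w_H|=|w|$, this right-hand side is minimized at $|w_L|=|w_H|=\tfrac12|w|$ (where the first term vanishes), with value $\tfrac{c}{2}|w| > .0045\,|w| > .004\,|w|$; away from the balanced point either the first term (linear in $\Delta(|w_L|-|w_H|)$) or the second dominates, and since $\Delta \geq \beta\ln n$ is large the bound is in fact $\gg .004|w|$ outside a tiny neighborhood of the balance point. The main obstacle is the simultaneous control of the two error terms — the $L$-to-$H$ crossing edges in the $LL$ bound and the non-suitable heavies in the \Cref{thm:key-heavy} application — which is precisely what the numerical choices $\eps = .0002/\Delta$ and $\kappa_{\sch} \leq .00002$ are calibrated to make negligible. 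The degenerate regime where $|H|$ is so small that the additive $\kappa_{\sch}\sqrt{|H|\el}$ term exceeds $|H|$ itself is handled separately by observing that in that case $|w_H|\leq n|H|$ is correspondingly tiny, so the $LL$ term alone controls the total.
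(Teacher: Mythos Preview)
Your overall architecture matches the paper's: partition $U$ by the size of $|w_u|$, use \Cref{lem:key-light} on edges between not-heavy vertices, and use \Cref{thm:key-heavy} on heavy vertices whose neighbors are mostly very light. The light--light computation and the observation that the two families of checks are disjoint are both fine. But there is a real gap in how you handle small~$|H|$.

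Your Expander Mixing Lemma application gives
\[
    (\text{\# non-suitable heavies}) \ \lesssim\ \theta_V\,|H| \;+\; \kappa_{\sch}\sqrt{|H|\,|V|}.
\]
The trouble is the second term: divided by $|H|$ it becomes $\kappa_{\sch}\sqrt{|V|/|H|}$, and nothing you have written bounds $|V|/|H|$. You then assert that in the ``degenerate regime'' (small $|H|$) the quantity $|w_H| \leq n|H|$ is ``correspondingly tiny, so the $LL$ term alone controls the total.'' This is false as stated. Take $|H| = \{u_0\}$ with $|w_{u_0}| = n$ and $w_u = 0$ for all $u \in L$; then $|w_L| = 0$, your $LL$ bound $.02\,\Delta\,(|w_L|-|w_H|)_+$ vanishes, and you have established no heavy bound either. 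More generally, whenever $|w_L| \leq |w_H|$ your $LL$ term is zero, and this condition forces only $|V \setminus H| < (\Delta/2)|H|$, hence $|V|/|H| < 1 + \Delta/2$. Since $\Delta = \Theta(\log^2 n) \to \infty$, the ratio $|V|/|H|$ is unbounded and your EML error $\kappa_{\sch}\sqrt{|V|/|H|}$ is not controlled by the numerics $\eps = .0002/\Delta$, $\kappa_{\sch} \leq .00002$.

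The paper closes this gap by introducing a third class $M = V \setminus H$ (``medium'') and splitting into two cases on the \emph{fixed} ratio $|H|/|M|$. In Case~2 ($|H| > .001|M|$) one has $|V|/|H| = (|M|+|H|)/|H| \leq 1001$, so the EML error in the heavy analysis is $\kappa_{\sch}\sqrt{1001}$, a small absolute constant independent of~$|H|$; then \Cref{thm:key-heavy} applies uniformly, even when $|H| = 1$. In Case~1 ($|H| \leq .001|M|$), the medium vertices alone contribute at least $.004\,n|M|$ to the main term, while $|w_H| \leq n|H| \leq .001\,n|M|$ and the bad-vertex correction is $\leq .01\,n|H| \leq .00001\,n|M|$; thus the light$\cup$medium contribution already dominates $|w|$ without any appeal to \Cref{thm:key-heavy}. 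Your argument can be repaired along exactly these lines, but the two-way split $L/H$ with a catch-all ``degenerate'' clause does not close.
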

\newcommand{\light}{L}
\newcommand{\med}{M}
\newcommand{\heavy}{H}
\newcommand{\viol}{\mathrm{Viol}}
\begin{proof}
Let $\pB$ and $w$ be as given, and write $w = (w_u)_{u \in U}$ for $w_u \in \F_2^n$.
For a given $u \in U$, the subword $w_u$ participates in $2m$ parity checks from $\pB(\vec{\sch})$; we will define $\viol(u)$ to be the number of these checks that are violated.
By double-counting, we need to show that
\begin{equation} \label[ineq]{ineq:iwantu}
    \sum_{u \in U} \viol(u) \geq .008 |w| = .008  \sum_{u \in U} |w_u|.
\end{equation}
Let us say that $u \in U$ is
\[
        \textit{light} \text{ if $|w_u|/n \leq 2/\Delta$,} \quad
        \textit{heavy}  \text{ if $|w_u|/n \geq 4/\Delta$,} \quad
        \textit{medium}  \text{ if $2/\Delta < |w_u|/n < 4/\Delta$.}
\]
We write $\light \subseteq U$ (respectively $\heavy$, $\med$) for the subset of light (respectively heavy, medium) vertices, and we write $\theta_L = |\light|/|U|$ (respectively $\theta_H$, $\theta_M)$ for their fractional size.
By Markov's inequality we have
\begin{equation}    \label[ineq]{ineq:marko}
    \theta_H \leq \eps\Delta/4 \leq .00005 \qquad \text{and} \qquad \theta_L \geq 1 - \eps \Delta/2 \geq .9999.
\end{equation}

Say that $u \in \light \cup \med$ is ``bad'' if $\Pr_{\bv \sim u}[\bv \in \heavy] \geq .9$, where $\bv \sim u$ denotes that $\bv$~is a uniformly random neighbor of~$u$ in~$\sch$.
By the Expander Mixing Lemma, if $\bu \sim \light \cup \med$ is chosen uniformly at random, and then $\bv \sim \bu$ is a randomly chosen neighbor, the probability of $\bv \in \heavy$ is at most
\[
    \theta_H + \kappa_{\sch} \theta_H(1-\theta_H) \leq 1.1 \theta_H
\]
where we used \Cref{asmp:kapp} (with much room to spare).
It thus follows from Markov's inequality that at most a $\frac{1.1\theta_H}{.9} \leq 1.25\theta_H$ fraction of $u \in \light \cup \med$ are bad.
Now for a \emph{good} $u \in \light \cup \med$, which has at least a $.1$-fraction of its neighbors also in $\light \cup \med$, the conclusion of \Cref{lem:key-light} tells us that
\[
    \viol(u) \geq .1 \cdot 2\kreg \cdot .01 \tfrac{\Delta}{\kreg} |w_u| = .002 \Delta |w_u|.
\]
Thus
\begin{equation}   \label[ineq]{ineq:mainsub}
    \sum_{u \in \light \cup \med} \viol(u) \quad\geq\quad  \underbrace{.002 \Delta \sum_{u \in \light \cup \med} |w_u|}_{\textbf{Main}}
    \quad-\quad .01|\heavy|n,
\end{equation}
where the justification for the subtracted term is that
\[
    \sum_{\text{bad } u \in \light \cup \med} .002 \Delta |w_u| \leq  1.25\theta_H|\light \cup \med| \cdot .002 \Delta (4/\Delta)n \leq .01|\heavy| n.
\]
We now divide into two cases.

\paragraph{Case 1:} $|\heavy| \leq .001 |\med|$.
In this case, the subtracted term in \cref{ineq:mainsub} is at most $ .00001|\med|n$, whereas
\[
    \textbf{Main} \geq .002\Delta \cdot |\med| (2/\Delta) n = .004 |\med|n.
\]
Thus the subtracted term is at most a $\frac{.00001}{.004} \le .003$ fraction of $\textbf{Main}$, and hence
the right-hand side of \cref{ineq:mainsub} is at least $.997\cdot \textbf{Main}$.
Furthermore,
\[
    \sum_{u \in \heavy} |w_u| \leq |\heavy| n \leq .001 |\med| n \leq .25 \cdot .004 |\med|n \leq .25 \cdot \textbf{Main}.
\]
Subtracting this from \cref{ineq:mainsub} we conclude that
\[
    \sum_{u \in \light \cup \med} \viol(u) - \sum_{u \in \heavy} |w_u| \geq (.997-.25) \cdot \textbf{Main} \geq .001 \Delta \sum_{u \in \light \cup \med} |w_u|,
\]
which certainly implies \cref{ineq:iwantu} (as we may assume $\Delta \geq 8$).

\paragraph{Case 2:} $|\heavy| > .001 |\med|$.
In this case, by the Expander Mixing Lemma, if $\bu \sim \heavy$ is chosen uniformly at random, and $\bv$ is a random neighbor of~$\bu$ in~$\sch$, the probability of $\bv \in \light$ is at least
\[
    \theta_L - \kappa_{\sch} \sqrt{(1-\theta_H)\theta_L(1-\theta_L)/\theta_H} \geq .9999 - \kappa_\sch \sqrt{(\theta_M + \theta_H)/\theta_H} \geq .9999 - \kappa_\sch \sqrt{1001} \geq .999,
\]
where the first inequality used \cref{ineq:marko}, then we used $(\theta_M + \theta_H)/\theta_H \leq 1001$ from Case~2, and finally the last inequality used \Cref{asmp:kapp}.
Now say that vertex $u \in \heavy$ is ``bad'' if $\Pr_{\bv \sim u}[\bv \in \light] < .99$.
A Markov argument now implies that at most a $\frac{1-.999}{1-.99} = .1$  fraction of $u \in \heavy$ are bad.
As for the at least $.9|\heavy|$ ``good'' $u \in H$, each has~$\kreg$ out-neighbors and~$\kreg$ in-neighbors in~$\vec{\sch}$; by the definition of ``goodness'', for at least a $.98$-fraction of the $\tau \in [\kreg]$ we have that both the $\tau$th out-neighbor \emph{and} the $\tau$th in-neighbor are in~$\light$.
The conclusion of \Cref{thm:key-heavy} then tells us that $\viol(u) \geq .01n$ for each good $u \in \heavy$.
Thus
\[
    \sum_{u \in \heavy} \viol(u) \geq .9|\heavy| \cdot .01 n = .009 |\heavy| n.
\]
Adding this to $.1$ times \cref{ineq:mainsub} yields
\[
    .1 \sum_{u \in \light \cup \med} \viol(u) + \sum_{u \in \heavy} \viol(u)
    \geq .0002 \Delta \sum_{u \in \light \cup \med} |w_u| + .008|\heavy| n
    \geq .008 \sum_{u \in U} |w_u|
\]
(as we may assume $\Delta \geq 40$), and this implies  \cref{ineq:iwantu} as needed.
\end{proof}

\section{Decoding}
\label{decodingsection}
\newcommand{\ezero}{e_0}

In these sections, we discuss decoding the code constructed for \Cref{mainth}.  The code of \Cref{coro} is obtained from this code by a weight-reduction operation followed by a distance-balancing operation.  We show in \Cref{wrc} that efficient decoders for the code constructed in \Cref{mainth} imply efficient decoders for the weight reduced code, up to polylogarithmic factors in the distance to which we can decode.
To decode the distance-balanced code, we can rely on the result of \cite{EKZ20} which gave a general means to decode a distance-balanced quantum code.

\subsection{Decoding cohomology}
In this section we consider the problem of decoding against $X$ errors.  Recall that we have associated $Z$ stabilizers with the $2$-cells of $\cE$.  So, the question of decoding against $X$ errors means reconstructing an error pattern $e$ (up to stabilizers of the code) on $1$-cells from its coboundary $s$ on $2$-cells, where $s$ is referred to as the \emph{error syndrome}, in coding theory terminology.
To express this in a language that is independent of our arbitrary choice to associate $Z$ stabilizers with $2$-cells, we refer to this as decoding cohomology.
We assume throughout that $\nF = \mF = \Theta(\nB) = \Theta(\mB)$.
Our fiber bundle code has $N = \nB \cdot \mF + \mB \cdot \nF = \Theta(\nB^2)$ qubits.

We give a polynomial time decoder 
(more precisely, $\poly(N)$ time) and
show, under the same assumptions as in \Cref{cohomweight},
that it decodes arbitrary errors up to a polylogarithmic fraction of~$d_X$,
meaning that if some error pattern $\ezero$ occurs,
with 
\begin{align}
|\ezero| \le \frac{1}{20} \frac{\mB}{\XConst \Delta^2} \label[ineq]{ezero-bound}
\end{align}
 which is a sufficiently small polylogarithmic fraction of $d_X = \Omega(\mF/\Delta)$,
then the algorithm decodes correctly, computing $\ezero$ up to stabilizers.
Without loss of generality, we may assume that $\ezero$ is a minimal weight chain with the given coboundary.
We also recall the ``shadow'' terminology from \Cref{def:shadow}.

\paragraph{Constructing $e_{\mathrm{arb}}$.}
Given some syndrome $s=\partial^\transp \ezero$,
the first stage of the algorithm is devoted to constructing some arbitrary $1$-chain $e_{\mathrm{arb}}$ satisfying
\[
s=\partial^\transp e_{\mathrm{arb}} \quad \text{ and } \quad |e_{\mathrm{arb}}|_{\mathrm{vsw}} \leq  |s|.
\]
In aid of this, we first define a linear map~$K$ from $p$-chains of $\cE$ to $(p-1)$-chains of $\cB$ (for arbitrary~$p$) by specifying  $K(b^{p-1} \otimes f^1)=b^{p-1}$ and $K(b^p \otimes f^0)=0$.  This map is in a sense ``dual" to the bundle projection map~$\Pi$, as $K$ is nonvanishing on $(p,1)$-cells while $\Pi$ is nonvanishing on $(p,0)$-cells.
In words, $K$ retains the base cells where the chain has an odd number of fiber $1$-cells.
Note that $K(s) = \partial^\transp K(\ezero)$.

The main challenge in constructing $e_{\mathrm{arb}}$ will be constructing a base $0$-chain $e_b$ such that
\[
\partial^\transp e_b = K(s)\quad \text{ and } \quad |e_b| \leq |s|. 
\]
Having constructed this $e_b$, we may fix any $1$-cell $f^1$ in $\cF$, and then \mbox{$\partial^\transp(e_b \otimes f^1)=s$} up to a coboundary of some horizontal chain that can easily be constructed.
Further, $|e_{\mathrm{arb}}|_{\mathrm{vsw}} = |e_b|$.
So it remains to construct~$e_b$.

Note that if we didn't require any bound on~$|e_b|$, obtaining a solution to $\partial^\transp e_b = K(s)$ would be a simple matter of linear algebra: the affine subspace defined by $\partial^\transp e_b = K(s)$ is nonempty (since $\partial^\transp K(\ezero) = K(s)$) and we would just have to find any solution.
To efficiently get the desired low-weight solution, we can run a Sipser--Spielman (belief propagation) type of decoding algorithm (see~\cite[Thm.~10]{SS96}, \cite[Thm.~12.9]{HLW06}).
The setup is slightly nonstandard, in that the roles of the ``code bits'' and ``check bits'' are reversed:
We think of the ``code'' as all solutions of $\partial^\transp e_b = K(s)$, with each base $0$-cell~$j$ enforcing the ``affine parity check'' $\sum_{a \in \partial j} (e_b)_a = 0/1$, the right-hand side depending on whether $j \in K(s)$.
The algorithm begins with the ``faulty'' solution $e_b = 0$, which has Hamming distance at most $|K(\ezero)|$
from a true solution, and repeatedly seeks to toggle a $0$-cell in~$e_b$ so as to decrease the number of violated affine parity checks.
Using the fact (\Cref{prop:cohom-unique}) that we have expansion factor exceeding $\frac34 \Delta$ from the $0$-cells to the $1$-cells in the base (in fact, using \Cref{cor:counique}), the Sipser--Spielman argument directly shows that, so long as $|K(\ezero)| + |K(s)| \leq \frac{1}{\XConst\Delta} \mB$, the algorithm will terminate after at most $|K(s)|$ toggles, yielding an $e_b$ with $\partial^\transp e_b = K(s)$ and $|e_b| \leq |K(s)| \leq |s|$.
Since $|K(\ezero)| \leq |\ezero|$ and $|K(s)| = |\partial^\transp K(\ezero)| \leq 1.01\Delta |\ezero|$, 
we see the algorithm will succeed provided $|\ezero|$ is 
bounded as in \cref{ezero-bound}.

\paragraph{Amending $e_\mathrm{arb}$.}
The second step of the algorithm is again a greedy decoder.
We know by \Cref{prop:cohorep} there exists a bundle $0$-chain $w$ and a base $1$-chain $x$ such that
\[
e_0 = e_\mathrm{arb} + \bdry^\transp w + x \otimes F_0 .
\]
For \emph{arbitrary} $w, x$, let the
{\it horizontal weight} denote the Hamming weight of the horizontal
part of $e = e_{\mathrm{arb}}+\partial^\transp w+ x \otimes F_0$.
We initialize $w=0$.
We greedily choose $x$ to minimize the horizontal weight;
this is linear time, given~$w$.
We then search for a \emph{fixable} base $0$-cell $a$:
\begin{definition}
For an arbitrary bundle $1$-chain $e$, a base $0$-cell $a$ is \emph{amended} if all of the following are true;
otherwise, a base $0$-cell $a$ is \emph{fixable} for $e$.
\begin{enumerate}
	\item[(i)] 	$e$ contains at most $\mF/2$ horizontal cells over any base $1$-cell $b \in \bdry^\transp a$.
	\item[(ii)] 	$e$ contains at most half of all horizontal cells of $\bdry^\transp (a \otimes f^0)$ for any fiber $0$-cell $f^0$.
	\item[(iii)]	Let $A = (\bdry^\transp a) \otimes F_0$. For any $1$-cocycle $z \subseteq A + a \otimes F_1$, where $a \otimes F_1$ consists of all vertical cells over $a$, it holds that $|e \cap A| \le  .8 |(e + z) \cap A| + .2 |A|$.
\end{enumerate}
\end{definition}
\noindent
Remark that the vertical weight or the shadow weight of the vertical part of $e$ has nothing to do with this amendableness.
The condition~(iii) means that $e$ has horizontal weight on $A$ which is fairly close to the minimum possible.
Indeed, all three conditions are trivial if $e$ has the minimum weight in $A$ 
upon adding cocycles ``near $a$'' (i.e., supported on $A + a \otimes F_1$).
The constant $.8$ is chosen because it is close to the approximation ratio achievable by the Goemans-Williamson algorithm.
We will elaborate on this shortly.

Our algorithm will repeatedly find any fixable $0$-cell~$a$ 
and amend it by changing $w$ and $x$ in $e_\mathrm{arb} + \bdry^\transp w + x \otimes F_0$
until no further fixable $0$-cells can be found.
At this point, the algorithm terminates and declares the final chain to be the decoding.
Note that the algorithm terminates after at most~$N$ fixes, 
since the horizontal weight always decreases.

\paragraph{Testing if $a$ is fixable.}

We have to explain how to test whether a base $0$-cell $a$ is fixable.
A simple method is by bruteforce optimization.
As remarked above, all we have to do is to optimize the horizontal weight by adding various cocycles.
This can be done in quasipolynomial time $2^{O(\Delta)}\poly(\mF)$ as follows.
There are $2^{O(\Delta)}$ base $1$-chains supported on $\bdry^\transp a$.
For each such $1$-chain $x$, we consider $e + \bdry^\transp (a \otimes y) + x \otimes F_0$ with a fiber $0$-chain $y$ varying.
Since we only care about the horizontal weight, each bit (a fiber $0$-cell) of $y$ can be independently optimized
by a ``majority vote'' over the horizontal cells in its coboundary.

The bruteforce optimization above is more than necessary.
In fact, our definition of fixable cells is designed to adopt approximate optimization.
Let us first consider the condition~(iii).
Suppose a cocycle $z = z^\mathrm{opt}$ supported on $A + a \otimes F_1$ minimizes $|(e+z) \cap A|$.
Then, the condition~(iii) is equivalent to demanding that
\[
|A| - |A \cap e| \ge .8 \big(|A| - |A \cap (e + z^\mathrm{opt})|\big).
\]
Let us call $|A| - |A \cap e'|$ the \emph{local satisfaction} of $e'$ for any bundle $1$-chain $e'$.
Let us think of a base $1$-chain $x$ supported on $\bdry^\transp a$ 
as a collection of binary variables $x_1,\ldots, x_i,\ldots, x_{|\bdry^\transp a|}$.
Similarly, a fiber $0$-chain $y$ is a collection of binary variables $y_1,\ldots,y_j,\ldots,y_\mF$.
Then a horizontal cell $x_i \otimes \varphi(x_i, a)^{-1} y_j \in \bdry^\transp (a \otimes y_j)$ in $A$ can be given coordinates $(i,j)$.
This horizontal cell at $(i,j)$ over $\bdry^\transp a$
is unoccupied in $e + \bdry^\transp( a \otimes y) + x \otimes F_0$
if and only if $x_i + y_j = e_{i,j} \bmod 2$ where $e_{i,j}$ is the occupancy of~$e$ on~$(i,j)$.
Thus, the local satisfaction is the number of these satisfied equations.
The derandomized Goemans-Williamson algorithm~\cite{MR99} gives an approximately optimal solution $x,y$
such that the local satisfaction of $e + \bdry^\transp(a \otimes y) + x \otimes F_0$
is higher than $.878$ times the optimum local satisfaction.

Once the condition~(iii) is met, we can alternatingly optimize $x$ or $y$ while withholding the other.
Each round of this optimization takes time $\poly(\mF,\Delta)$.
This alternation terminates before $O(\Delta \mF)$ rounds because the local satisfaction must increase.
The conditions~(i) and~(ii) are then fulfilled.
Overall, it takes $\poly(N)$ time to test if $a$ is fixable using the described approximate optimization.

\paragraph{Overview of the analysis.}
Let $w(\tau),x(\tau)$ denote the states of $w,x$ after $\tau$ steps of the algorithm.
Let 
\[
e(\tau)=e_{\mathrm{arb}}+\partial^\transp w(\tau)+ x(\tau) \otimes F_0.
\]
The proof of correctness will take two parts.  In the first part, we assume that at the end of the algorithm the shadow weight of $w(\tau)$ is sufficiently small.  Under this assumption, we show that
when the algorithm terminates it has correctly decoded; i.e., the final $e(\tau)$ is equal to $\ezero$ up to a coboundary.\footnote{Note that even with a maximum likelihood decoder it would not be possible to decode $\ezero$ exactly in general since there may be different error patterns of the same weight and with the same coboundary.}
In the second part we show that indeed the shadow weight of $w(\tau)$ indeed remains sufficiently small throughout the algorithm, using the fact that $|e_{\mathrm{arb}}|_\mathrm{vsw} \leq |s|$.

We will find something unusual in the second part of the proof:
the proof that the shadow weight of $w(\tau)$ remains small 
does not just use the fact that the algorithm terminates after a certain number of steps, 
but rather uses graph-theoretic expansion properties.
Indeed, nothing we show rules out the algorithm running for $\gg \nB$ steps.

\paragraph{First part of the analysis.}
Here we show correctness assuming $|w(\tau)|_{\mathrm{sw}}$ is small.
Note that this implies $|e(\tau)|_{\mathrm{vsw}}$ is also small.


\begin{lemma}
Suppose that the algorithm terminates after $\tau$ steps, and suppose that $|e(\tau)|_{\mathrm{vsw}} + |e_0| \le \tfrac{1}{\XConst \Delta} \mF$.
Then $e(\tau)$ is cohomologous to $\ezero$.
\end{lemma}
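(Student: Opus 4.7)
The plan is a proof by contradiction: assume $r := e(\tau) - \ezero$ represents a nontrivial class in $H^1(\cE)$ and produce a base $0$-cell $a^\bullet$ that is fixable for $e(\tau)$, contradicting termination. Note that $r$ is a $1$-cocycle since both $e(\tau)$ and $\ezero$ have coboundary~$s$. Using the basis from \Cref{prop:cohorep}, decompose $r = x \otimes F_0 + \partial^\transp u$ with $u = \sum_a a \otimes u_a$ normalized so each $|u_a| \le \mF/2$ (via the gauge freedom $(u,x)\mapsto(u + a\otimes F_0,\, x + \partial^\transp a)$), with $x$ a base $1$-cocycle representing the class of $r$. Let $S := \{a : u_a \neq 0\}$. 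The hypothesis gives $|S| = |r|_\mathrm{vsw} \le |e(\tau)|_\mathrm{vsw} + |\ezero| \le \tfrac{1}{\XConst\Delta}\mF$, which in our parameter regime also bounds $|S|$ by $\tfrac{1}{\XConst\Delta}m$, so \Cref{cor:counique} applies.

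If $S = \emptyset$, then $r = x\otimes F_0$ with $x$ nonzero in $H^1(\cB)$; for any $b \in x$, the horizontal part of $r$ at $b$ has Hamming weight $\mF$, so the horizontal part of $e(\tau)$ at $b$ has weight at least $\mF - |\ezero| > \mF/2$, violating amendment condition~(i) at any $0$-cell adjacent to~$b$. Otherwise, apply \Cref{cor:counique} to select $a^\bullet \in S$ with at least $.8\Delta$ counique neighbors, and set $k := |u_{a^\bullet}| \in [1,\mF/2]$ and $\alpha$ the fraction of these counique neighbors lying in $x$. The analysis then splits into three subcases. Subcase (a): some counique $b^* \in x$ satisfies $k < \mF/2 - |\ezero|$, so the horizontal part of $r$ at $b^*$ has weight $\mF - k > \mF/2 + |\ezero|$ and the horizontal part of $e(\tau)$ at $b^*$ has weight $> \mF/2$, violating~(i) at $a^\bullet$. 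Subcase (b): either $\alpha \ge 1/2$ or $k$ is close to $\mF/2$; here I build the cocycle
\[
z := \partial^\transp(a^\bullet \otimes u_{a^\bullet}) \;+\; (x \cap \partial^\transp a^\bullet)\otimes F_0,
\]
which is supported on $A_{a^\bullet} + a^\bullet\otimes F_1$ and cancels the horizontal contribution of $r$ at every counique neighbor, so $|(e(\tau)+z)\cap A_{a^\bullet}| \le |\ezero| + .2\Delta\mF$ while $|e(\tau)\cap A_{a^\bullet}| \ge .8\Delta\bigl(\alpha\mF + (1-2\alpha)k\bigr) - |\ezero| \gtrsim .4\Delta\mF$, jointly violating condition~(iii). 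Subcase (c): $\alpha \approx 0$ and $k$ is small; picking any $f^0 \in u_{a^\bullet}$, the cell $(b, \varphi(b,a^\bullet)^{-1}f^0)$ lies in $r_b$ (hence in $e(\tau)_b$ up to an $\ezero$ correction) for every counique $b \notin x$, forcing more than $\Delta/2$ of the $\approx\Delta$ horizontal cells of $\partial^\transp(a^\bullet \otimes f^0)$ into $e(\tau)$, which violates condition~(ii). In each subcase $a^\bullet$ is fixable, a contradiction.

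The main obstacle is subcase~(c): the $.8\Delta$ counique count must dominate $\Delta/2$ even after subtracting the worst-case $\ezero$ contribution on the specific cells $(b,\varphi(b,a^\bullet)^{-1}f^0)$, which demands $|\ezero|$ be small compared to $\Delta$. The strong hypothesis $|\ezero|\le\tfrac{1}{\XConst\Delta}\mF$ (with $\XConst = 10^5$) provides the required slack in the relevant parameter regime. A secondary subtlety is matching the Goemans--Williamson approximation factor $.8$ of condition~(iii) against the $.2|A_{a^\bullet}|$ slack from non-counique neighbors; in subcase~(b) one enjoys a constant gap ($\approx .04\Delta\mF$) that easily absorbs $O(|\ezero|)$ corrections. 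If subcase~(c)'s regime were to fail, an alternate route is to iteratively apply the dropping argument from the proof of \Cref{cohomweight} at the level of the decomposition, reducing to the complementary regime where~(i) or~(iii) suffices and reading off fixability from the final~$a^\bullet$.
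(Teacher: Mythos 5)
Your approach (direct contradiction: find a fixable $a^\bullet$ from the decomposition of $r = e(\tau) - \ezero$) differs structurally from the paper's, and it has a genuine gap that you yourself flag in subcase~(c). The paper does not attempt a one-shot contradiction. Instead it runs an induction over a sequence of ``backgrounds'' $g_0 = \ezero, g_1, g_2, \dots$, shifting one $a^\bullet \otimes y_{a^\bullet}$ from the stabilizer $w$ into the background at each step, and uses the \emph{unfixability} of each successive $a^\bullet$ as leverage to show the logical $x_t$ vanishes on the counique neighbors $C$ of $a^\bullet$. Critically, this requires maintaining an invariant (the paper's Item~4) bounding the \emph{background's} horizontal weight on $\partial^\transp S_t$ by $\mF/100$, which is what replaces your uncontrolled ``$\ezero$ corrections.''

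The fatal issue with your subcase~(c): you need ``$|\ezero|$ small compared to $\Delta$,'' and you claim the hypothesis $|\ezero| \le \tfrac{1}{\XConst\Delta}\mF$ supplies this. It does not. In the operative parameter regime $\mF \sim \nB$ and $\Delta = \Theta(\log^2 \nB)$, the right-hand side is $\Theta(\nB / \log^4 \nB)$, which dwarfs $\Delta = \Theta(\log^2 \nB)$. So when $k = |u_{a^\bullet}|$ is small and $\alpha \approx 0$, the cells $(b, \varphi(b,a^\bullet)^{-1}f^0)$ over the $\approx .8\Delta$ counique neighbors could all lie in $\ezero$ rather than in $e(\tau)$, and condition~(ii) need not be violated for $a^\bullet$. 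You have no local control on $\ezero$ near $a^\bullet$ from a global weight bound alone. The paper sidesteps this precisely because it never tries to show $a^\bullet$ is fixable with respect to a decomposition relative to $\ezero$; it only ever reasons about $e(\tau) = g_t + \partial^\transp w_t + x_t \otimes F_0$, where the background's contribution near $a^\bullet$ is controlled inductively by Item~4. Your final sentence (``iteratively apply the dropping argument... reducing to the complementary regime'') is indeed the right repair, and it is essentially the paper's inductive argument — but you haven't carried it out, and working it out requires establishing Items~1–4, not just subcases~(a) and~(b).

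A secondary gap: subcases~(a) and~(b) also implicitly require that the $\ezero$-contribution near $a^\bullet$ be small relative to quantities like $.4\Delta\mF$ or $\mF/2 - k$. That \emph{does} work because those thresholds are at the $\mF$ scale and the global bound $|\ezero| \le \mF/(\XConst\Delta) \ll \mF/100$ applies — but you should state this explicitly rather than waving at ``$O(|\ezero|)$ corrections,'' since your subcase~(c) shows that global bounds do not always suffice to control local contributions.
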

\begin{proof}
    Say a $1$-chain $g$ in $\calE$ is a \emph{background} if $\partial^\transp g = \partial^\transp e(\tau)$.
    As per the discussion in \Cref{cohomweight}, for any background~$g$ there is a unique decomposition
    \[
        e(\tau) = g + \partial^\transp w + x \otimes F_0,
    \]
    where the $0$-chain $w =\sum_a a \otimes y_a$ has $|y_a| < \nF/2$ for all~$a$ and where $x$ is a $1$-chain in~$\calB$.
    We call~$w$ the \emph{stabilizer} associated to~$g$, and we call~$x$ the
    \emph{logical}.
    It will also be important for us to keep track of the \emph{stabilizer shadow}~$S$ of~$g$ (i.e., $S$ is the shadow of~$w$).

    The proof will consider a sequence $g_0, g_1, g_2, \dots$ of backgrounds, with the initial background $g_0$ being~$\ezero$.
    We will establish the following properties:
    \begin{enumerate}
        \item \label{itm1} \emph{The logical~$x_t$ of each $g_t$ will be the same for all~$t$.}
        \item \label{itm2} \emph{The stabilizer shadow~$S_t$ decreases in size by~$1$ at each step, until it becomes empty.}
        \item \label{itm3} \emph{The logical $x_t$ vanishes outside the coboundary of~$S_t$, where $\cobd S_t = \bigcup_{a \in S_t} \cobd a$. }
        \item \label{itm4} \emph{The horizontal weight of~$g_t$ on $\partial^\transp S_t$ is smaller than $\mF/100$.}
    \end{enumerate}
    Using only \Cref{itm1,itm2,itm3}, we observe that when the sequence ends at some~$t$, we have an empty~$S_t$, the logical $x_t$ must therefore be everywhere zero, and hence the logical $x_0$ of~$\ezero$ also vanishes.
    This implies that $\ezero$ is cohomologous to~$e(\tau)$, as claimed.

    Let us handle the base case of~$g_0$, and then describe how $g_1, g_2, \dots$ are inductively formed and why \Cref{itm1,itm2,itm3,itm4} hold.
    \paragraph{Base case.}
    Regarding the base case $g_0 = \ezero$, we need to establish \Cref{itm3,itm4}.
    The latter is immediate since the total horizontal weight of~$\ezero$ is already small by the beginning assumption of the decoder.
    As for \Cref{itm3}, from $e(\tau) = \ezero + \partial^\transp w_0 + x_0 \otimes F_0$ we see that on every base $1$-cell $b$ outside of~$\partial^\transp S_0$, we have agreement between $e(\tau)$ and~$\ezero$ up to the addition of~$b \otimes F_0$, this addition being governed by whether~$b$ is in~$x_0$.
    But $\ezero$ has horizontal weight less than~$\nF/2$ on every base $1$-cell, including~$b$.
    Thus $x_0$ must  vanish on~$b$ or else $e(\tau)$ would have horizontal weight more than~$\nF/2$ there, contradicting the condition~(i) of amended cells (any cell in the neighborhood of~$b$ would be trivially fixable).
    This establishes \Cref{itm3}, and thus the base case.

    \paragraph{The inductive construction.}
    Suppose we have formed $g_t$.
    If the stabilizer shadow $S_t$ is empty, we are done.
    Otherwise, we apply \Cref{cor:counique} to $S_t$, obtaining some $a^\bullet \in S_{t}$.
    (Note that $|S_0| \leq \frac{1}{\XConst\Delta}\mF$ by the hypotheses of the lemma, and so all subsequent stabilizer shadows also satisfy this bound, by \Cref{itm2}.)
    Writing $w_t = \sum_a a \otimes y_a$, we form the next background by taking $g_{t+1} = g_t + \partial^\transp (a^\bullet \otimes y_{a^\bullet})$ and $w_{t+1} = \sum_{a \neq a^\bullet} a \otimes y_a$; that is, we simply ``shift'' the $a^\bullet \otimes y_{a^\bullet}$ part of $w_t$ to the background.
    Now \Cref{itm1,itm2} clearly hold, and it remains to verify \Cref{itm3,itm4}.

    Write the neighborhood of $a^\bullet$ in~$B$ as $C \cup D$, where the cells~$C$ are counique neighbors of~$S_t$ and the cells~$D$ are non-counique; \Cref{cor:counique} tells us that $|D| \leq \tfrac14 |C|$.

    \paragraph{Inductively verifying \Cref{itm3}.}
    This is equivalent to showing that $x_{t+1} = x_t$ vanishes on~$C$.
    To do this, we begin by observing that when the decoding algorithm terminated with $e(\tau)$, the cell $a^\bullet$ was not fixable.
    Let us evaluate this fact in the context of the decomposition $e(\tau) = g_t + \partial^\transp w_t + x_t \otimes F_0$.
    Fixing cell $a^\bullet$ amounts to arbitrarily altering $y_{a^\bullet}$ in this decomposition, and then optimizing~$x_t$ to achieve minimal horizontal weight.

    A first observation is that while we always have $|y_{a^\bullet}| \leq \mF/2$, we claim that unfixability of~$a^\bullet$ implies that in fact  $|y_{a^\bullet}| \leq .4\mF$.
    Otherwise, $\partial^\transp w_t$ puts horizontal weight between $.4\mF$ and $.5\mF$ on each cell of~$C$, and from \Cref{itm4} we know that $g_t$ modifies this by at most~$.01\mF$ (collectively, even).
    Thus even after optimizing $x_t$, the contribution to $e(\tau)$'s horizontal weight from~$C$ is at least $.39 \nF |C|$.
    On the other hand, if $y_{a^\bullet}$ were fixed to~$0$, the contribution to $e(\tau)$'s horizontal weight from $C \cup D$ would be at most  $.5 \mF |D| \leq .125 \mF |C|$ from~$D$, and at most $.01\mF$ from~$C$ (the possible contribution from~$g_t$, recalling \Cref{itm3}), for a total of at most $.135 \mF |C|$.
    Then, we have $.8(.135 \mF |C|) + .2\mF(|C| + |D|) \le .358 \mF |C| < .39 \mF |C|$.
    This means $a^\bullet$ \emph{was} fixable by violating the condition~(iii) of the definition of amended cells, a contradiction.
    Thus we have established the claim $|y_{a^\bullet}| \leq .4 \nF$.

    But now we deduce that $g_t + \partial^\transp w_t$ has horizontal weight at most $.41 \mF$ on every cell of~$C$.
    This indeed implies that $x_t$ vanishes on~$C$, since $e(\tau)$ also has horizontal weight less than $\mF/2$ on each cell of~$C$ (indeed, as mentioned earlier $e(\tau)$ has horizontal weight less than $\mF/2$ on every base $1$-cell, else it would be trivially fixable).

   \paragraph{Inductively verifying \Cref{itm4}.}
    Let $H_{t+1}$ denote the horizontal weight of $g_{t+1}$ on $\partial^\transp S_{t+1}$, and $H_t$ the horizontal weight of $g_t$ on $\partial^\transp S_{t}$.
    We will in fact show $H_{t+1} \leq H_t$, which is sufficient to inductively verify \Cref{itm4}.
    We may write  $H_{t+1} - H_t = \mathrm{NEW} - \mathrm{LOSS}$, where $\mathrm{LOSS}$ equals the horizontal weight of~$g_t$ on~$C$, and where $\mathrm{NEW}$ is the weight gain in~$D$ when $g_{t}$ is replaced by~$g_{t+1}$.
    In turn, we can write $\mathrm{LOSS}$ as the sum of contributions $\mathrm{LOSS}_u$ from each fiber vertex~$u$, and similarly write $\mathrm{NEW}$ as a sum of contributions $\mathrm{NEW}_u$.
    Our goal will be to show
    \begin{equation}    \label[ineq]{ineq:loss-new}
        \mathrm{NEW}_u \leq \mathrm{LOSS}_u \quad \forall u.
    \end{equation}

    For a fixed fiber vertex~$u$, consider the bit value of the chain $y_{a^\bullet}$ on~$u$, call it $(y_{a^\bullet})_u$.
    Since $a^\bullet$ is not fixable for $e(\tau)$, this value must be locally optimal (in terms of minimizing horizontal weight) given $g_t$, $x_t$, and given $w_{t+1}$, i.e., given the $y_a$ for $a\neq a^\bullet$, by satisfying the condition~(ii) of the definition of amended cells.
    That is, this value $(y_{a^\bullet})_u$ equals the ``majority vote'' --- across all $j \in C \cup D$ --- of the bits $z_{u,j} \coloneqq (g_t + \partial^\transp w_{t+1}+ x_t \otimes F_0)_{\varphi(j,a^\bullet)^{-1} u}$.
    Recall we already established that $x_t$ vanishes on~$C$ and by construction $C$ is not in the coboundary of the shadow of $w_{t+1}$; thus for $j \in C$ we simply have $z_{u,j} = (g_t)_{\varphi(j,a^\bullet)^{-1} u}$.
    Hence we precisely have $\mathrm{LOSS}_u = |\{j \in C : z_{u,j} = 1\}|$.

    As for $\mathrm{NEW}_u$, it is
    zero if $(y_{a^\bullet})_u = 0$, in which case \Cref{ineq:loss-new} certainly holds.
Suppose instead $(y_{a^\bullet})_u = 1$.
    Then, since the majority vote of~$z_{u,j}$ across $j \in C \cup D$ is~$1$, we must have
    \begin{align*}
        |\{j \in C : z_{u,j} = 1\}| + |\{j \in D : z_{u,j} = 1\}| &\geq (|C|+|D|)/2 \\
        \implies\quad \mathrm{LOSS}_u + |D| &\geq (|C|+|D|)/2 \\
        \implies\quad \mathrm{LOSS}_u  &\geq (|C|-|D|)/2,
    \end{align*}
    so, since $|C|\geq 3 |D|$ we have $\mathrm{LOSS}_u\geq |D|$.
        At the same time, trivially $\mathrm{NEW}_u \leq |D|$,
    verifying \Cref{ineq:loss-new}.
\end{proof}

\paragraph{Second part of the analysis.}
It remains to show that for any step~$\tau$ of the decoding algorithm, $|w(\tau)|_{\mathrm{sw}}$ is sufficiently small compared to $\nB/\Delta$.
Define $Q$ to be the shadow of the horizontal part of $e_{\mathrm{arb}}$,
i.e., $Q$ is the set of all base $1$-cells over which there is some horizontal cell of $e_\mathrm{arb}$.
From the construction of $e_\mathrm{arb}$ we have 
\begin{align}
|Q| \le |\cobd e_b| \le 1.01 \Delta |e_b| \le 1.01 \Delta |\cobd e_0| \le 1.01 \Delta(1.01 \Delta + 2)|e_0| \le 2\Delta^2|e_0| \le \tfrac{1}{10} \tfrac{1}{\XConst}\mB \label{eq:sizeQ}
\end{align}
Define $P(\tau)$ to be the set of all base $0$-cells that were amended in the first $\tau$ steps of the algorithm.
We have that the shadow of $w(\tau)$ is contained in $P(\tau)$;
we may not have equality as it is possible that a cell could be amended multiple times.
We are going to bound $|P(\tau)|$.
Note that $P(0) = \emptyset$.

Suppose a base $0$-cell $a$ is first included in $P(\tau)$ at step $\tau' \le \tau$, i.e., $a \in P(\tau') \setminus P(\tau'-1)$,
then since \mbox{$e(0), e(1), \dots, e(\tau'-1)$} have no support on base $1$-cells outside $Q \cup \cobd P(\tau'-1)$,
at least half of $\cobd a$ must be in \mbox{$Q \cup \cobd P(\tau'-1)$} for~$a$ to be fixable.%
\footnote{
In fact, if more than half of $\cobd a$ are outside the shadow $S'$ of the horizontal part of $e(\tau'-1)$, then $a$ is not fixable.
The reason is as follows.
Any amendment is the change from $e(\tau'-1)$ to $e(\tau'-1) + \cobd (a \otimes y') + x' \otimes F_0$ 
for some fiber $0$-chain $y'$ and some base $1$-chain $x' \subseteq \cobd a$,
where we may assume that $|y'| < \half \mF$ without loss of generality.
Then, $x'$ must vanish outside $S'$ to fulfill the amendedness condition~(i).
This implies that $y'$ must vanish everywhere to fulfill the amendedness condition~(ii).
In turn, this implies that $x'$ must vanish on $S'$
because the amendedness condition~(i) has been and should be obeyed.
}
Hence, every cell $a \in P(\tau)$ must have at least half the cells in its coboundary
either in $Q$ or in the coboundary of some other cell in~$P(\tau)$.
In other words, the number of counique neighbors of~$a \in P(\tau)$ that are not in~$Q$ is at most~$\half |\cobd a|$.
Therefore, the total number of counique neighbors of~$P(\tau)$ that are not in~$Q$ is at most~$\half 1.01 \Delta |P(\tau)| \le .6 \Delta |P(\tau)|$.

On the other hand, if $|P(\tau)| \leq \frac{1}{\XConst\Delta} \mB$,
the number of noncounique neighbors of $P(\tau)$ cannot exceed $.1 \Delta |P(\tau)|$ by \Cref{cor:counique}.
By counting counique and noncounique neighbors of $P(\tau)$ that are not in $Q$, we have
\begin{align*}
	|\partial^\transp P(\tau)| 
		\le |Q| + |\partial^\transp P(\tau) \setminus Q| \le	 |Q| + .1 \Delta |P(\tau)| + .6 \Delta |P(\tau)|.
\end{align*}
But \Cref{prop:cohom-unique} implies that so long as $|P(\tau)| \leq \frac{1}{\XConst\Delta} \mB$,
we know the left-hand side $|\cobd P(\tau)|$ is at least $.9 \Delta |P(\tau)|$; we deduce that
\[
    |P(\tau)| \leq \tfrac{1}{\XConst\Delta} \mB \quad \implies \quad |P(\tau)| \leq \tfrac{1}{.2\Delta}|Q|.
\]
Using $|Q| \le \tfrac 1 {10} \tfrac{1}\XConst \mB$ from \cref{eq:sizeQ},
we see for all sufficiently large $\mB$ that
\[
|P(\tau)| \le \half \tfrac{\mB}{\XConst\Delta} 
\implies |P(\tau+1)| \le |P(\tau)|+1 \le \tfrac{\mB}{\XConst\Delta}
\implies |P(\tau+1)| \le \tfrac{1}{.2\Delta}|Q| \le \half \tfrac{\mB}{\XConst \Delta} .
\]
We conclude that $|P(\tau)|$ is always bounded by $\half \tfrac{1}{\XConst \Delta}\mB$.

\subsection{Decoding homology}
We now give a proposed algorithm for decoding homology.  We conjecture, but do not prove, that this algorithm decodes errors
of weight up to a polylogarithmic fraction of $d_Z$.

 The algorithm takes as input a $0$-chain $s_0$ in $\cE$ which contains the syndrome.  It initializes a $1$-chain $u$ in $\cE$ to $0$.  It initializes some $0$-chain $s$ in $\cE$ to $s_0$.  As the algorithm proceeds, it modifies the chain $u$ by a sequence of local updates explained below.  After each update, the algorithm then updates $s$ so that $s=s_0+\bdry u$.
 The algorithm attempts by these local updates to reduce $|s|$.

Define a {\it fiber string} of length at most $r$ to be a vertical $1$-chain of Hamming weight at most $r$ whose boundary consists of exactly $2$ $0$-cells.  The string can be thought of as stretching between between these two $0$-cells.  Here we assume that $r<\ell<n_F$.

The algorithm has a counter $r$, initialized at $r=0$.  The algorithm loops over $r=0$ to $r=\ell/\polylog(n_F)$.
For update $r$, the algorithm performs a greedy search, trying to
reduce $|s|$ by either adding to $u$
a fiber string of length at most $r$ or by adding to $u$ some horizontal cell plus some sum of fiber strings of length at most $r$.

The algorithm performs this greedy search for the given $r$ until it is not possible to reduce $|s|$ further, at which point it increments $r$ and continues the loop if $r<\ell/\polylog(n_F)$.

After the loop terminates, the algorithm then takes the given error chain $s$ and bundle projects it to the base, giving a $0$-chain in $\cB$.  It then attempts to find some $1$-chain $w$ in $\cB$ such that $\partial w$ is equal to the given $0$-chain in $\cB$ and so that $|w|$ is small compared to the distance of the base code.  (We conjecture that the base code is such that, if the weight of the bundle projected $s$ is small enough, then such a $w$ can be found by a greedy search; this seems easier to prove than some other properties we need also.)

Finally, given $w$, one may find some horizontal $1$-chain $x$ in $\cE$ whose bundle projection is equal to $w$, and with $|x|=|w|$; for example, one may simply take $x=w \otimes f^0$ for any fiber $0$-cell $f^0$.
Then, the chain $u+x$ has the property that its boundary, after bundle projection, is equal to $s_0$ after bundle projection, and so one can add some vertical $1$-chain $y$ to $u+x$ to obtain a $1$-chain whose boundary is $s_0$;
The algorithm then outputs this $1$-chain.

Let us sketch why we conjecture this works.
Suppose the true error pattern was some $1$-chain $e$ with $\bdry e=s_0$.
It is possible that $e$ might, for example, even be a sum of stabilizers, in which case perhaps $s_0=0$ even though $e\neq 0$.
Without loss of generality, we may assume that $e$ is a minimal weight error pattern with $\bdry e=s_0$.
After the loop terminates, we have that $e+u$ is given by some sum $h+v$ of horizontal and vertical chains.
We expect that $|h|$ will still be small compared to $d_Z$ and so $|\partial h|$ will be large, indeed proportional to $|h|$ in some way depending on $\Delta$.
We expect however that many of the $0$-cells in $h$ will be attached to fiber strings of length $>r$ in $v$, where ``attached" means that one of the two cells in the boundary of that string is the given $0$-cell in $h$.
However, assuming the greedy algorithm does not increase the weight of $v$ too much compared to the weight of vertical cells in $e$, then for $r=\ell/\polylog(n_B)$, the weight of $h$ must then be small compared to $n_B$.
So, at this point, the algorithm has (assuming these conjectures are correct) computed $e$ up to some $1$-chain $h+v$ with $|h|$ small.
The algorithm then returns some other horizontal chain $x$, with $|x|$ small,
and so computes $e$ up to $h+v+x+y$, with $h,x$ horizontal and having small Hamming weight and $v,y$ vertical, i.e., it computes $e$ up to a closed $1$-chain such that the Hamming weight of this chain on horizontal cells is small compared to the distance of the base code.
However, any such closed $1$-chain is homologically trivial.

\subsection{Decoding cohomology against erasure errors in almost linear time}

\newcommand{\erased}{{\mathsf D}} 
\newcommand{\certain}{{\mathsf C}} 
\newcommand{\peel}{{\mathsf P}} 
\newcommand{\findpeel}{{\mathfrak P}}

With \emph{erasure errors},
we are given  syndrome bits and a specific set $\erased$ of qubits (erased qubits)
on which there are potential errors.%
\footnote{
One may consider an equivalent setting
in which the erased qubits are simply lost with no errors on other qubits;
in this case, one may initialize those qubits arbitrarily and then measure stabilizers.
}
The location of the true errors is unknown
and it is the decoder's goal to determine the true errors up to stabilizers.
Here we present an algorithm for this erasure decoding problem
where the actual errors are $X$, denoted as a bundle $1$-chain $x \in \calE_1$,
under the assumption that~$|\erased|$ is less than a polylogarithmic fraction of~$\mF$,
which is smaller than~$d_X$ by \Cref{cohomweight}.

If we find any chain $x'$ that reproduces the given syndrome, i.e., $\bdry^\transp x' = \bdry^\transp x$,
with the constraint that $x' \subseteq \erased$,
then the combination $x + x'$ of the actual errors and a correction
is coclosed and has weight less than $d_X$,
and hence is a coboundary.
It is thus obvious that this erasure decoding problem is solved in time $|\erased|^3 \poly(\Delta)$
since $x'$ is a solution of an inhomogeneous system of linear equations of $|\erased|$ variables
that participate in $|\erased| \poly(\Delta)$ equations.
It is however not so obvious whether such $x'$ can be found in linear time in $|\erased|$.

\begin{lemma}\label{lem:coho-erasure-decoding}
Assume the supposition of \Cref{cohomweight}.
If $|\erased| < \mF / (\XConst \Delta^2)$,
then in time $|\erased| \poly(\Delta,\log |\erased|)$
we can find a chain $x'$ such that $x' + x$ represents the zero cohomology class.
\end{lemma}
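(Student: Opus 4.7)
The plan is to employ a peeling-style decoder on the sub-Tanner-graph induced by $\erased$, supplemented by local linear-algebraic clean-up where pure peeling stalls. The algorithm maintains a partial guess $x' \in \F_2^\erased$ (initially zero) and a residual syndrome $s' = \bdry^\transp x + \bdry^\transp x'$ (initially $\bdry^\transp x$). A \emph{peeling step} locates a $2$-cell $c \in \calE_2$ whose boundary contains exactly one still-unresolved $1$-cell $e^\star \in \erased$ and sets $x'_{e^\star} := s'_c$; this value is forced, since every other $1$-cell in $\partial c$ has a known value. Using a priority queue indexed by $2$-cells, each step runs in $O(\Delta \log |\erased|)$ time, since each resolution touches at most $\Delta$ other $2$-cells when updating $s'$.

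For correctness, once all of $\erased$ is resolved we have $\bdry^\transp x' = \bdry^\transp x$ by construction, so $x + x'$ is a cocycle supported on $\erased$ of weight at most $2|\erased| \le 2\mF/(\XConst \Delta^2)$. By \Cref{cohomweight} this is far below $d_X$, hence $x + x'$ is a coboundary, as required.

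The main obstacle is that vanilla peeling can get stuck on stopping sets, since the bundle code admits very small ones: for any base $0$-cell $a$ and fiber $0$-cell $f^0$, the coboundary $\bdry^\transp(a \otimes f^0)$ has support of size $|\bdry^\transp a| + 2 = O(\Delta)$ and is a cocycle, hence a stopping set. So the naive stopping-set lower bound is only $\Omega(\Delta)$, not $\Omega(d_X)$. To handle this, I would augment peeling with a cluster-resolution phase: when peeling stalls, identify each connected component $K$ of the residual unresolved sub-Tanner-graph and solve its restricted linear system by Gaussian elimination. Any two solutions restricted to $K$ differ by a cocycle supported on $K$; since $|K| \le |\erased| \ll d_X$, any such cocycle is a coboundary and thus cohomologically trivial, so the final $x'$ is automatically cohomologous to $x$.

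For the stated running time, the remaining step is to show that each connected component has size $\poly(\Delta, \log|\erased|)$, so that its Gaussian elimination runs in time bounded as claimed and amortizes to $|\erased| \poly(\Delta, \log|\erased|)$ overall. The key structural claim, which I would prove by adapting the argument of \Cref{cohomweight}, is that any connected component of a stopping set inside $\erased$ decomposes into small pieces each arising from the coboundary of a $0$-chain localized near a single base $0$-cell. To establish this I would iteratively apply \Cref{cor:counique} to the vertical shadow of a putative large component: a ``counique'' base $0$-cell can be peeled off the component, since its only vertical erased contribution is localized, and the cyclic fiber structure (as used in the $V = \emptyset$ base case of \Cref{cohomweight}) allows the resulting arc-boundaries to be resolved. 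The hardest step will be ruling out that overlapping small coboundaries conspire into a single connected cluster that cannot be broken up this way; this requires combining the base expansion from \Cref{cor:counique} with the fiber cycle structure, analogously to the two-shadow case analysis underlying \Cref{cohomweight}.
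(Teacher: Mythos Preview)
Your correctness argument is fine: any $x'\subseteq\erased$ with $\bdry^\transp x'=\bdry^\transp x$ yields a cocycle $x+x'$ of weight at most $|\erased|<d_X$, hence a coboundary. The problem is the running-time analysis, specifically the ``key structural claim'' that every connected component of a stopping set inside $\erased$ has size $\poly(\Delta,\log|\erased|)$. This claim is false. Take a single base $0$-cell $a$, an interval $I=\{f^1_1,\dots,f^1_L\}$ of consecutive fiber $1$-cells, and set
\[
\erased \;=\; \{\,a\otimes f^1_i : 1\le i\le L\,\}\;\cup\;\{\,b\otimes\varphi(b,a)^{-1}f^0_i : b\in\bdry^\transp a,\ 1\le i\le L-1\,\}.
\]
One checks directly that every $2$-cell meeting $\erased$ meets it in at least two $1$-cells, so this is a stopping set on which pure peeling stalls immediately. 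It is a single connected component of size $L+|\bdry^\transp a|(L-1)=\Theta(L\Delta)$, and $L$ can be taken as large as $\Theta(\mF/\Delta^3)$ while still respecting $|\erased|<\mF/(\XConst\Delta^2)$. Gaussian elimination on this component then costs $\Theta(|\erased|^3)$, not $|\erased|\,\poly(\Delta,\log|\erased|)$. Your proposed induction via counique cells cannot repair this: here the vertical shadow is the single cell $a$, trivially counique, yet the component is still large.

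The paper's algorithm avoids this obstacle by a different mechanism. When peeling stalls it does \emph{not} freeze the remaining cluster and solve it by linear algebra; instead it \emph{modifies} $\erased$ using an $X$-stabilizer. Concretely, it locates (via the expansion argument you allude to, applied to the vertical shadow and then to a ``spine'' over the counique base $0$-cell) a bundle $0$-cell $u$ and a vertical cell $v\in\bdry^\transp u\cap\erased$ such that deleting $v$ would expose more than half of the horizontal cells of $\bdry^\transp u$ to ordinary peeling. The move is to replace $\erased\leftarrow(\erased\setminus\{v\})\cup(\bdry^\transp u\setminus\{v\})$; any correction supported on the old $\erased$ has a stabilizer-equivalent correction supported on the new one. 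This temporarily enlarges $\erased$ by at most $\tfrac12|\bdry^\transp u|$ cells but immediately enables peeling of more than that many, so $|\erased|$ strictly decreases after each such round. In the long-strip example above, this is exactly what happens: one stabilizer move at the end of the strip trades the end vertical cell for cells already present, exposes $\sim\Delta$ horizontal cells to peeling, and the process repeats $L$ times for total work $\Theta(L\Delta)\,\poly(\Delta)=|\erased|\,\poly(\Delta)$. The essential idea you are missing is this stabilizer-assisted deformation of $\erased$; bounding component size is neither achievable nor needed.
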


\begin{proof}
The algorithm is based on belief propagation that deforms $\erased$ to decrease $|\erased|$ down to zero.
Let us define two sets $\certain \subset \erased$ and $\peel \subset \calE_{0,0} \times \calE_{0,1}$.
Roughly speaking, $\certain$ consists of cells on which we know errors with certainty,
and $\peel$ is the set of cells of $\erased$ that we can ``peel off'' to produce more cells that will qualify for $\certain$.
We will consider the time complexity later.

\paragraph{Removing obvious errors.}
$\certain$ consists of all cells $e \in \erased$
such that some $2$-cell touches no other cells of $\erased$ but $e$.
The error on $e$ is thus determined.
By removing $\certain$ from $\erased$ and recording the determined errors
and iterating these,
we can sometimes eliminate all of $\erased$.
For example, if $\erased(0)$ at time step~0
is a collection of consecutive horizontal cells over a single base $1$-cell,
occupying an interval in the fiber,
then $\certain(0)$ would be just two end cells
and the shrunk $\erased(1) = \erased(0) - \certain(0)$ will have two end cells in $\certain(1)$.
This continues until $\erased(t)$ becomes empty.
If this iteration does not eliminate all of $\erased$,
then we are left with $\erased(t_1) \neq \emptyset$
where every 2-cell on the coboundary of some cell of $\erased(t_1)$
meets at least two $1$-cells of $\erased(t_1)$.
Let $\erased(t_j)$ be any configuration such that $\certain(t_j) = \emptyset$.

\paragraph{Representing an error on one cell by others using stabilizers.}
We use $X$-stabilizers (associated with bundle $0$-cells) to rescue the situation.
An $X$-stabilizer is~$\bdry^\transp u$ for some bundle $0$-cell $u$,
and $\bdry^\transp u$ contains exactly two vertical cells,
``between'' which there are $(1 \pm .01)\Delta$ horizontal cells.
Suppose that (i) $\erased(t_j)$ contains at least one of the the vertical cells, say~$v$, of~$\bdry^\transp u$,
and that (ii) $\erased(t_j)$ contains
more than half of the horizontal cells of $\bdry^\transp u$
which would belong to $\certain(t_j)$ if $v \in \bdry^\transp u$ were absent from $\erased(t_j)$.
Given $\erased(t_j)$,
we collect all pairs $(u,v)$ satisfying (i),(ii) to form~$\peel(j)$.
For some $(u,v) \in \peel(j)$,
we set $\erased(t_j + \half) = (\erased(t_j) - v) \cup (\bdry^\transp u - v)$.
We will show that~$\peel(j)$ is nonempty shortly using \Cref{prop:cohom-unique},
but let us see first why this is a rescue.

It is important to note that any error (no error or $X$) on $v$
can be expressed by errors on $\bdry^\transp u - v$ up to $X$-stabilizers (coboundaries of $0$-cells).
So, for any correction on $\erased(t_j)$ there is an equivalent correction on $\erased(t_j + \half)$.
Seeking a correction on $\erased(t_j + \half)$, rather than on $\erased(t_j)$,
potentially increases overall weight of the correction,
but the increment is not big as we show:
$\erased(t_j + \half)$ lacks~$v$, so the condition (ii) for $\peel(j)$ implies that $\certain(t_j +\half)$
has more than half of the horizontal cells of $\bdry^\transp u$,
i.e., $|\certain(t_j + \half)| > \half( |\bdry^\transp u| - 2)$.
Since $\erased(t_j)$ contains more than half of the horizontal cells of $\bdry^\transp u$ and also $v$,
we see $|\erased(t_j + \half)| < |\erased(t_j)| + \half (|\bdry^\transp u| - 2)$.
Therefore, removing $\certain(t_j + \half)$ from $\erased(t_j + \half)$,
we obtain $\erased(t_j + 1)$ whose cardinality is strictly less than $|\erased(t_j)|$.

Hence, if we employ $\peel(j)$ whenever $\certain(t_j) = \emptyset$,
we can always decrease $|\erased(t)|$ until $\erased(t)$ becomes empty.
The final correction $x'$ has weight bounded by $ |\erased|(1+0.6 \Delta)$
because each transition $\erased(t_j) \to \erased(t_j + \half)$
can enlarge the support of $x'$ by the number of added cells which is less than $1 + \half 1.01 \Delta$.
It follows that $x+x'$ has weight less than $|\erased| \Delta$ which is less than $d_X$,
implying that $x+x'$ represents the zero cohomology class.

\paragraph{$\peel(j) \neq \emptyset$ whenever $\erased(t_j) \neq \emptyset$ but $\certain(t_j) = \emptyset$.}
Consider the \emph{shadow} of the vertical cells of $\erased(t_j)$ onto the base.
Here by shadow we mean as before the set of all base $0$-cells
above which there is at least one vertical cell in $\erased(t_j)$.
Since the cardinality of the shadow is less than $m/(\XConst\Delta)$
(which follows by induction in $j$ with the assumption that $\Delta |\erased(0)| < m / (\XConst\Delta)$),
\Cref{prop:cohom-unique} implies that
there is a base $0$-cell $a$ such that more than half of its coboundary $1$-cells are counique neighbors.
Let us go to the fibers over $a$ and its counique neighbors $b_1, \ldots, b_q$. 
Any vertical cell $v = a \otimes f^1 \in \erased(t_j)$ above $a$
must share every $2$-cell $b \otimes f^1$ in its coboundary with some other $1$-cells of $\erased(t_j)$.
If $b$ here is a counique neighbor of $a$,
then the 2-cell $b \otimes f^1$ cannot be shared with any other vertical cell of $\erased(t_j)$,
so it must be shared with a horizontal cell of $\erased(t_j)$.
Conversely, if any horizontal cell over a counique neighbor $b$ of $a$
shares a 2-cell with a vertical cell in their coboundary,
it must do with a vertical cell over $a$.
Hence, within $\erased(t_j)$
all the vertical cells over $a$ and all the horizontal cells over the counique neighbors of $a$,
together form a graph $\findpeel$ whose every node is linked to some other node by sharing a $2$-cell.

Consider the \emph{spine} of $\findpeel$;
the spine is the union of the images of all nodes of $\findpeel$
under the map $a \otimes f^1 \mapsto a \otimes f^1$
and $b \otimes \varphi(b,a)^{-1} f^0 \mapsto a \otimes f^0$ for any counique neighbor $b \in \bdry^\transp a$.
The spine has (far) less than $\nF$ elements,
and hence divided into consecutive clusters.
Any cluster cannot have a $0$-cell (that comes from a horizontal cell of $\findpeel$) at either end;
if it did, there would be a horizontal cell that is exposed to a bundle $2$-cell
that only sees this horizontal cell.
Let $v = a \otimes f^1$ be the $1$-cell (that comes from a vertical cell of $\findpeel$) at the bottom end of a cluster.
Any horizontal cell linked to $v$ in $\findpeel$ must be $b \otimes f^0_\mathrm{top}$ for some counique neighbor $b \in \bdry^\transp a$
where $f^0_\mathrm{top}$ is the top end of $f^1$ ($\bdry f^1 = f^0_\mathrm{top} + f^0_\mathrm{bottom}$)
because $v$ is at the bottom of a cluster in the spine.
Therefore, $\erased(t_j)$ contains the horizontal cells
$b_i \otimes \varphi(b_i,a)^{-1} f^0_\mathrm{top}$ but not $b_i \otimes \varphi(b_i,a)^{-1} f^0_\mathrm{bottom}$ for all $i = 1, \ldots, q$.
If we deleted $v$ from $\erased(t_j)$, then all $b_i \otimes \varphi(b_i,a)^{-1} f^0_\mathrm{top}$ would belong to $\certain(t_j)$.
Now, a bundle $0$-cell $u = a \otimes f^0$ has coboundary
that contains $v = a \otimes f^1$ and all $b_i \otimes \varphi(b_i,a)^{-1} f^0_\mathrm{top}$ for $i = 1,\ldots,q$.
Since $q$ is more than half the degree of $a$,
we see that $(u,v) \in \peel(j)$.

\paragraph{Time complexity.}
One should not compute $\certain$ and $\peel$ every time $\erased$ is updated.
Instead, they should be initially computed once by going over all cells of $\erased$ and small neighborhoods,
and every time a cell is removed or added to $\erased$,
the sets $\certain$ and $\peel$ should be updated accordingly.
For each cell of $\erased$ it takes time $O(\Delta)$ to determine its membership to
$\certain$ and $\peel$.
When removing a cell of $\certain$ from $\erased$, there are $O(\Delta^2)$ cells to examine,
so it takes time $\tilde O(\Delta^3)$ to remove a cell and update the sets.
In the transition $\erased(t) \to \erased(t + \half)$,
we alter $O(\Delta)$ cells so it takes $\tilde O(\Delta^4)$ to complete this transition.
The total number of removals is bounded by $|\erased(0)| \Delta$,
yielding overall time complexity $\tilde O(|\erased(0)|\Delta^5)$.
\end{proof}

\section*{Acknowledgments}
MBH thanks Mike Freedman for explaining spectral sequences; they weren't needed since the homology was computed in a more elementary way, but they helped in understanding the homology of fiber bundles.
RO thanks Venkat Guruswami for coding theory discussions, and thanks Microsoft Quantum for hosting him throughout the time this work was completed.
We thank G. Z\'{e}mor for pointing out an error in the weight reduction result of \cite{Has17qic}.

\bibliographystyle{alpha}
\bibliography{quantum}

\newpage
\appendix
\section{Weight Reduction and Chain Homotopy}
\label{wrc}
In this appendix, we show how to weight-reduce the classical base code, and then construct the fiber bundle code as a bundle over that.
We will spend a fair amount of time developing some generalities.  One of the main messages is that a way to think of two codes as being ``equivalent" is the notion of homotopy equivalence of chain complexes.  This is an old notion in homological algebra; we will combine it with an additional requirement of some ``Lipschitz bounds" on certain maps.
Without these Lipschitz bounds, the notion of equivalence would be somewhat vacuous for quantum codes as any two codes with the same number of logical qubits (and, if any, the same number of redundant $X$- and $Z$-stabilizer generators) would be equivalent.
These Lipschitz bounds will enable us to give a sharper notion of equivalence, relating the distance of the codes, and proving that efficient decoding for one code implies efficient decoding for an equivalent code up to some fraction of the distance.

Then, having given these generalities, we will show first a homotopy equivalence of two different classical base codes, which is then used to show a homotopy equivalence of the fiber bundle code to a weight-reduced code.

\subsection{Review on Chain maps and homotopies}
For two chain complexes $\cA,\cB$,
a {\it chain map} $f$ from $\cA$ to $\cB$ is a linear map from $\cA_j$ to $\cB_j$ for each ~$j$, such that
\[
\partial_\cB f = f \partial_\cA.
\]
Given two chain maps $f,g$ from $\cA$ to $\cB$, a {\it chain homotopy} is a linear map
$h$ from $\cA_j$ to $\cB_{j+1}$ for each $j$ such that
\[
f-g=h \partial_\cA + \partial_\cB h.
\]
In this case, the maps $f,g$ are said to be {\it homotopic}.

These definitions of chain map and chain homotopy are both standard definitions, long established in the literature.
Intuitively, one may think of a chain map between chain complexes as being an analogue of a continuous function between topological spaces.

Note that any chain map $f$ induces a map on homology, $f_* : H_j(\cA)\rightarrow H_j(\cB)$, since it maps cycles to cycles.
For the same reason, the transpose of $f$, written $f^\transp$, induces a map on cohomology $f^* : H^j(\cB) \to H^j(\cA)$.
Of course, these maps need be neither injective nor surjective; 
for example the map $f$ that maps every chain to the zero chain is a chain map.
Our interest later will be in particular chains maps where the induced map is both injective and surjective.

The following lemma is standard:
\begin{lemma}
\label{samemaphom}
Any two homotopic chain maps induce the same maps on homology and on cohomology.
\begin{proof}
We have
$f-g=h \partial_\cA + \partial_\cB h,$
and $h\partial_\cA$ vanishes on cycles and $\partial_\cB h$ is a boundary.
For cohomology, take the dual (transpose) of everything: we have $f^\transp-g^\transp=\partial_\cA^\transp h^\transp+h^\transp \partial_\cB^\transp$, and
$h^\transp \partial_\cB^\transp$ vanishes on cocycles while $\partial_\cA^\transp h^\transp$ is a coboundary.
\end{proof}
\end{lemma}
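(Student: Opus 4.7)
The plan is to verify both claims directly from the defining identity of a chain homotopy, $f - g = h \partial_\cA + \partial_\cB h$, applied to cycles and cocycles respectively. No auxiliary machinery is needed; this is a one-line calculation in each case, standard in homological algebra, so I expect no real obstacle, only some degree-index bookkeeping to keep straight.

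For the homology statement, I would evaluate $(f-g)(c)$ on an arbitrary $j$-cycle $c \in \ker \partial_\cA \subseteq \cA_j$. The term $h \partial_\cA c$ vanishes because $c$ is closed, and what remains, $\partial_\cB(hc)$, is manifestly a boundary in $\cB_j$. Hence $f(c)$ and $g(c)$ represent the same class in $H_j(\cB)$, which is exactly the claim $f_* = g_*$ on $H_j(\cA)$. Incidentally, this argument relies only on well-definedness of $h$ as a map $\cA_j \to \cB_{j+1}$ at each degree, not on any feature specific to $\F_2$-coefficients, so it applies in the generality the paper has set up.

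For the cohomology statement, the idea is to dualize and reduce to the homology argument on the dual complex. Transposing the homotopy identity, and using that transposition reverses composition, yields
\[
f^\transp - g^\transp \;=\; \partial_\cA^\transp h^\transp \,+\, h^\transp \partial_\cB^\transp,
\]
which exhibits $h^\transp$ as a chain homotopy between $f^\transp$ and $g^\transp$ in the dual direction. Evaluating on a cocycle $\alpha \in \cB_j$ (identified with $\cB_j^*$ via the preferred basis) with $\partial_\cB^\transp \alpha = 0$, the term $h^\transp \partial_\cB^\transp \alpha$ vanishes and $\partial_\cA^\transp(h^\transp \alpha)$ is a coboundary in $\cA_j$. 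Thus $f^*[\alpha] = g^*[\alpha]$, as desired.

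The only genuinely delicate point is the degree-indexing: if one writes $h = (h_j)$ with $h_j : \cA_j \to \cB_{j+1}$, then $h^\transp$ shifts degree by $-1$ on the dual complex, which is exactly the right degree shift to serve as a homotopy there. Beyond confirming that shift, the proof is immediate. I would therefore expect the write-up to be essentially two symmetric one-line computations preceded by a remark about the degree conventions.
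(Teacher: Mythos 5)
Your proof is correct and follows essentially the same route as the paper's: evaluate the homotopy identity on a cycle to see that $f-g$ lands in boundaries, then transpose the identity and run the identical argument on cocycles. The extra remarks on degree bookkeeping and coefficient-independence are fine but add nothing beyond what the paper's two-line proof already establishes.
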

Remark: the transpose of a chain map commutes with the coboundary operator; such a map is commonly called a cochain map.

To motivate this definition of a chain map, one might have instead considered the following alternative definition inspired by the concept of a homotopy between two functions (this is also standard; see for example \cite{NCL}).
Let $\cI$ be a chain complex which is a cellulation of the interval $[0,1]$; for example, we may choose $\cI$ to have one $1$-cell labelled $e$ and two $0$-cells, labelled $v_0$ and $v_1$, such that $\partial e=v_1-v_0$.
Then, one might define a homotopy $m$ to be a chain map from $\cI \otimes \cA$ to $\cB$, such that
for any $\cA$-chain $u$ we have $m(v_0 \otimes u)= f(u) $ and $m(v_1 \otimes u )= g(u)$.

However, let us see that this new definition is the same as the old.
Since $m$ is a chain map,
$$\partial (m(e \otimes u))= m((v_1-v_0) \otimes u )-
 m(e \otimes(\partial u)).$$
Define
$h(w)=-m(e \otimes w)$, so the above equation becomes
$-\partial h(u)=g(u)-f(u)+h(\partial u).$
So, we have recovered a chain homotopy $h$ from $m$.

Similarly, suppose we have a chain homotopy
$h$ such that $f-g=h \partial_\cA + \partial_\cB h$.
Define a linear map~$m$ from $\cI \otimes \cA$ to $\cB$ as follows.
Let $m(v_0 \otimes u)=f(u) $ and $m(v_1 \otimes u )=g(u)$ as before
and define $$m(e \otimes u)= -h(u).$$
Then
one may verify that $m$ is a chain map.
So, the two definitions are indeed equivalent.

Given two chain complexes $\cA,\cB$, a {\it homotopy equivalence} 
is a pair of chain maps $
\xymatrix{
\cA \ar@/^/[r]^{f} & \cB \ar@/^/[l]^{g}
}
$ such that
$\cA \xrightarrow{f} \cB \xrightarrow{g} \cA$ is homotopic to the identity map on $\cA$ 
and $\cB \xrightarrow{g} \cA \xrightarrow{f} \cB$ is homotopy to the identity map on $\cB$.

The following is again well known:
\begin{lemma}
\label{homiso}
A homotopy equivalence induces isomorphisms on homology and cohomology.
\end{lemma}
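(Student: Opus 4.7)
The plan is to derive Lemma \ref{homiso} as an essentially formal consequence of Lemma \ref{samemaphom} combined with the functoriality of passing from chain maps to induced maps on (co)homology. Given chain maps $f:\cA\to\cB$ and $g:\cB\to\cA$ with $gf$ homotopic to $\id_\cA$ and $fg$ homotopic to $\id_\cB$, I expect to obtain mutual inverses on homology and on cohomology with almost no work beyond invoking what has already been established.

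First I would verify functoriality: for any composable chain maps $p,q$ we have $(qp)_* = q_*p_*$ on homology, since the induced map is just the restriction of the underlying linear map to kernel-modulo-image, and composition of representatives composes. Likewise $(\id_\cA)_* = \id_{H_*(\cA)}$. Then Lemma \ref{samemaphom} applied to the homotopy $gf \simeq \id_\cA$ gives $g_* \circ f_* = (gf)_* = (\id_\cA)_* = \id_{H_*(\cA)}$, and similarly $f_* \circ g_* = \id_{H_*(\cB)}$ from $fg \simeq \id_\cB$. Together these two equations show that $f_*$ and $g_*$ are two-sided inverses of each other, hence $f_*:H_j(\cA)\to H_j(\cB)$ is an isomorphism for every~$j$.

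For the cohomology statement, I would repeat the argument with transposes. The transpose of a chain map is a cochain map, composition reverses, and functoriality on cohomology reads $(qp)^* = p^* q^*$. Lemma \ref{samemaphom} already asserts that homotopic chain maps induce the same map on cohomology (its proof dualized the homotopy identity via transpose). Thus $f^* \circ g^* = (gf)^* = \id_{H^*(\cA)}$ and $g^* \circ f^* = (fg)^* = \id_{H^*(\cB)}$, so $f^*$ and $g^*$ are mutual inverses and both are isomorphisms.

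There is no real obstacle here; the entire content was packaged into Lemma \ref{samemaphom}, and all that remains is the bookkeeping of composition and identity. The only thing worth being explicit about, if the paper wants a self-contained write-up, is the one-line check that $(qp)_* = q_*p_*$, which follows immediately because the induced map on $H_j$ is defined by applying the chain-level map to a representative cycle.
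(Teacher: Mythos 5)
Your proposal is correct and follows exactly the same route as the paper's proof: invoke functoriality of the induced maps together with Lemma~\ref{samemaphom} to conclude that $f_*$ and $g_*$ (resp.\ $f^*$ and $g^*$) are two-sided inverses. The paper states this more tersely but the content is identical; your extra paragraph spelling out $(qp)_*=q_*p_*$ is harmless bookkeeping the paper leaves implicit.
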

\begin{proof}
If $f : \cA \to \cB$ and $g: \cB \to \cA$ give a homotopy equivalence,
$f_* \circ g_*$ is the identity on the homology of $\cB$ by \cref{samemaphom},
and $g_* \circ f_*$ is the identity on the homology of $\cA$;
that is, $(f_*)^{-1} = g_*$.
The same is true for $f^*$ and $g^*$.
\end{proof}

\subsection{Distance and Decoding}
We now consider the effect of these chain maps on the weight of a chain.  Given a chain $u$, write $|u|$ to denote the {\it weight} of the chain.  In all examples in this paper, the weight will denote the Hamming weight (number of nonzero coefficients of the chain in the given basis), but the next definition and lemma work for any choice of weight function.

We say that a chain map $f:\cA \to \cB$ is {\it Lipschitz} with Liptschitz constants $K_j = K_j(f)$ 
if $|f(u)| \leq K_j |u|$ for any $j$-chain $u$.
Let $d_j(\cA)$ denote the minimum weight achieved by a representative of nontrivial homology class of $H_j(\cA)$ 
and let $d^j(\cA)$ denote the minimum weight achieved by a representative of nontrivial cohomology class of $H^j(\cA$).

\begin{lemma}
Let
$
\xymatrix{
\cA \ar@/^/[r]^{f} & \cB \ar@/^/[l]^{g}
}
$
be a homotopy equivalence with $f,g^\transp$ Lipschitz.
Then
\[
    d_j(\cA)\geq K_j(f)^{-1} d_j(\cB), \qquad \text{and} \qquad
    d^j(\cA) \geq K_j(g^\transp)^{-1} d^j(\cB).
\]
\begin{proof}
If $u$ is a nontrivial representative of $H_j(\cA)$, then $f(u)$ is a nontrivial representative of $H_j(\cB)$ and so $d_j(\cB) \leq |f(u)|\leq K_j |u|$.
The second inequality is similar.
\end{proof}
\end{lemma}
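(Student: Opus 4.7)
The plan is to directly follow the sketch given in the excerpt, treating the two inequalities in parallel. The essential point is that a homotopy equivalence induces isomorphisms on both homology and cohomology (\Cref{homiso}), so the chain maps $f$ and $g^\transp$ send nontrivial (co)homology representatives to nontrivial (co)homology representatives. Combining this with the Lipschitz hypothesis converts a weight bound on the image into a weight bound on the preimage.

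For the homology inequality, I would first pick any nontrivial representative $u$ of a class in $H_j(\cA)$. Since $f$ is a chain map, $f(u)$ is a cycle in $\cB$; since $f$ induces an isomorphism $f_* \colon H_j(\cA) \to H_j(\cB)$ by \Cref{homiso}, $f(u)$ represents a nontrivial class of $H_j(\cB)$. Hence $|f(u)| \geq d_j(\cB)$, while the Lipschitz hypothesis gives $|f(u)| \leq K_j(f)\,|u|$, so $|u| \geq K_j(f)^{-1} d_j(\cB)$. Taking the infimum over nontrivial representatives $u$ yields $d_j(\cA) \geq K_j(f)^{-1} d_j(\cB)$.

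For the cohomology inequality, the argument is entirely analogous but uses $g$ in the reverse direction. A nontrivial cohomology representative $v$ of $H^j(\cA)$ is a cocycle in $\cA$, and $g^\transp(v)$ is then a cocycle in $\cB$; because $g$ is part of a homotopy equivalence, the induced map $g^* \colon H^j(\cA) \to H^j(\cB)$ is an isomorphism by \Cref{homiso}, so $g^\transp(v)$ represents a nontrivial cohomology class in $\cB$. Therefore $|g^\transp(v)| \geq d^j(\cB)$, while the Lipschitz hypothesis on $g^\transp$ gives $|g^\transp(v)| \leq K_j(g^\transp)\,|v|$, and taking the infimum yields $d^j(\cA) \geq K_j(g^\transp)^{-1} d^j(\cB)$.

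There is no serious obstacle here; the lemma is essentially bookkeeping once \Cref{homiso} is in hand. The only subtlety worth flagging is the asymmetry between homology and cohomology: the bound on $d_j(\cA)$ uses the Lipschitz constant of $f$ (which maps $\cA \to \cB$), whereas the bound on $d^j(\cA)$ uses the Lipschitz constant of $g^\transp$ (since cohomology is contravariant, it is the transpose of $g \colon \cB \to \cA$ that maps cocycles in $\cA$ to cocycles in $\cB$). This is why the hypothesis requires Lipschitz control on $f$ and on $g^\transp$ specifically, rather than on $f$ and $g$.
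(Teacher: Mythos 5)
Your proposal is correct and follows essentially the same route as the paper's proof: apply the chain map ($f$ for homology, $g^\transp$ for cohomology) to a nontrivial representative, use the homotopy-equivalence-induced isomorphism on (co)homology (the paper's \Cref{homiso}) to conclude the image is still nontrivial, then combine the lower bound $d_j(\cB)$ (resp. $d^j(\cB)$) on the image's weight with the Lipschitz upper bound. The paper's version is merely terser, summarizing the cohomology case as ``similar''; your explicit treatment of the contravariance (why $g^\transp$ rather than $f^\transp$ or $g$ is the relevant map) is accurate and matches the paper's conventions.
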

\noindent
We will use this lemma later to lower-bound the distance of the weight-reduced code that we construct.  Our bounds will almost certainly not be tight: all we will use is a certain Lipschitz constant in a map.  We will remark further on this later.

Given a decoder for a code defined by $\cB$,
we can decode errors on a code defined by $\cA$
if the chain complexes $\cA$ and $\cB$ are homotopically equivalent as we now explain.
As we always discuss quantum CSS codes
where $j$-cells are qubits, $(j+1)$-cells are $Z$-stabilizers, and $(j-1)$-cells are $X$-stabilizers,
the following discussion will be completely symmetric for homology and cohomology,
so we only explain homology decoding, i.e., correcting $Z$ errors from observed violations of $X$-stabilizers.
For the code defined by $\cA$, a $Z$ error pattern is some unknown $j$-chain $e_\cA$,
and the syndrome is $s_\cA = \partial e_\cA$.
Similarly, $e_\cB$ stands for $Z$ errors on the code defined by $\cB$ and $s_\cB = \partial e_\cB$ is the syndrome.

Then:
\begin{lemma}
Suppose $g \circ f$ is homotopic to the identity on $\cA$ by $h: \cA_{j-1} \to \cA_{j}$.
Assume there is some decoding algorithm $O$ for the code associated with $\cB$ such that, given a syndrome $s_\cB=\partial e_\cB$ as input,
it computes some $\tilde e_\cB$
such that $s_\cB=\partial \tilde e_\cB$, where $e_\cB$ and $\tilde e_\cB$ are
homologous whenever $|e_\cB|\leq W(O)$, for some bound $W(O)$.
Then, given some $s_\cA = \partial e_\cA$ as input, 
the following algorithm computes a chain $\tilde e_\cA$ such that
$s_\cA=\partial \tilde e_\cA$, where $e_\cA$ and
$\tilde e_\cA$ are homologous whenever $|e_\cA|\leq K_j(f)^{-1}\cdot W(O)$:

\begin{itemize}
\item[({\bf 1})] Call  $O$ with $f(s_\cA)$ as syndrome.
The result will be some error pattern $\tilde e_\cB$ such that
$f(s_\cA) = \partial \tilde e_\cB$.

\item[({\bf 2})] Return
\[
\tilde e_\cA = g(\tilde e_\cB) - h(s_\cA).
\]
\end{itemize}

\begin{proof}
We have
\begin{align*}
&&f(s_\cA)&= \partial \tilde e_\cB \\
&\implies& 
g(f(s_\cA)) & =  g(\partial \tilde e_\cB)\\
&\implies& 
g(f(s_\cA)) & =  \partial g( \tilde e_\cB) &\text{(by definition of chain map)} \\
&\implies&  s_\cA+h\partial s_\cA +\partial h s_\cA &= \partial g( \tilde e_\cB) &\text{(by definition of homotopy equivalence)} \\
&\implies& 
s_\cA&= \partial g( \tilde e_\cB)-\partial h s_\cA &\text{(since $\partial s_\cA=0$)}
\\
&\implies& s_\cA &= \partial \tilde e_\cA.
\end{align*}

Let us verify that $\tilde e_\cA$ is homologous to $e_\cA$, assuming that $|e_\cA|\leq K^{-1}_j(f) \cdot W(O)$.
We have
$s_\cA=\partial e_\cA$ so $f(s_\cA)=\partial f(e_\cA)$.
By assumption on the decoding algorithm for the code associated with~$\cB$, the chain
$\tilde e_\cB$ is homologous to $f(e_\cA)$, in which case
$g(\tilde e_\cB)$ is homologous to $g(f(e_\cA))$.
We have
$g(f(e_\cA))=e_\cA+h\partial e_\cA+\partial h e_\cA,$
so $g(f(e_\cA))$ is homologous to $e_\cA+h \partial e_\cA=e_\cA+h s_\cA$.
So, $g(\tilde e_\cB)$ is homologous to $e_\cA+h s_\cA$ implying that
$\tilde e_\cA$ is homologous to $e_\cA$.
\end{proof}
\end{lemma}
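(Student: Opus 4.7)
The natural approach will be to transport the decoding problem from $\cA$ to $\cB$ via $f$, apply the given decoder $O$, transport the answer back via $g$, and correct the resulting discrepancy using the chain homotopy $h$. This is exactly what the stated algorithm does, so my job reduces to verifying two things in order: first, that the output $\tilde e_\cA$ actually carries the syndrome $s_\cA$; second, that $\tilde e_\cA$ and the true error $e_\cA$ lie in the same homology class whenever $|e_\cA| \le K_j(f)^{-1} W(O)$.

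For the syndrome check, I will compute $\partial \tilde e_\cA = \partial g(\tilde e_\cB) - \partial h(s_\cA)$ term by term. The first term simplifies via the chain-map identity $\partial g = g \partial$ together with the guarantee $\partial \tilde e_\cB = f(s_\cA)$ from the inner decoder, producing $g(f(s_\cA))$. The second term is handled by applying the defining homotopy equation $g \circ f - \mathrm{id}_\cA = h \partial + \partial h$ to $s_\cA$: since $s_\cA = \partial e_\cA$ is itself a cycle, the $h \partial$ piece vanishes, leaving $\partial h(s_\cA) = g(f(s_\cA)) - s_\cA$. Subtracting yields exactly $s_\cA$, as desired.

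For the homology claim, I will first observe that the inner decoder receives a valid input: $f(s_\cA) = f(\partial e_\cA) = \partial f(e_\cA)$, so the chain $f(e_\cA)$ already has the same syndrome as $\tilde e_\cB$. The Lipschitz bound gives $|f(e_\cA)| \le K_j(f)|e_\cA| \le W(O)$, which places $O$ within its operating regime, so $\tilde e_\cB$ and $f(e_\cA)$ are homologous in $\cB$. Applying the chain map $g$ preserves homology (it sends boundaries to boundaries), so $g(\tilde e_\cB)$ is homologous to $g(f(e_\cA))$. Evaluating the homotopy identity on $e_\cA$ then produces $g(f(e_\cA)) = e_\cA + h(\partial e_\cA) + \partial h(e_\cA) = e_\cA + h(s_\cA) + \partial h(e_\cA)$, and subtracting $h(s_\cA)$ therefore gives a chain homologous to $e_\cA$.

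I do not expect a real obstacle; the proof is essentially diagram-chasing using only $\partial^2 = 0$, the chain-map property of $f$ and $g$, and the homotopy equation. The only mildly conceptual point to notice in advance is that the single correction term $-h(s_\cA)$ simultaneously repairs \emph{two} apparently different defects: the syndrome mismatch forced by $g \circ f \ne \mathrm{id}_\cA$, and the corresponding homology-class mismatch between $g(\tilde e_\cB)$ and $e_\cA$. This coincidence is not accidental — both defects are controlled by the same homotopy equation, applied respectively to the cycle $s_\cA$ and to the chain $e_\cA$ — and once that is recognized, the choice of the correction $-h(s_\cA)$ in the algorithm is essentially forced.
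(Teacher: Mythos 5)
Your proposal is correct and follows essentially the same diagram chase as the paper: verify the syndrome by combining the chain-map identity with the homotopy equation applied to the cycle $s_\cA$, then verify the homology class by pushing $f(e_\cA)$ through $O$, pulling back with $g$, and applying the homotopy equation to $e_\cA$. The only cosmetic difference is that you compute $\partial \tilde e_\cA$ term by term rather than chaining implications from $f(s_\cA)=\partial\tilde e_\cB$; you also helpfully make explicit the Lipschitz step $|f(e_\cA)| \le K_j(f)|e_\cA| \le W(O)$, which the paper leaves tacit.
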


\subsection{Cell Combining and Collapsing and Weight-Reducing the Classical Codes}

We now define a procedure of ``cell combining" and a related ``cell collapsing.''
Cell combining can be intuitively understood as combining two cells into a single one, 
such as combining a pair of edges into one edge.
This procedure establishes a homotopy equivalence of two chain complexes.
This will be useful in the next subsections as the original code 
(either the classical code or the fiber bundle code) 
can be derived from a weight-reduced code by the procedure.
Cell collapsing is a ``dual'' of this procedure, explained below.

\subsubsection{Cell combining}
An illustration of cell combining that may be useful is shown in \Cref{fig:cc}.  

\begin{figure}
\centering
\includegraphics[width=3in]{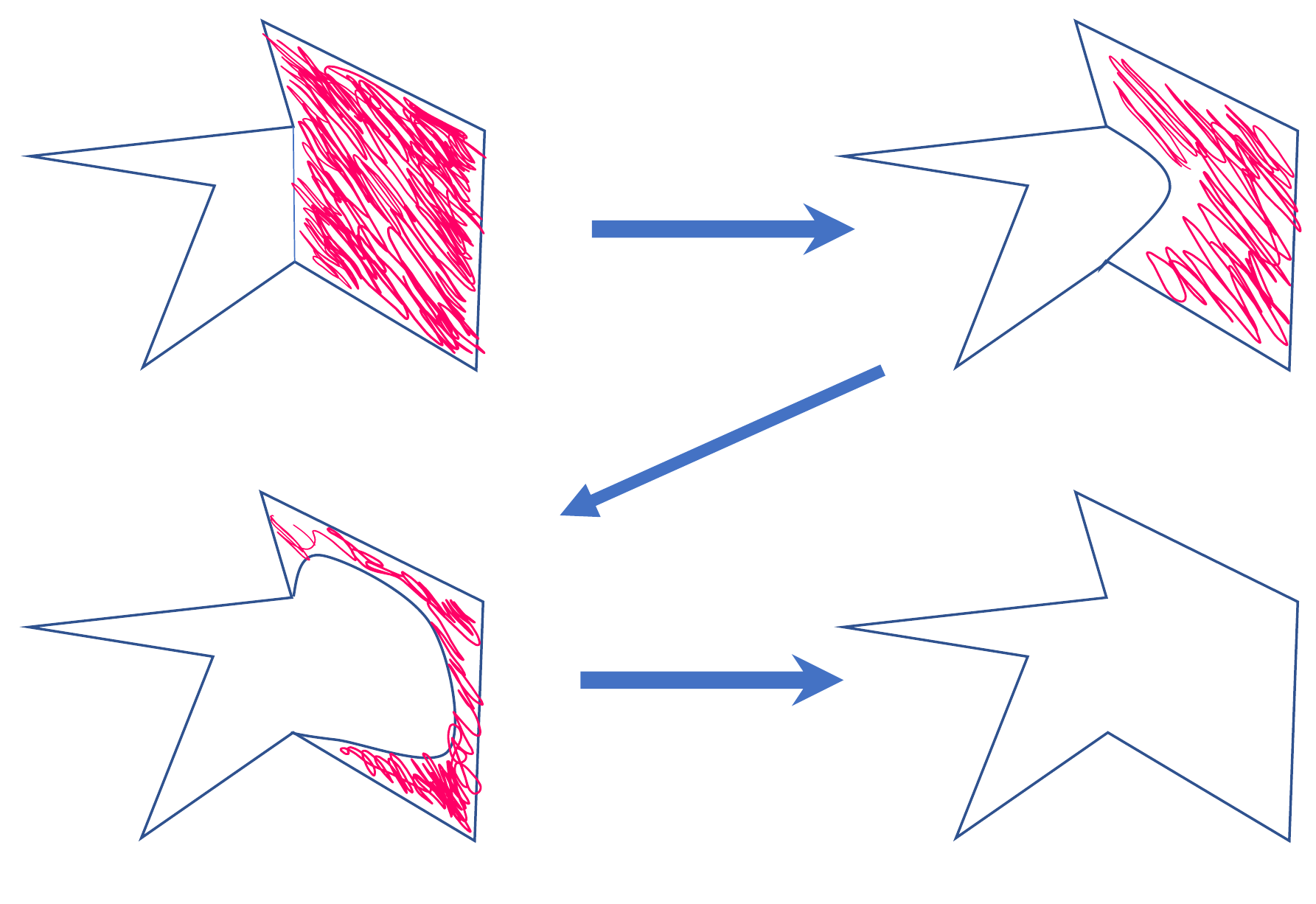}
\caption{
Illustration of cell combining.  The image shows a sequence of three steps.
We begin with two ``$1$-cells", shown at the top left.  
One cell (which corresponds to $e_1$ in the definition) is colored white, 
while the other cell (which corresponds to $e_2$) is colored red.
The straight lines represent ``$0$-cells" in the boundary of these ``$1$-cells."
The straight line between the two cells corresponds to $v$.
We show pictorially a sequence as $v$ is deformed, 
until eventually $e_2$ shrinks to nothing 
and the line which corresponds to $v$ has been mapped to the other four ``$0$-cells" in the boundary of $e_2$.
}
\label{fig:cc}
\end{figure}

\begin{definition}
\label{ccd}
(Cell combining)
Let $\cA$ be some chain complex with only $0$- and $1$-cells.
Let $v$ be some $0$-cell in $\cA$ with only
two $1$-cells in its coboundary, written $e_1,e_2$.
(With signed coefficients, we take a convention that 
$\partial e_1 = v + \cdots$ whereas $\partial e_2 = -v + \cdots$.)
We say that the following cell complex $\cB$ is given by combining cells $e_1,e_2$ in complex $\cA$.

The set of $1$-cells of $\cB$ is the set of $1$-cells of $\cA$ other than $e_1,e_2$ and with an extra cell $e$.
The set of $0$-cells of $\cB$ is given by the set of $0$-cells of $\cA$ other than $v$.  
The boundary operator $\partial_\cB$ is given by
by $\partial_\cB = f \circ \partial_\cA \circ g$, where the maps $f,g$ are as follows.

Let $f : \cA \to \cB$ be such that $e_1 \mapsto e$, $e_2 \mapsto 0$, 
and $f$ maps all other $1$-cells in $\cA$ to the corresponding $1$-cell in $\cB$.
Further, $f$ maps all $0$-cells in $\cA$ except~$v$ to the corresponding $0$-cell in $\cB$, 
while it maps $v$ to the chain which in $\cA$ would be written $v + \partial_\cA e_2$.
Let $g : \cB \to \cA$ be such that $e \mapsto e_1-e_2$, 
and $g$ maps all other $1$-cells to the corresponding $1$-cell in $\cA$
and all $0$-cells to the corresponding $0$-cell in $\cA$.
\end{definition}

\begin{lemma} \label{combinelemma}
The maps $f,g$ of \Cref{ccd} are chain maps, giving a homotopy equivalence of the complexes $\cA,\cB$ in the definition.
\end{lemma}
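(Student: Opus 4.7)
The plan is to directly verify the four required identities --- two chain map conditions and two homotopy identities --- by computing on the handful of basis cells that actually interact with the combining operation. The key preliminary observation is that $f(v) = v + \partial_\cA e_2$ really does produce a chain in $\cB_0$: since $v$'s coboundary in $\cA$ consists of exactly the two cells $e_1, e_2$, we may write $\partial_\cA e_2 = v + w_2$ with $w_2$ a chain of $0$-cells other than $v$, so the $v$ term cancels and $f(v) = w_2 \in \cB_0$. With this identity in hand, every remaining check reduces to a short calculation on basis elements.

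For the chain map conditions, I will use the definition $\partial_\cB = f \partial_\cA g$ to rewrite the requirement $\partial_\cB f = f \partial_\cA$ as $f \partial_\cA (gf - \mathrm{id}) = 0$, and the requirement $g \partial_\cB = \partial_\cA g$ as $(gf - \mathrm{id}) \partial_\cA g = 0$. A direct computation shows that $gf - \mathrm{id}$ sends $v \mapsto \partial_\cA e_2$, sends each of $e_1, e_2$ to $e_2$, and annihilates every other basis element. The first identity then follows because $f(\partial_\cA e_2) = f(v) + f(w_2) = w_2 + w_2 = 0$. The second identity follows because $\partial_\cA g(e) = \partial_\cA e_1 + \partial_\cA e_2 = w_1 + w_2$ is a $0$-chain missing $v$, and $\partial_\cA g$ on any other $1$-cell also produces a $0$-chain missing $v$ (since $\{e_1,e_2\}$ exhausts $v$'s coboundary), so $gf - \mathrm{id}$ kills both.

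For the homotopy equivalence itself, one side is trivial: $fg(e) = f(e_1 + e_2) = e$, and $f, g$ act as the identity on every other cell of $\cB$, so $fg = \mathrm{id}_\cB$ on the nose. The other side requires a genuine homotopy, and the computation of $gf - \mathrm{id}$ above tells me exactly what it must be: define $h_0 : \cA_0 \to \cA_1$ by $h_0(v) = e_2$ and $h_0(u) = 0$ for all other $0$-cells $u$, and take $h_1 = 0$ (forced since $\cA_2 = 0$). Verifying $gf - \mathrm{id} = \partial_\cA h_0 + h_0 \partial_\cA$ then reduces to noting $\partial_\cA h_0(v) = \partial_\cA e_2 = (gf - \mathrm{id})(v)$ on $v$, and $h_0(\partial_\cA e_i) = h_0(v) = e_2 = (gf - \mathrm{id})(e_i)$ on each $e_i$, with both sides trivially vanishing elsewhere. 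I do not anticipate any real obstacle beyond bookkeeping; the only place where the hypothesis on $v$'s coboundary is genuinely used is in showing that $f(v)$ lies in $\cB_0$, which underlies every subsequent cancellation.
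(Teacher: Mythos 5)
Your proof is correct and takes essentially the same route as the paper: both compute $gf - \mathrm{id}$ directly, identify the same homotopy $h(v) = e_2$, observe $fg = \mathrm{id}_\cB$ exactly, and reduce the chain-map conditions to facts about $gf - \mathrm{id}$. The only difference is organizational: you verify both chain-map identities by direct computation with $gf - \mathrm{id}$, whereas the paper first proves $g$ is a chain map by a localized check at $e$, then constructs the homotopy, and then leverages the homotopy identity $gf - \mathrm{id} = \partial h + h\partial$ together with $\partial^2 = 0$ to reduce the chain-map condition for $f$ to the single fact $f\partial_\cA e_2 = 0$. Both get to the same place with comparable effort. One small inaccuracy in your closing remark: the observation that $f(v)$ lands in $\cB_0$ only needs $v \in \partial e_2$, not the full hypothesis that $v$'s coboundary is exactly $\{e_1,e_2\}$; the full hypothesis is what is genuinely used when you argue that $\partial_\cA g$ applied to $1$-cells other than $e$ misses $v$ (and likewise in the homotopy check on $1$-cells outside $\{e_1,e_2\}$). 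You do invoke the hypothesis correctly at those points, so the proof itself is unaffected.
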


\begin{proof}
The composition $f \circ g: \cB \to \cB$ is the identity, so tautologically is homotopic to the identity.

Note that $g\circ f$ maps $v$ to $v+\partial e_2$, maps $e_1$ to $e_1+e_2$, maps $e_2$ to $0$, and acts as the identity on all other cells of $\cA$.
Thus, $g \circ \partial_\cB = g f \partial_\cA g$ is equal to $\partial_\cA g$ possibly except for $e$,
but at $e$ we also have $(g f \partial_\cA g)(e) = \partial (e_1 + e_2) = (\partial_\cA g)(e)$.
Hence, $g$ is a chain map.

Let $I$ denote the identity map on $\cA$.
Then $gf - I$ is equal to $0$ on all cells, 
except it maps $v$ to $\partial e_2$, maps $e_1$ to $e_2$, and maps $e_2$ to $-e_2$.
Define $h : \cA \to \cB$ to map $v$ to $e_2$ and to vanish on all other $0$-cells; 
$h$ vanishes on $1$-cells.
Then indeed $gf - I = h \partial_{\cA} + \partial_{\cA} h$ so $gf$ is homotopic to the identity.

To verify that $f$ is a chain map, we need to verify $\partial_\cB f = f \partial_\cA$; that is, we need
$f \partial_{\cA} gf=f  \partial_{\cA}$.  So it suffices to verify that $f \partial_{\cA} (\partial_{\cA} h + h \partial_{\cA})=0$.
Since $\partial^2=0$, we need $f \partial_{\cA} h \partial_{\cA}=0$.  However, $e_2$ is the only cell in the range of $h$ and
$f \partial_{\cA} e_2=0$.
\end{proof}

\subsubsection{Cell collapsing}

Cell collapsing is the dual: given some chain complex $\cA$ with a $1$-cell $e$ having two $0$-cells $v_1,v_2$ in its boundary, we can collapse $e$, mapping it to $0$ by a chain map and mapping $v_1,v_2$ to the same image.  Rather than give this dual map explicitly, simply note that given some chain complex with boundary operator $\partial$ obeying the conditions of \Cref{combinelemma}, if we interchange $1$-cells and $0$-cells, and use boundary operator $\partial^\transp$, then we have such a $1$-cell $e$.  Then, since \Cref{combinelemma} constructs maps $f,g$ which commute with $\partial$, the maps $g^\transp,f^\transp$ commute with $\partial^\transp$.

\subsection{Weight reducing}

We can now explain how to weight-reduce the classical code.
Consider an arbitrary classical code, $\ker(\cA_1 \xrightarrow{\partial} \cA_0)$,
with bits associated with $1$-cells and checks associated with $0$-cells of some chain complex.
Assume that each $1$-cell has at most $d_1$ $0$-cells in its boundary 
and each $0$-cell has at most $d_0$ $1$-cells in its coboundary.
The procedure to reduce $d_1$ can be understood intuitively as making several copies of each bit, 
adding checks that enforce that copies of the bit are the same, 
and having different copies participate in different checks; 
the procedure to reduce $d_0$ can be understood as the dual.

Define a weight-reduced classical code as follows.
For each bit $b$ of the original code,
the weight-reduced code will have $|\partial b|$ bits,
labeled by a pair $(b,c)$ where $c \in \partial b$.
Similarly, for each check $c$ of the original code,
and the weight-reduced code will have $|\partial^\transp c|$ checks,
labeled by a pair $(c,b)$ where $b \in \partial^\transp c$.
We add also $|\partial b|-1$ additional checks for each bit $b$,
labeled by a pair $[b,j]$ for $j\in \{1,\ldots,|\partial b|-1\}$.
Similarly we add $|\partial^\transp c| - 1$ additional bits for each check $c$, 
labeled by a pair $[c,k]$ for $k \in \{1,\ldots,|\partial^\transp c| - 1\}$.

We will call the bits $[c,k]$ and checks $[b,j]$ the ``auxiliary" bits and checks.
As mentioned, these auxiliary bits and checks are in addition to 
the bits $(b,c)$ with $c \in \partial b$ and the checks $(c',b')$ with $b' \in \partial^\transp c'$.
Thus if the original code has $N_1$ bits and $N_0$ checks, 
and has $E$ nonzero elements in the boundary map $\partial$ (i.e., edges in the bipartite Tanner graph),
the weight-reduced code has $O(N_0+E)$ checks and $O(N_1+E)$ bits.

We now specify the boundary operator $\partial_\text{wr}$ of the weight-reduced code.
We continue to use $\partial$ without subscript $\text{wr}$ to mean the boundary operator of the original code.
For each bit $b$ of the original code, 
let us order $\partial b$ once and for all to write $\partial b = c_1 + \cdots + c_{|\partial b|}$.
We let the auxiliary check $[b,j]$ have coboundary $(b,c_j)+(b,c_{j+1})$ for each $j$.
Similarly, 
ordering $\partial^\transp c$ to write $\partial^\transp c = b_1 + \cdots + b_{|\partial^\transp c|}$,
we let the auxiliary bits $[c,k]$ have boundary $(c,b_k)+(c,b_{k+1})$ for each $k$.
Finally, we put $(c,b) \in \partial_\text{wr} (b,c)$ where $c \in \partial b$.
All coefficients of $\partial_\text{wr}$ other than those specified in this paragraph vanish.
In this weight-reduced code, every bit, either $(b,c)$ or $[c,k]$, participates in 2 or 3 checks
and every check, either $(c,b)$ or $[b,j]$, depends on 2 or 3 bits.

We make the following claim:
\begin{lemma}
\label{che}
The weight-reduced code $\cA$ is homotopy equivalent to the original classical code $\cB$
by $
\xymatrix{
\cA \ar@/^/[r]^{f} & \cB \ar@/^/[l]^{g}
}
$.
The maps $f,g,f^\transp,g^\transp$ have Lipschitz constants which are all $O(\max(d_1,d_2))$.
\end{lemma}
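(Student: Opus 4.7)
The plan is to realize $\cB$ from $\cA$ via a sequence of cell-combining and cell-collapsing operations, each of which is a homotopy equivalence by \Cref{combinelemma} (cell collapsing being its transpose). For each original bit $b$ with boundary $\{c_1, \ldots, c_{|\partial b|}\}$, the weight-reduced complex $\cA$ contains the alternating ``path'' $(b, c_1),\, [b, 1],\, (b, c_2),\, [b, 2],\, \ldots,\, (b, c_{|\partial b|})$, and each auxiliary $0$-cell $[b, j]$ has coboundary exactly $\{(b, c_j), (b, c_{j+1})\}$, matching the cell-combining hypothesis. Iteratively combining along this path merges all copies $(b, c_i)$ into the single $1$-cell $b$ of $\cB$. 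Dually, for each check $c$, the auxiliary $1$-cells $[c, k]$ (each with boundary exactly $\{(c, b_k), (c, b_{k+1})\}$) can be collapsed to merge all $(c, b_k)$ into the single $0$-cell $c$ of $\cB$.

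Composing these operations yields chain maps $f \colon \cA \to \cB$ and $g \colon \cB \to \cA$ forming a homotopy equivalence. Tracking each cell through the sequence produces explicit formulas: $f$ sends $(b, c_1) \mapsto b$, $(b, c_i) \mapsto 0$ for $i \ge 2$, $(c, b_k) \mapsto c$, and $[c, k] \mapsto 0$, and the chain-map condition then forces $f([b, j]) = c_{j+1} + c_{j+2} + \cdots + c_{|\partial b|}$ by induction on $j$. Symmetrically, $g$ sends $b \mapsto \sum_i (b, c_i)$ together with correction chains supported on auxiliary $1$-cells $[c, k]$ (these arise because each cell-collapsing's $g$ is identity plus a $[c,k]$-term on $1$-cells whose boundary contains the ``$v_2$'' vertex being merged), and $g(c) = (c, b_1^{(c)})$ where $b_1^{(c)}$ is some fixed representative. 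The Lipschitz bounds then come from inspecting these formulas: the worst case for $f$ is $|f([b, j])| \le d_1$, and the bounds on $g, f^\transp, g^\transp$ follow by the symmetry of the construction under the bit/check swap.

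The main technical difficulty is controlling the accumulated weight of the correction terms in $g$ (and dually in $f^\transp$): naively, each $c \in \partial b$ contributes a whole ``path'' $[c, 1] + [c, 2] + \cdots + [c, \mathrm{pos}_b(c) - 1]$ to $g(b)$, giving a total weight of $O(d_0 d_1)$. To achieve the tight $O(\max(d_0, d_1))$ bound one must adjust $g$ by a coboundary (preserving its chain-homotopy class), exploit cancellations among the correction paths, or replace the linear chain of auxiliary cells in the weight-reducing construction by a balanced-tree structure in which each auxiliary cell links only nearby copies. The homotopy-equivalence identities themselves then follow by composing the individual homotopies guaranteed by \Cref{combinelemma}.
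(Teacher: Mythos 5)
Your plan mirrors the paper's proof: iterate \Cref{combinelemma} to combine away the auxiliary checks $[b,j]$ along the given linear paths, then iterate its transpose to collapse the auxiliary bits $[c,k]$, and track $f,g$ through the composition. Your explicit formulas for $f$ (including $f([b,j]) = c_{j+1}+\cdots+c_{|\partial b|}$) and for the correction terms in $g$ are right. You have also correctly put your finger on a genuine tension: the cell-collapsing step must send each bit $b$ to $\sum_i(b,c_i)$ plus, for every $c\in\partial b$, a ``transport path'' $[c,1]+\cdots+[c,\mathrm{pos}_c(b)-1]$ of auxiliary $1$-cells, and summed over $c\in\partial b$ that correction has weight $\sum_{c\in\partial b}(\mathrm{pos}_c(b)-1)$, which is $\Theta(d_0 d_1)$ in the worst case rather than $O(\max(d_0,d_1))$ as the lemma asserts; dually $K_0(f^\transp)=\Theta(d_0 d_1)$. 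The paper's own proof simply declares, without accounting for these accumulated paths, that ``after cell collapsing, the Lipschitz constant on all cells is $O(\max(d_0,d_1))$''; I believe your analysis is correct and the stated constant is slightly overclaimed.

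None of the repairs you float cleanly closes the gap. Adjusting $g$ by a chain homotopy $g \mapsto g + h\partial + \partial h$ replaces each transport path by its symmetric difference with a fixed per-check chain $h(c)$; since the paths for distinct checks $c$ are supported on disjoint sets of auxiliary cells $[c,\cdot]$, there is no cross-$c$ cancellation, and the per-check cost remains $\Omega(d_0)$ for some $b$ no matter which representative $g(c)$ is chosen. A balanced-tree wiring of the $[c,k]$'s reduces each per-check path to length $O(\log d_0)$ and hence gives $O(d_1\log d_0)$, an improvement but still not $O(\max(d_0,d_1))$ when $d_0\approx d_1$. Fortunately the discrepancy is harmless for the application: the base code has $d_0,d_1 = O(\Delta) = O(\log^2 n)$, so an $O(d_0 d_1) = O(\log^4 n)$ Lipschitz constant costs only an extra polylogarithmic factor and affects neither \Cref{mainth} nor \Cref{coro}. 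So the one real gap in your writeup is that you aim for a bound you (rightly) cannot reach; state and prove the achievable $O(d_0 d_1)$ instead, and the argument goes through for everything the lemma is used for.
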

\begin{proof}
We repeatedly apply cell combining to remove all the auxiliary checks 
and then cell collapsing to remove all the auxiliary bits.
This gives a homotopy equivalence.

To get Lipschitz constants, consider this in two steps:
first removing auxiliary checks then removing auxiliary bits.
After removing all auxiliary checks, 
the image (under the map $g$ from the original code to the weight-reduced code) of any bit $b$ 
is the sum of bits $(b,c)$ with $c \in \partial b$ and there are at most $d_1$ such bits.
The image under $g$ of any check $c$ is the corresponding check in the weight-reduced code.
So, the Lipschitz constant on $1$-cells is $O(d_1)$ and on $0$-cells is $O(1)$.
Then after cell collapsing, the Lipschitz constant on all cells is $O(\max(d_0,d_1))$.

The calculation of Lipschitz constants for $f$ is similar: 
after the first step, the image of any auxiliary check $[b,j]$ 
is contained in the boundary of $b$ and so has Lipschitz constant $O(d_1)$.
\end{proof}

\subsection{Weight-reducing the fiber bundle code}

We now explain how to weight-reduce the fiber bundle code.
The fiber bundle code was defined by a classical base code as well as a choice of twists.
We will define a weight-reduced fiber bundle code to be the code given by a bundle over the weight-reduced base code, with a certain choice of twists.

The choice of twists is the obvious one: originally, there was a twist $\varphi(b,c)$ between a bit $b$ and a check $c$ if $c \in \partial b$.
In the weight-reduced base code, 
there is a Tanner graph edge between a bit $(b,c)$ and a check $(c,b)$.
We put a possibly nonzero twist $\varphi_\text{wr}((b,c),(c,b)) = \varphi(b,c)$,
and zero twists elsewhere.
In particular, there is zero twist on any element of the boundary operator involving an auxiliary bit or check.

\begin{lemma}
The weight-reduced fiber bundle code is homotopy equivalent to the fiber bundle code defined previously.
The pair of homotopy equivalence maps and their duals have Lipschitz constants which are all $O(\max(d_1,d_2))$.
\begin{proof}
In the untwisted case, the result would follow from \Cref{che}.  This follows from a general result: suppose $\cA,\cB,\cC$ are some chain complexes and $f,g$ give a homotopy equivalence of $\cA,\cB$.  Then the pair of $f\otimes I$ and $g\otimes I$ is a homotopy equivalence between $\cA \otimes \cC$ and $\cB \otimes \cC$.

From this equivalence in the untwisted case, the equivalence in the twisted case follows using the gauge redundancy.  
Recall that gauge redundancy is a chain isomorphism between two bundles over the same base;
it is a special instance of homotopy equivalence of Lipschitz constant~$1$.
Concretely, 
for any base $1$-cell $b$, one can choose an arbitrary fiber automorphism $\varphi_b \in \ZZ_\mF$ and change
the twists as $\varphi(b,a) \to \varphi(b,a) + \varphi_b$ for all $a \in \partial b$.
Likewise, one can choose an arbitrary fiber automorphism $\varphi_a \in \ZZ_\mF$ for any base $0$-cell $a$,
and modify the twists as $\varphi(b,a) \to \varphi(b,a) + \varphi_a$ for all $b \in \cobd a$.
These modifications, referred to as gauge transformations, 
are simply redefinitions of the coordinate system in the fiber.
Clearly, all twists around any given base $1$- or $0$-cell can be made zero by some gauge transformation.

Each step of cell combining (or collapsing) in the equivalence for the classical code involves only a single bit (or check) of the original code.
For any such bit (or any check), we can use the gauge redundancy so that all twists involving that bit (or that check) are equal to zero.  Then, given homotopy equivalences $f,g$ for the combining (or collapsing step), we use equivalence $f \otimes I,g\otimes I$ for the fiber bundle code.  
Since $f,g,h$ act as the identity except for cells associated with the given bits or checks of the original code, 
it suffices to check homotopy equivalence only on those, for which the result follows from the untwisted case.
Thus, the homotopy equivalence is established as claimed.

This homotopy equivalence is a composition of $O(N)$ ``local'' homotopy equivalences,
but the Lipschitz constant is much smaller than the product of those local Lipschitz constants.
In fact, along the sequence of local homotopy equivalences,
the image of one cell is merged with other cells or split into several cells only once.
Hence, the Lipschitz constant of the composition is at most the maximum of those of local homotopy equivalences.
\end{proof}
\end{lemma}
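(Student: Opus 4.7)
My plan is to reduce the twisted bundle case to the untwisted bundle case, and then reduce the untwisted bundle case to the classical case already handled by \Cref{che}. I would organize the proof in three stages.

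First, I would dispose of the untwisted case. If $\calB_{\mathrm{wr}}$ denotes the weight-reduced base code and $\calB$ the original base, then \Cref{che} supplies a homotopy equivalence $f\colon \calB_{\mathrm{wr}}\to\calB$ and $g\colon\calB\to\calB_{\mathrm{wr}}$ with $f\circ g\sim \mathrm{id}$ via some homotopy $h$, and Lipschitz constants $O(\max(d_0,d_1))$. The general tensor-product fact is that if $\calC$ is any chain complex, then $f\otimes \mathrm{I}$ and $g\otimes \mathrm{I}$ form a homotopy equivalence $\calB_{\mathrm{wr}}\otimes\calC \simeq \calB\otimes\calC$, with homotopy $h\otimes\mathrm{I}$, and the Lipschitz constants do not worsen since the fiber part is untouched. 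Applying this with $\calC=\calF$ (the circle fiber) handles the untwisted bundle case.

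Second, I would handle the twisted case by exploiting gauge redundancy. The key observation is that each elementary step of the weight-reduction (cell combining or cell collapsing, as in \Cref{ccd}) is local: it only touches the neighborhood of a single bit $b$ or single check $c$ of the base. For each such step, before applying it, perform a gauge transformation that trivializes all twists around that particular cell — concretely, set $\varphi(b,a)\mapsto 0$ for every $a\in\partial b$ (or the analogous change around $c$), which is a chain isomorphism of Lipschitz constant $1$ by the discussion in \Cref{sec:twistbundle}. Once the twists in the affected neighborhood are trivial, the elementary step acts as $f\otimes \mathrm{I}$ (respectively $g\otimes \mathrm{I}$) on that local piece and as the identity elsewhere, so the untwisted argument from stage one applies. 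Composing all elementary steps interleaved with these gauge transformations yields a homotopy equivalence of the twisted bundles.

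Third, I would address the Lipschitz bound on the composition. Naively, composing $O(N)$ elementary chain maps could multiply Lipschitz constants exponentially, so the main obstacle is showing the constant stays at $O(\max(d_0,d_1))$. The point is that along the entire sequence of local cell combinings and collapsings, each individual cell of $\calB$ (or of $\calB_{\mathrm{wr}}$) is involved in a merging or splitting at most once — the auxiliary cells are introduced precisely to serialize the replacement so that no cell is re-processed. Thus in the image of any single cell under the composition $f$ (or $g$, or their transposes), the only amplification is the single local replacement, which contributes $O(\max(d_0,d_1))$. Gauge transformations contribute a factor of $1$. Hence the Lipschitz constant of the full composition is the maximum, not the product, of the local constants, giving the claimed $O(\max(d_0,d_1))$ bound, and an identical argument bounds the dual maps since cell collapsing is dual to cell combining.
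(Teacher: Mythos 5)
Your proposal takes essentially the same route as the paper's proof: reduce to the untwisted case via the tensor-with-identity lemma, handle twists by locally trivializing them with gauge transformations (Lipschitz constant~$1$) before each cell combining/collapsing step, and observe that each cell is merged or split at most once so the composition's Lipschitz constant is a maximum rather than a product. (Incidentally, you correctly write $O(\max(d_0,d_1))$ where the paper's statement has what appears to be a typo, $O(\max(d_1,d_2))$.)
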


Given these results, it follows that the distance of the fiber bundle code over the weight-reduced classical code is within a polylogarithmic factor of the distance of the fiber bundle code over the original classical code.  This distance bound is 
probably not optimal: it does not use the fact that, for example, a chain on the weight-reduced bundle code representing an element of homology must be closed, but only uses the fact that its image under the map is closed.  Similarly to how we did not optimize logarithms earlier, we do not worry about this here.

\section{Notation}
\label{notationsection}
Here we present a brief list of some notation used in the paper.  The notation is not in alphabetical order but rather is in a rough order of when the concepts are introduced.
\begin{itemize}
\item[$\cB$:]  the base of the bundle

\item[$\calF$:]  the fiber of the bundle

\item[$\cE$:] the fiber bundle.  Qubits of a quantum code are associated with $1$-cells of this bundle while checks of the code are associated with $0$- and $2$-cells.

\item[$\nB$:]  the number of $1$-cells in the base.

\item[$\mB$:]  the number of $0$-cells in the base.

\item[$\nF=\mF$:]  the number of $0$-cells in the fiber which is the same as the number of $1$-cells in the fiber.

\item[$B$:] a bipartite graph defining the base.  The \magenta{right} vertices represent bits (variables) of a code which correspond to $1$-cells of the base, and the \magenta{left} vertices represent checks of a code which correspond to $0$-ells of the base.

\item[$F$:] a cycle graph.

\item[{$[n]$:}] shorthand notation for the set of integers $\{1,2,\ldots,n\}$.

\item[$\ell$:] an integer chosen so that $\nF=\mF=\ell^2$.  All twists are integer multiples of $\ell$.

\item[$\XConst$:] a large universal constant related to bounding expansion in the base graph; see \Cref{prop:cohom-unique}.

\item[$U$:] the set of used twists: $U = [\mF/\ell] = [\ell]$.

\item[$n,m$:] shorthand notation used sometimes for $\nB,\mB$.

\item[$\Delta$:] the average check-degree of the base code.  Ultimately, $\Delta$ is chosen to be $\Theta(\log^2 n)$.

\item[$\kreg$:]  the number of distinct twists.  Ultimately, $k$ is chosen to be $\Theta(\log n)$.

\end{itemize}
\end{document}